\documentclass[a4paper,11pt]{amsart}

\usepackage{hyperref}       
\usepackage{url}            
\usepackage{amsthm}
\usepackage{amsfonts}
\usepackage{amsmath}
\usepackage{amssymb}
\usepackage{color}
\usepackage{multirow}
\usepackage{float}
\usepackage[noadjust]{cite}
\usepackage{mathtools}
\usepackage{blkarray}
\usepackage{booktabs}
\usepackage{hhline}
\usepackage{enumitem}
\usepackage{float}
\usepackage{caption}
\usepackage{mathrsfs}
\usepackage{graphicx}
\usepackage{comment}
\usepackage{cleveref}

\usepackage{comment}

\usepackage[margin=2.7cm]{geometry}

\theoremstyle{definition}

\newtheorem{theorem}{Theorem}[section]
\newtheorem{corollary}[theorem]{Corollary}
\newtheorem{lemma}[theorem]{Lemma}
\newtheorem{definition}[theorem]{Definition}
\newtheorem{proposition}[theorem]{Proposition}
\newtheorem{remark}[theorem]{Remark}
\newtheorem{notation}[theorem]{Notation}

\newtheorem{example}[theorem]{Example}

\newtheorem{conjecture}[theorem]{Conjecture}

\usepackage{array}
\newcolumntype{P}[1]{>{\centering\arraybackslash}p{#1}}
\usepackage{multirow, boldline,makecell}
\usepackage{enumitem}
\usepackage{MnSymbol}
\usepackage[symbol]{footmisc}

\def\C{\mathcal{C}}

\def\A{\mathcal{A}}
\def\B{\mathcal{B}}

\def\L{\mathcal{L}}

\def\H{\textup{H}}

\def\supp{\textup{supp}}

\def\Fq{\mathbb{F}_q}

\def\F{\mathbb{F}}

\def\<{\left<}
\def\>{\right>}
\def\diag{\textup{diag}}
\def\rk{\textup{rk}}
\def\trk{\textup{trk}}

\def\wt{\textup{wt}}

\def\dH{\textup{d}^\textup{H}}

\def\trkdef{\Delta^\textup{trk}}
\def\Tr{\textup{Tr}}
\def\FqnR{\Fq^{n\times R}}

\def\FqmR{\Fq^{m\times R}}
\def\wt{\textup{wt}^\textup{H}}
\def\drk{\textup{d}^{\rk}}
\def\Hdef{\Delta^\textup{H}}

\usepackage[dvipsnames,x11names,table]{xcolor}

\renewcommand{\L}{\mathscr{L}}

\usepackage{tikz}
\usepackage{pgfplots}

\newcolumntype{C}[1]{>{\centering\arraybackslash$}p{#1}<{$}}
\newcolumntype{L}[1]{>{\arraybackslash$}p{#1}<{$}}

\title{Constructions of Rank-Metric Codes of Small Tensor Rank}
\author{Matteo Bonini}
\address{Department of Mathematical Sciences, Aalborg University}
\email{mabo@math.aau.dk}

\author{Eimear Byrne}
\address{School of Mathematics and Statistics, University College Dublin}
\email{ebyrne@ucd.ie}

\author{Giuseppe Cotardo}
\address{Department of Mathematics, Virginia Tech}
\email{gcotardo@vt.edu}

\date{}
\subjclass[2020]{94B05,03D15,15A99,94B27}
\keywords{Rank-metric code, Matrix code, Tensor rank, Rank defect, AG code}

\begin{document}
	
	\maketitle
	
	\begin{abstract}
		Rank-metric codes are subspaces of matrices over finite fields endowed with the rank metric and admit a natural tensorial representation. The tensor rank provides a measure of the minimal size of a decomposition of a code into rank-one tensors. Kruskal showed that the tensor rank of a rank-metric code of dimension $k$ and minimum rank distance $d$ is at least $k + d - 1$, and codes meeting this bound with equality are called minimal tensor rank (MTR) codes. It is known from algebraic complexity theory that the existence of an MTR code implies the existence of a maximum distance separable (MDS) code. In this work, we establish new results relating the tensor rank of a rank-metric code to the parameters of associated linear codes in the Hamming metric and introduce the notion of tensor rank defect. We then develop new constructions of rank-metric codes with small tensor rank defect using algebraic geometry (AG) codes.
	\end{abstract}

	\section{Introduction}
	
	Rank-metric codes are finite-dimensional subspaces of the space of $m \times n$ matrices over a finite field $\Fq$, where the ambient space is endowed with the rank metric. The rank distance between two matrices is defined to be the rank of their difference. Well-known classes of optimal rank-metric codes, such as the Gabidulin maximum rank distance codes and their generalizations, have very efficient decoders. Rank-metric codes have been extensively studied in connection with applications such as code-based cryptography, network coding, and distributed storage (see \cite{bartz+} and the references therein). Moreover, rank-metric codes are closely related to a variety of combinatorial structures, including Ferrers diagrams, subspace designs, and $q$-polymatroids (see, for example, \cite{byravsiam,gorla2019rank,NERI2024105937,cotardo2023diagonals}).\newline
	Every matrix code can be viewed as the slice space of a $3$-tensor, which allows one to study rank-metric codes via tensorial representations. In analogy with generator and parity-check matrices for linear codes in $\Fq^n$, one can consider generator and parity-check tensors for rank-metric codes. This perspective was developed in \cite{byrne2019tensor}. The tensor representation of a rank-metric code was also used in \cite{franch} to describe a family of generalized low-rank parity check codes with applications to code-based cryptography. 
	Any two generator tensors of a rank-metric code $\mathcal{C}$ have the same tensor rank. This allows one to define the tensor rank of $\mathcal{C}$, denoted $\trk(\mathcal{C})$, as the tensor rank of any of its generator tensors. Although computing the tensor rank of a $3$-tensor is NP-complete \cite{HASTAD1990644}, The tensor rank of a matrix code gives a measure of its storage and encoding complexity and is hence a relevant parameter \cite{byrne2019tensor,byrne2023tensor}. There have been several papers focused on determining or estimating tensor rank for algebras specific small parameters (see, for example, \cite{bartoli2022non,LAVRAUW202299,byrne2024constructions}).
	A fundamental lower bound on the tensor rank of a rank-metric code was established by Kruskal~\cite{kruskal1977three}, who showed that if $\mathcal{C}$ has dimension $k$ and minimum rank distance $d$, then~$\trk(\mathcal{C}) \geq k + d - 1$. Codes attaining this bound are called minimal tensor rank (MTR) codes. The tensor rank defect of $\mathcal{C}$, defined to be $\trkdef(\mathcal{C}) = \trk(\mathcal{C}) - (k + d - 1)$, gives a measure of its deviation from the MTR property. Note that $\trk(\C)=0$ only if a maximum distance separable $[k+d-1,k,d]$ code exists, so this problem has links to the MDS conjecture \cite{segre1955curve}.
	
	One approach to studying tensor rank relies on a connection between rank-metric codes and classical linear codes in the Hamming metric, which was already observed in \cite{brockett1978optimal}. Specifically, any rank-metric code $\mathcal{C}$ of tensor rank at most $R$ admits a representation of the form
	\[
	\mathcal{C} = \{V \diag(c) W^{\textup{t}} : c \in C\},
	\]
	where $V$ and $W$ are fixed matrices and $C$ is a linear code of length $R$. This representation induces an epimorphism from $C$ onto 
	$\mathcal{C}$. Moreover, the minimum rank distance of $\mathcal{C}$ is bounded above by the minimum Hamming distance of $C$. Consequently, constructing rank-metric codes with small tensor rank is closely related to constructing linear codes of small length with prescribed dimension and distance. 
	
	The main problem in this approach lies in estimating the minimum rank distance of $\mathcal{C}$, or obtaining suitable upper bounds on it. A key part of this is to control the dimension of intersections of the form
	\[
	C_V^\perp \cap (c * C_W),
	\]
	where $C_V$ and $C_W$ are linear codes generated by the rowspace of the matrices $V$ and $W$, respectively and $c*C_W=\{c*c' = (c_1c_1',\dots,c_Rc_R') : c' \in C_W\}$. Problems of this type naturally lead to questions about intersections of linear codes, which have been studied to enable constructions of codes with particular properties, such as complementary dual codes \cite{mesnager2017complementary} or $\ell$-DLIP codes \cite{bhowmick2024linear,bhowmick2025}. Evaluation codes and algebraic geometry codes have been studied in this context.
	
	In this paper, we build on the framework introduced in \cite{byrne2019tensor} and develop new constructions of rank-metric codes with specified tensor rank defect by exploiting properties of AG codes.
	\newline
	The paper is organised as follows. In Section~\ref{sec:pre}, we provide background material on linear codes, algebraic geometry (AG) codes, rank-metric codes, and the tensorial representation of matrix codes. In Section~\ref{sec:genten}, we introduce the notion of tensor rank defect and develop the general framework relating rank-metric codes to linear codes via tensor representations. In particular, we establish results connecting the tensor rank defect to the Hamming parameters of associated linear codes.
	We give a construction of MTR codes based on Cauchy codes, which extends \cite[Theorem~5.7]{byrne2019tensor} to a strictly larger class of codes. 
	In Section~\ref{sec:intersection}, we present new constructions of rank-metric codes using AG codes. By exploiting intersection properties of AG codes and controlling suitable divisors, we obtain families of codes with small tensor rank defect.

	\section{Preliminaries}\label{sec:pre}
	
	Throughout the paper, let $n, m, R$ be positive integers satisfying $2 \leq n , m \leq R$. Let $q$ be a fixed prime power, and let $\Fq$ denote the finite field with $q$ elements. We denote by $\Fq^R$ the space of vectors of length $R$ over $\Fq$, and by $\Fq^{n \times m}$ the space of $n \times m$ matrices with entries in $\Fq$. We recall some basic definitions and well-known facts on \textit{Hamming metric codes} and \textit{rank-metric codes}. The \textbf{Hamming support} of a vector $v \in \Fq^R$ is defined to be $\supp(v) = \{ i \in \{1, \ldots, R\} : v_i \neq 0 \}$, and its \textbf{Hamming weight} is $\wt(v) = |\supp(v)|$. Moreover, for a matrix $M \in \Fq^{n \times m}$, we denote by $\rk(M)$ its rank.
	
	\begin{definition}
		A \textbf{linear code} is a subspace $C \leq \Fq^R$. The \textbf{minimum (Hamming) distance} of a non-zero linear code $C$ is defined to be
		\[
		\dH(C) = \min\{ \wt(v) : v \in C,\ v \neq 0 \}.
		\]
		For $C = \{0\}$, we define $\dH(C) = R + 1$. 
		If $C$ has dimension $k$, then we refer to is as an $\Fq$-$[R,k]$ code (or $\Fq$-$[R,k]$ code, in short). If $d=\dH(C)$ we refer to $C$ as an $\Fq$-$[R,k,d]$ code.
		The \textbf{support} of $C$ is defined to be the union of the supports of its codewords, that is, $\supp(C) = \{ i \in \{1, \ldots, R\} : \exists\, c \in C \text{ such that } c_i \neq 0 \}$. The \textbf{dual} code of $C$ is $C^\perp = \{ u \in \Fq^R : u \cdot v = 0 \text{ for all } v \in C \}$,
		where $\cdot$ denotes the standard inner product.
	\end{definition}
	
	It is well known that the dual code $C^\perp$ of an $[R,k]$ code $C$ is an $\Fq$-$[R, R - k]$ code. We recall the Singleton bound for linear codes.
	
	\begin{theorem}[Singleton bound]
		If $C$ be an $\Fq$-$[R,k,d]$ code,
		then $k \leq R - d + 1$.
	\end{theorem}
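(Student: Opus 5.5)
The plan is to find a nonzero codeword of weight at most $R-k+1$ by forcing $k-1$ chosen coordinates to vanish, using only a dimension count. Let $C \leq \Fq^R$ be an $\Fq$-$[R,k,d]$ code. If $k=0$ the inequality is trivial, so assume $k \geq 1$.

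Consider the coordinate projection
\[
\pi : C \to \Fq^{k-1}, \qquad v \mapsto (v_1,\ldots,v_{k-1}),
\]
onto the first $k-1$ coordinates. This map is $\Fq$-linear. Its domain has dimension $k$ and its target has dimension $k-1$. By rank--nullity, $\dim \ker \pi \geq k-(k-1) = 1$, so there is a nonzero $v \in C$ with $v_1=\cdots=v_{k-1}=0$.

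Such a $v$ satisfies $\supp(v) \subseteq \{k,\ldots,R\}$, so $\wt(v) \leq R-k+1$. By the definition of minimum distance, $d = \dH(C) \leq \wt(v) \leq R-k+1$, which rearranges to $k \leq R-d+1$.

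There is no real obstacle. The only care needed is the degenerate case: when $k=0$, $C=\{0\}$ and the convention $\dH(C)=R+1$ gives $0 \leq R-(R+1)+1 = 0$, so equality holds. The case $k=1$ is consistent with the general argument, since the projection is to the zero space and any nonzero codeword already has weight at most $R$.
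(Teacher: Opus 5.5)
Your proof is correct and complete, including the degenerate case $k=0$, which you handle via the paper's convention $\dH(\{0\})=R+1$. The paper does not prove this statement; it only recalls the Singleton bound as a classical fact, so there is no argument of the paper's to compare against. What you give is the standard projection (puncturing) argument. An equally short alternative uses a parity-check matrix of $C$: it has rank $R-k$, and any $d-1$ of its columns are linearly independent, which gives $d-1 \le R-k$.
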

	
	We say that $C$ is an \textbf{MDS} (\textbf{maximum distance separable}) code if it meets the Singleton bound with equality. A major open problem in coding theory is to answer the following conjecture.
	
	\begin{conjecture}[MDS Conjecture]\label{conj:MDSconj}
		Let $C$ be an $\Fq$-linear MDS code of dimension $k$. If~$k \leq q$, then the length $R$ of $C$ satisfies
		\[
		R \leq 
		\begin{cases}
			q + 2 & \text{if } q = 2^h \text{ and } k \in \{3, q - 1\},\\
			q + 1 & \text{otherwise}.
			
		\end{cases}
		\]
	\end{conjecture}

	The MDS Conjecture has been proven in many specific cases and parameter ranges. Table~\ref{tab:MDSConjectureValues} summarizes the known results for $k \geq 4$. For a more detailed and unified exposition of the results concerning the MDS Conjecture in these cases, we refer the reader to \cite{ball2020arcs}. Despite significant progress, the MDS Conjecture remains open in general for larger values of $k$ and~$q$.

	\begin{table}[h]
		\centering
		\renewcommand{\arraystretch}{1.5}
		\begin{tabular}{|c|c|c|}\hline
			$k$ & $q$ & Reference \\\hline
			
			4 
			& -- 
			& Segre (1955), Casse (1969) -- \cite{segre1955curve,casse1969solution} \\\hline
			
			5
			& -- 
			& Segre (1967), Casse (1969) -- \cite{casse1969solution,segre1967introduction} \\\hline
			
			6 
			& $2^h$
			& Casse--Glynn (1984) -- \cite{casse1982solution} \\\hline
			
			7 
			& $2^h,\ q\ge 16$
			& Kaneta--Maruta (1991) -- \cite{maruta1991uniqueness} \\\hline
			
			$\frac12\sqrt{q}+\frac{15}{4}$
			& $2^h$
			& Storme--Thas (1993) -- \cite{storme1993mds} \\\hline
			
			$\frac14\sqrt{pq} - \frac{29}{16}p + 4$
			& $q = p^{2h+1}$, $p$ odd
			& Voloch (1991) -- \cite{voloch1991complete} \\\hline
			
			$\sqrt{q} - \frac{\sqrt{q}}{p} + 2$
			& $q = p^{2h}$, $p$ odd
			& Ball--Lavrauw (2018) -- \cite{ball2018planar} \\\hline
			
			$\frac12\sqrt{q}$
			& $q = p^h,\ p\ge 5$ odd
			& Hirschfeld--Korchmáros (1996) -- \cite{hirschfeld1996embedding} \\\hline
			
			$\frac12\sqrt{q} + 2$
			& $q\ge 529$, $q$ odd, $q\neq 3^6, 5^5$
			& Hirschfeld--Korchmáros (1998) -- \cite{hirschfeld1998number} \\\hline
			
			$q$
			& $q=p$ prime
			& Ball (2012) -- \cite{ball2012sets} \\\hline
			
			$2p - 2$
			& $q=p^h,\ h>1$
			& Ball--De Beule (2012) -- \cite{ball2017subsets} \\\hline
			
		\end{tabular}\vspace{0.5em}
		\caption{Summary of all known parameter ranges $(k,q)$ for which the MDS Conjecture has been established.}
		\label{tab:MDSConjectureValues}
	\end{table}

	We recall the following well-known notion.
	
	\begin{definition}[Hamming defect]
		The \textbf{Hamming defect} of an $[R,k,d]$ code $C$ is defined to be
		$\Hdef(C) = R - k - d + 1.$
	\end{definition}
	
	In particular, a code is MDS if and only if its Hamming defect is equal to $0$. An $[R,k,d]$ code $C$ is said to be \textbf{almost MDS} (AMDS) if it has minimum distance $d = R - k$, that is, if~$\Hdef(C)=1$. The code $C$ is said to be \textbf{near MDS} (NMDS) if both $C$ and its dual code are AMDS.
	
	\subsection{Algebraic Geometry Codes}

	We now recall some basic definitions and results concerning algebraic geometry (AG) codes. A standard reference for this theory is \cite{stichtenoth2009algebraic}.
	
	Throughout the paper, we let $\mathcal{X}$ be a projective, nonsingular, geometrically irreducible algebraic curve of genus $g$ defined over the finite field $\F_q$, and let $\F_q(\mathcal{X})$ denote its function field, that is, the field of rational functions on $\mathcal{X}$. The $\F_q$-rational points of $\mathcal{X}$ correspond to the~$\F_q$-rational places of $\F_q(\mathcal{X})$, and we denote this set by $\mathcal{X}(\F_q)$. A \textbf{divisor} $D$ on $\mathcal{X}$ is a formal sum of the form
	\[
	D = \sum_{P \in \mathcal{X}} n_P P, \quad n_P \in \mathbb{Z},
	\]
	for which only finitely many coefficients $n_P$ are nonzero. The \textbf{degree} of $D$ is defined to be
	\[
	\deg(D) = \sum_{P \in \mathcal{X}} n_P.
	\]
	A divisor is said to be \textbf{effective}, denoted $D \ge 0$, if $n_P \ge 0$ for all $P \in \mathcal{X}$. For a nonzero function $f \in \F_q(\mathcal{X})$, the \textbf{principal divisor} associated with $f$ is defined to be
	\[
	(f) = \sum_{P \in \mathcal{X}} v_P(f)\, P,
	\]
	where $v_P(f)$ denotes the valuation of $f$ at $P$, that is, the order of vanishing (or the negative of the order of the pole) of $f$ at $P$. In particular, every principal divisor has degree zero. A divisor is said to be \textbf{$\F_q$-rational} if it is invariant under the action of the Frobenius automorphism. Equivalently, $D$ is $\F_q$-rational if it is fixed under this action.
	
	Given a divisor $D = \sum_{i=1}^R n_i P_i$ with $P_i \in \mathcal{X}(\F_q)$ and $n_i \in \mathbb{Z}$, the \textbf{support} of $D$ is defined to be
	\[
	\supp(D) = \{ P_i \in \mathcal{X}(\F_q) : n_i \neq 0 \}.
	\]
	
	Given two divisors $D = \sum_{P \in \mathcal{X}} n_P P$ and $D' = \sum_{P \in \mathcal{X}} m_P P$, their \textbf{sum} and \textbf{difference} are defined componentwise by
	\[
	D \pm D' = \sum_{P \in \mathcal{X}} (n_P \pm m_P)\, P.
	\]
	The \textbf{greatest common divisor} and \textbf{least common multiple} of the two divisors $D,D'$ are also defined componentwise by
	\begin{align*}
		\gcd(D, D') &= \sum_{P \in \mathcal{X}} \min(n_P, m_P)\, P, \\
		\operatorname{lcm}(D, D') &= \sum_{P \in \mathcal{X}} \max(n_P, m_P)\, P.
	\end{align*}
	
	These operations reflect natural notions of divisibility for divisors and are particularly useful in the study of divisors of functions and differentials.

	The \textbf{Riemann-Roch space} associated with a divisor $D$ is the $\F_q$-vector space
	\[
	\mathscr{L}(D) = \{ f \in \F_q(\mathcal{X}) \setminus \{0\} : (f) + D \ge 0 \} \cup \{0\}.
	\]
	This space consists of rational functions whose poles (if any) are bounded by $D$. 
	If $\deg(D)<0$ then $\ell(D)=0$. The dimension~$\ell(D) = \dim_{\F_q} \mathscr{L}(D)$ is finite and can be computed using the Riemann-Roch theorem as
	\[
	\ell(D) = \deg(D) - g + 1 + \ell(K - D),
	\]
	where $K$ is a canonical divisor on $\mathcal{X}$. 
	
	In the case where $\mathcal{X}$ is a smooth projective plane curve of order $d$, there is a classical and well-known description of a canonical divisor.
	Let $\ell_\infty$ be the line at infinity, and let
	\[
	D_\infty = \sum_{P \in C \cap \ell_\infty} I_P(C,\ell_\infty)\, P
	\]
	be the divisor on $C$ defined by the intersection multiplicities with $\ell_\infty$.
	Then the divisor $W=(d-3)D_\infty$ is a canonical divisor.
	
	An important consequence of the Riemann--Roch theorem is that
	\[
	\ell(D) = \deg(D) - g + 1
	\]
	whenever $\deg(D) > 2g - 2$. A divisor $D$ is said to be \textbf{special} if $\ell(K - D) > 0$, that is, if $K - D$ is linearly equivalent to an effective divisor. Equivalently, $D$ is special if
	\[
	\ell(D) > \deg(D) - g + 1.
	\]
	In contrast, $D$ is said to be \textbf{non-special} if $\ell(K - D) = 0$. An important result concerning special divisors is the \textbf{Clifford's Theorem}, which states that for any effective special divisor~$D$ on a on a projective, nonsingular curve of genus $g \geq 1$,
	\[
	\ell(D) \le \frac{\deg(D)}{2} + 1,
	\]
	with equality if and only if $D$ is canonical or linearly equivalent to a multiple of a degree-one divisor on a hyperelliptic curve.
	
	Let $D = \sum_{i=1}^R P_i$ be a divisor consisting of $R$ distinct $\F_q$-rational points $P_i \in \mathcal{X}(\F_q)$. Let $G$ be another $\F_q$-rational divisor on $\mathcal{X}$ such that $\supp(D) \cap \supp(G) = \varnothing$. Consider the \textbf{evaluation map}
	\[
	\begin{split}
		\mathrm{ev}_D:\quad \mathscr{L}(G) &\longrightarrow \F_q^R \\
		f &\longmapsto (f(P_1), \dots, f(P_R)).
	\end{split}
	\]
	This map is $\F_q$-linear, and it is injective whenever $R > \deg(G)$. The condition $\supp(D) \cap \supp(G) = \varnothing$ ensures that $f(P_i)$ is well-defined for all $f \in \mathscr{L}(G)$.
	
	The \textbf{functional AG code} $C_{\mathscr{L}}(D,G)$ is defined to be
	\[
	C_{\mathscr{L}}(D,G) = \mathrm{ev}_D(\mathscr{L}(G)) = \{ (f(P_1), \dots, f(P_R)) : f \in \mathscr{L}(G) \} \subseteq \F_q^R.
	\]
	This is an $\F_q$-linear code of length $R$, dimension $k_C = \ell(G) - \ell(G - D)$, and minimum distance $d_C \ge \bar{d} = R - \deg(G)$, where $\bar{d}$ is called the \textbf{designed minimum distance}. The \textbf{differential AG code} $C_{\Omega}(D,G)$ is defined using differential forms as
	\[
	C_{\Omega}(D,G) = \left\{ \left( \mathrm{res}_{P_1}(\omega), \dots, \mathrm{res}_{P_R}(\omega) \right) : \omega \in \Omega(G - D) \right\} \subseteq \F_q^R,
	\]
	where $\Omega(G - D)$ denotes the space of rational differentials $\omega$ on $\mathcal{X}$ such that $(\omega) \ge G - D$, and $\mathrm{res}_{P_j}(\omega)$ denotes the \textbf{residue} of $\omega$ at $P_j$. This is an $\F_q$-linear code with parameters $[R, R - \ell(G) + \ell(G - D), d']$, where
	\[
	d' \ge d^* = \deg(G) - 2g + 2,
	\]
	is called the \textbf{designed dual minimum distance}. It is known (see \cite{stichtenoth2009algebraic}) that
	\[
	C_{\Omega}(D,G) = C_{\mathscr{L}}(D,G)^{\perp} = C_{\mathscr{L}}(D, D - G + (\eta)),
	\]
	for a suitable differential $\eta$ with simple poles and residue $1$ at the points in $\supp(D)$.
	
	\begin{remark}
		Suppose $\deg(G) \le R$. Then the \textbf{Hamming defect} $\Hdef(C)$ of the AG code $C_{\mathscr{L}}(D,G)$ and its dual satisfies
		$
		\Hdef(C_{\mathscr{L}}(D,G)),\; \Hdef(C_{\mathscr{L}}(D,G)^\perp) \le g,
		$
		which shows that AG codes approach the Singleton bound with a defect bounded by the genus~$g$ of the underlying curve.
	\end{remark}

	Moreover, we have the following results concerning the interaction of algebraic geometry codes associated with different divisors. We omit the proofs, as they follow directly from the definitions of the corresponding divisors; the reader may refer, for instance, to \cite[Lemma 3.2]{bhowmick2024linear} for a detailed argument.
	
	\begin{proposition}
		\label{prop:basicsonGCDLCM}
		Let $C_L(D,G)$ and $C_L(D,H)$ be two algebraic geometry codes. Then:
		\begin{enumerate}
			\item $C_\L(D,\gcd(G,H)) \subseteq C_\L(D,G) \cap C_\L(D,H)$;
			\item $C_\L(D,G) + C_\L(D,H) \subseteq C_\L(D,\operatorname{lcm}(G,H))$.
		\end{enumerate}
	\end{proposition}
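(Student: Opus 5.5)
The argument works at the level of Riemann--Roch spaces and then passes to codes through the evaluation map $\mathrm{ev}_D$. Throughout, $\supp(D)$ is disjoint from $\supp(G)$ and from $\supp(H)$. Hence it is also disjoint from the supports of $\gcd(G,H)$ and $\operatorname{lcm}(G,H)$, because the support of each is contained in $\supp(G)\cup\supp(H)$. So every code in the statement is well defined, and the same map $\mathrm{ev}_D$ is used for all four divisors. The key elementary fact is monotonicity: if $A\le B$ coefficientwise, then $\L(A)\subseteq\L(B)$. Indeed, if $(f)+A\ge 0$, then $(f)+B=(f)+A+(B-A)\ge 0$, since $B-A\ge0$.

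For part (1), note that $\gcd(G,H)\le G$ and $\gcd(G,H)\le H$, because $\min(n_P,m_P)\le n_P$ and $\min(n_P,m_P)\le m_P$ for every $P$. Monotonicity then gives $\L(\gcd(G,H))\subseteq \L(G)\cap\L(H)$. Applying $\mathrm{ev}_D$ yields
\[
C_\L(D,\gcd(G,H))=\mathrm{ev}_D(\L(\gcd(G,H)))\subseteq \mathrm{ev}_D(\L(G))\cap\mathrm{ev}_D(\L(H)),
\]
using that the image of an intersection lies in the intersection of the images. This is exactly the first claim. In fact one even has $\L(\gcd(G,H))=\L(G)\cap\L(H)$ by taking pointwise minima of the valuation conditions, but only the inclusion is needed.

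For part (2), similarly $G\le\operatorname{lcm}(G,H)$ and $H\le\operatorname{lcm}(G,H)$. Monotonicity gives $\L(G)+\L(H)\subseteq\L(\operatorname{lcm}(G,H))$, since the latter is a vector space containing both summands. Because $\mathrm{ev}_D$ is linear, $\mathrm{ev}_D(\L(G))+\mathrm{ev}_D(\L(H))=\mathrm{ev}_D(\L(G)+\L(H))$, and this is contained in $C_\L(D,\operatorname{lcm}(G,H))$.

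There is no real obstacle. The only points needing care are the support disjointness, so that the evaluation is defined on the gcd and lcm spaces, and the observation that only inclusions, not equalities, hold after evaluation: $\mathrm{ev}_D$ need not be injective, and images of intersections can be strictly smaller than intersections of images.
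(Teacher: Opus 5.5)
Your proof is correct and is essentially the argument the paper has in mind. The paper omits the proof as following directly from the definitions, citing \cite[Lemma~3.2]{bhowmick2024linear}; your argument supplies exactly that: monotonicity $A\le B\Rightarrow \L(A)\subseteq\L(B)$ applied to $\gcd(G,H)\le G,H\le \operatorname{lcm}(G,H)$, then pushing the inclusions through the linear map $\mathrm{ev}_D$, with the support check for $\gcd$ and $\operatorname{lcm}$ handled correctly.
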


	\subsection{Elliptic Codes}
	
	Elliptic codes are algebraic geometry codes constructed from an elliptic curve $\mathcal{E}$, that is, a smooth projective curve of genus $g = 1$ defined over the finite field $\F_q$. An elliptic curve $\mathcal{E}$ over $\F_q$ is not only a projective algebraic curve of genus one with a distinguished $\F_q$-rational point serving as the group identity, but also an abelian variety. In particular, the set $\mathcal{E}(\F_q)$ of $\F_q$-rational points forms a finite abelian group, whose structure influences the construction and properties of the associated AG codes. For simplicity, we represent elliptic curves in Weierstrass form and take the group identity to be the point at infinity $P_\infty = (0:1:0)$.
	
	Elliptic $[R,k]$ codes exist for any parameters satisfying $R \leq N_q(1)$ and $1 \leq k \leq R - 1$, where~$N_q(1)$ denotes the maximum number of $\F_q$-rational points on a genus-one curve. By the Hasse--Weil bound, we have
	\[
	N_q(1) \le q + 1 + 2\sqrt{q}.
	\]
	
	The weight spectrum of an elliptic $[R,k]$ code $C$, expressed in terms of the coefficients $B_i$ of its weight enumerator, contains only one unknown value, namely $B_{k}$, corresponding to the number of codewords of minimum possible weight $k$. This allows for precise control of the structure of the code. The code $C$ is MDS if and only if $B_{k} = 0$. However, such cases are rare.
	
	\begin{theorem}
		Let $R > q + 1$ and let $C$ be an elliptic $\F_q$-$[R,k,d]$ code. If $q \geq 13$, then $C$ is not MDS, that is, $B_{k} > 0$, and hence $k + d < R + 1$.
	\end{theorem}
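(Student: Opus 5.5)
Because an elliptic $[R,k]$ code has weight enumerator determined up to the single unknown $B_k$, the code is MDS exactly when $B_k=0$. So the plan is to prove the MDS conjecture in the needed range: under the hypotheses, no $\F_q$-linear MDS code of length $R>q+1$ and dimension $k$ (with $2\le k\le R-2$) exists. Once that is shown, $C$ cannot be MDS. Then $d\le R-k$, so $B_k>0$ and $k+d<R+1$.

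First I would reduce to a range where the known results apply. Duality exchanges MDS codes of dimension $k$ with MDS codes of dimension $R-k$. The dual of an elliptic code is again elliptic, since $C_{\Omega}(D,G)=C_{\mathscr{L}}(D,D-G+(\eta))$ and $\deg(D-G+(\eta))=R-\deg G$ when $g=1$. Hence we may assume $k\le R/2$.

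The small cases are classical. For $k=2$ or $k=R-2$, an MDS code of length $R$ gives $R-1$ distinct points of $\mathbb{P}^1(\F_q)$ up to the standard normalisation. This forces $R\le q+1$, so any $R>q+1$ is excluded, in particular $R\ge q+2$. For $k=3$ (and dually $k=R-3$) we have the arc bound, which is at most $q+2$ in the plane. The case $R=q+2$ with $q$ even, i.e.\ hyperovals, must be treated separately. I would argue that an elliptic code of length $q+2$ and dimension $3$ is the image of a plane cubic embedding by $\mathscr{L}(3P_\infty)$. A cubic curve meets a conic in $6$ points, so its points cannot form a hyperoval for $q\ge 13$, because collinear triples exist.

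For $4\le k\le R/2$, I would invoke the entries of Table~\ref{tab:MDSConjectureValues} wherever they apply (Ball for $q$ prime, Segre/Casse for $k=4,5$). In general, though, I would use the elliptic structure directly. A codeword of weight $k$ in $C_{\mathscr{L}}(D,G)$ with $\deg G=k$ exists iff some $k$ points of $\supp(D)$ sum in the group law to the class of $G$, i.e.\ form a divisor equivalent to $G$.

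So it suffices to show the following: among $R\ge q+2$ rational points of $\mathcal{E}$, some $k$ distinct points have prescribed group sum $s$. This is a subset-sum problem in the abelian group $\mathcal{E}(\F_q)$, and it is the main obstacle. The group has order at most $q+1+2\sqrt q$, while the chosen set misses at most $2\sqrt q$ or so of its elements. I would attack it with an additive-combinatorics argument. Since $k\le R/2$, both $k$ and $R-k$ are at least about $q/2$, which is large. Using complements, subset sums of large subsets of a finite abelian group that omit only $O(\sqrt q)$ elements cover every element. Here I would use a Dias da Silva–Hamidoune/Erdős–Heilbronn-type bound on restricted sums, or an explicit switching argument that swaps one point to shift the sum. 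The requirement $q\ge 13$ should be exactly what makes the count of missing points small enough relative to $k$ for that covering to hold.
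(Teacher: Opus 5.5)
The paper does not prove this statement; it is quoted as a known fact about elliptic codes, so I am judging your proposal on its own. Your reduction is the right one. For $2\le k\le R-2$ (you are right to exclude $k\in\{1,R-1\}$, where trivial MDS codes exist at every length), $C_{\mathscr{L}}(D,G)$ with $\deg G=k$ is non-MDS iff some $k$ distinct points of $\supp(D)$ add up in $\mathcal{E}(\mathbb{F}_q)$ to the point $S$ with $G\sim(k-1)P_\infty+S$. Such a $k$-set gives a codeword of weight $R-k$, not $k$. Dualising, the case $k=2$, and the case $k=3$ with $q$ odd are all fine. For prime $q$, Ball's theorem already finishes the job, since $k\le R/2<q$. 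However, your opening plan of proving that \emph{no} $\mathbb{F}_q$-linear MDS code of length $R>q+1$ exists cannot work. It is open for non-prime $q$, and it is false for $q$ even, $k=3$, $R=q+2$ (hyperovals). So for non-prime $q$ everything rests on the elliptic subset-sum statement, and that is exactly what the proposal leaves unproved.

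\textbf{The hyperoval case.} The conic argument only shows that the $q+2$ points are not a \emph{regular} hyperoval. Non-regular hyperovals exist (e.g.\ the Lunelli--Sce hyperoval for $q=16$), so ``collinear triples exist'' is the claim itself, not a consequence. This is easy to repair by counting. Let $t=|\mathcal{E}(\mathbb{F}_q)|-R\le 2\sqrt q-1$ and fix $P\in\supp(D)$. As $Q$ runs over $\supp(D)\setminus\{P\}$, the point $T=S\ominus P\ominus Q$ lies outside $\supp(D)$ for at most $t$ values of $Q$, equals $P$ for at most one, and equals $Q$ for at most four. That leaves at least $R-t-6\ge q-2\sqrt q-3>0$ good choices when $q>9$.

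\textbf{The range $4\le k\le R-4$.} Here the covering lemma is only announced, never proved. The Dias da Silva--Hamidoune theorem concerns $\mathbb{Z}/p\mathbb{Z}$, whereas $\mathcal{E}(\mathbb{F}_q)$ is in general neither cyclic nor of prime order, so it does not apply as cited. Also, after dualising it is $R-k$, not $k$, that has size about $q/2$. The easy counts do not close the gap either:
\begin{itemize}
\item Fixing $k-2$ points arbitrarily and completing with a pair $\{P,s'\ominus P\}$ only works for $k<(R-t)/2$.
\item A naive construction from disjoint pairs $\{P,c\ominus P\}$ plus one correcting pair reaches every $k$ only when roughly $R\ge 2t+13$. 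The hypotheses guarantee this only for $q$ around $30$ and above.
\end{itemize}
So $q\ge 13$ does not fall out ``exactly'' as you expect. You would need a sharper count, or separate treatment of the non-prime cases $q\in\{16,25,27\}$ that Ball's theorem does not cover. As it stands the decisive step is missing, and the proposal is a strategy rather than a proof.
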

	
	Thus, MDS elliptic codes of length greater than $q + 1$ do not exist for any practical values of~$q$. The only exceptions occur for small $q$, where the group structure of the $\F_q$-rational points~$\mathcal{E}(\F_q)$ is highly restricted (that is, isomorphic to $\mathbb{Z}/2\mathbb{Z}$ or $(\mathbb{Z}/2\mathbb{Z})^2$). Indeed, almost all elliptic codes are not MDS but have Hamming defect equal to $1$. Elliptic codes that are not MDS are therefore \textbf{near MDS} (\textbf{NMDS}).
	
	Elliptic codes form a significant class of AG codes, lying between the trivial genus-zero case and more complex higher-genus constructions. Their formal self-duality, compact weight distribution, and frequent NMDS nature make them particularly attractive both from a theoretical and a practical perspective. However, their inability to attain the MDS bound for lengths greater than $q + 1$ imposes fundamental limitations, motivating the study of higher-genus codes for improved performance.
	
	\begin{definition}
		A (\textbf{rank-metric}) \textbf{code} is a subspace $\C \leq \Fq^{n \times m}$. The \textbf{minimum (rank) distance} of a non-zero code $\C$ is defined to be
		\[
		\drk(\C) = \min\{ \rk(M) : M \in \C,\ M \neq 0 \}.
		\]
		For $\C = \{0\}$, we define $\drk(\C) = n + 1$. The \textbf{dual code} of $\C$ is defined to be
		\[
		\C^\perp = \{ M \in \Fq^{n \times m} : \Tr(MN^{\textup{t}}) = 0 \text{ for all } N \in \C \},
		\]
		where $N^{\textup{t}}$ denotes the transpose of $N$ and $\Tr(A)$ denotes the trace of a square matrix $A$.
	\end{definition}
	
	\begin{notation}
		We refer to $\C$ as an $\Fq$-$[n \times m, k]$ code. When the minimum distance $d$ is specified, we refer to $\C$ as an $\Fq$-$[n \times m, k, d]$ code.
	\end{notation}
	
	It is well known that the dual code $\C^\perp$ is an $\Fq$-$[n \times m, nm - k]$ code. We recall the Singleton-type bound for rank-metric codes.
	
	\begin{theorem}[Rank-metric Singleton bound]
		\label{thm:ranksingl}
		Let $\C \leq \Fq^{n \times m}$ be an $\Fq$-$[n\times m,k,d]$ code. Then
		\[
		k \leq \max(m,n)(\min(m,n) - d + 1).
		\]
	\end{theorem}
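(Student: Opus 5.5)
Without loss of generality assume $n \le m$, so that $\max(m,n)=m$ and $\min(m,n)=n$; the other case follows by transposing every matrix in $\C$, since transposition is a rank-preserving $\Fq$-linear isomorphism $\Fq^{n\times m}\to\Fq^{m\times n}$. The approach is a pigeonhole (puncturing) argument, mirroring the classical proof of the Singleton bound, with rows playing the role of coordinates. We aim to show $k \le m(n-d+1)$.

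Consider the $\Fq$-linear map
\[
\pi:\C \longrightarrow \Fq^{(n-d+1)\times m},
\]
which sends a matrix $M\in\C$ to the submatrix formed by its first $n-d+1$ rows. If $d=1$ this is the identity embedding into $\Fq^{n\times m}$ and the bound $k\le nm$ is trivial, so assume $d\ge 2$. The key step is to show that $\pi$ is injective. Suppose $M\in\C$ satisfies $\pi(M)=0$. Then $M$ has at most $n-(n-d+1)=d-1$ nonzero rows, so its row space is spanned by at most $d-1$ vectors and $\rk(M)\le d-1<d$. By the definition of $\drk(\C)$, this forces $M=0$. Hence $\ker\pi=0$.

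Injectivity gives
\[
k=\dim\C\le\dim\Fq^{(n-d+1)\times m}=m(n-d+1),
\]
which is exactly $\max(m,n)(\min(m,n)-d+1)$. For the edge case $\C=\{0\}$ we have $d=n+1$, and the bound reads $0\le 0$.

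There is no real obstacle here. The only points needing care are puncturing along the smaller dimension $n$, so that the larger dimension $m$ appears as the multiplier, and handling the degenerate conventions for $d$.
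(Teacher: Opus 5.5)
Your argument is correct: for a nonzero code, keeping the first $n-d+1$ rows (with $n\le m$) is an injective linear map on $\C$, because a matrix with at most $d-1$ nonzero rows has rank at most $d-1$. The paper gives no proof to compare against, since it only recalls the rank-metric Singleton bound as a known fact in the preliminaries; what you wrote is the standard proof.

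There is one small caveat, about the degenerate convention rather than your reasoning. The paper sets $\drk(\{0\})=n+1$ whether or not $n\le m$. So for $\C=\{0\}$ with $n>m$, the right-hand side is $n(m-n)<0$, and the stated inequality actually fails.

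Your transposition step hides this. After transposing, the same convention would assign $m+1$ instead of $n+1$, so $d$ is not preserved for the zero code. The statement should be read for nonzero codes, or with the convention $\drk(\{0\})=\min(m,n)+1$. Under either reading your proof is complete.
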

	
	We say that $\C \leq \Fq^{n \times m}$ is an \textbf{MRD} (\textbf{maximum rank distance}) code if it meets the bound in Theorem~\ref{thm:ranksingl} with equality.

	\section{Tensor Rank Defect}\label{sec:genten}
	
	In this section, we introduce the notion of the \textit{tensor rank defect} of a rank-metric code and present some results on families of codes with tensor rank defect zero. We begin by recalling some preliminary definitions and results. A subspace of $\Fq^{n \times m}$ is said to be \textbf{perfect} if it is generated by rank-$1$ matrices. The \textbf{tensor rank} $\trk(\C)$ of $\C$ is defined to be the minimum dimension of a perfect subspace containing $\C$, that is,
	\begin{equation*}
		\trk(\C) = \min\{ \dim_{\F_q}(\A) : \A \leq \Fq^{n \times m} \text{ is perfect and } \C \leq \A \}.
	\end{equation*}
	This characterization follows from \cite[Proposition~14.45]{burgisser2013algebraic}; see also \cite[Proposition~3.4]{byrne2019tensor} and~\cite{byrne2023tensor}. In~\cite[Corollary~1]{kruskal1977three}, a bound on the tensor rank is given. We recall a coding-theoretic reformulation of this bound.
	
	\begin{theorem}[Kruskal's bound]
		Let $\C$ be an $\Fq$-$[n\times m,k,d]$ code.
		We have $\trk(\C) \ge k + d - 1.$
	\end{theorem}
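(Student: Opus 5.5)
We have to prove Kruskal's bound $\trk(\C)\ge k+d-1$. The plan is to use the characterization of tensor rank as the minimum dimension of a perfect subspace containing $\C$. Fix such a perfect subspace $\A$ with $\dim\A=R=\trk(\C)$, spanned by rank-one matrices $a_1b_1^{\textup{t}},\dots,a_Rb_R^{\textup{t}}$.

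Every element of $\A$ then has the form $V\diag(c)W^{\textup{t}}$ with $c\in\Fq^R$, where $V$ has columns $a_i$ and $W$ has columns $b_i$. The map
\[
\varphi:\Fq^R\to\A,\qquad c\mapsto V\diag(c)W^{\textup{t}},
\]
is linear and surjective. Since $\dim\A=R$, it is an isomorphism. Set $C=\varphi^{-1}(\C)$, a linear code of length $R$ and dimension $k$.

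The key estimate is
\[
\rk\bigl(V\diag(c)W^{\textup{t}}\bigr)\le \wt(c),
\]
because the matrix is the sum of the $\wt(c)$ rank-one terms $c_ia_ib_i^{\textup{t}}$ with $i\in\supp(c)$. Every nonzero $c\in C$ maps to a nonzero matrix of $\C$, which has rank at least $d$. Hence $\wt(c)\ge d$, so $\dH(C)\ge d$. Applying the Singleton bound to $C$ gives
\[
k\le R-\dH(C)+1\le R-d+1,
\]
that is, $\trk(\C)=R\ge k+d-1$.

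Injectivity of $\varphi$ is not actually needed. If the $R$ rank-one matrices were dependent, $\varphi$ would be a surjection with kernel $\ker\varphi$. In that case take $C=\varphi^{-1}(\C)$ and pass to a complement $C'$ of $\ker\varphi$ inside $C$. Then $C'$ has dimension $k$, and nonzero $c\in C'$ map to nonzero matrices, so $\wt(c)\ge d$ still holds.

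The step needing the most care is checking that $\varphi$ restricted to $C'$ has trivial kernel. Otherwise a nonzero codeword could map to $0$, and no rank lower bound would apply to it. The complement construction handles this, and the remaining steps are routine.
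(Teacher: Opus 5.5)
Your proof is correct. The paper does not prove this bound itself; it cites Kruskal. Your argument is exactly the coding-theoretic reformulation the paper relies on through Proposition~\ref{prop:siaga}: pull $\C$ back along $c\mapsto V\diag(c)W^{\textup{t}}$ to an $[R,k,\ge d]$ Hamming code using $\rk\bigl(V\diag(c)W^{\textup{t}}\bigr)\le\wt(c)$, then apply the Singleton bound. The only tacit assumption is $\C\neq\{0\}$, and the statement itself needs it too under the convention $\drk(\{0\})=n+1$.
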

	
	Following~\cite[Definition~4.12]{byrne2019tensor}, we say that $\C$ is a \textbf{Maximum Tensor Rank} (MTR) code if it meets the above bound with equality. We now introduce a new invariant for rank-metric codes.
	
	\begin{definition}[Tensor Rank Defect]
		We define the \textbf{Tensor Rank Defect} of $\C$ as
		\[
		\trkdef(\C) = \trk(\C) - (k + d - 1).
		\]
	\end{definition}
	
	In particular, MTR codes have tensor rank defect equal to $0$. In \cite{brockett1978optimal} (see also \cite[Chapter~18]{burgisser2013algebraic}), the authors establish a connection between linear block codes and tensor rank, which we briefly recall here. Let $V \in \FqnR$ and $W \in \FqmR$ be full-rank matrices, and define the linear map
	\begin{equation*}
		\phi_{V,W} : \Fq^R \longrightarrow \Fq^{n \times m}: x \longmapsto V \diag(x) W^{\textup{t}},
	\end{equation*}
	where $\diag(x)$ denotes the diagonal matrix whose $(i,i)$-entry is the $i$-th coordinate of $x$. Any matrix code can be realized as the image of a linear block code $C \leq \Fq^R$ under the map $\phi_{V,W}$ for a suitable choice of $V$ and $W$. Let $A_1, \ldots, A_R \in \Fq^{n \times m}$ be rank-$1$ matrices, and let $\A$ be the perfect subspace generated by $A_1, \ldots, A_R$. Define the map
	
	\begin{equation*}
		\psi_{\A}:\A\longrightarrow\Fq^R:\sum_{i=1}^R\mu_iA_i\longmapsto\sum_{i=1}^R\mu_ie_i,
	\end{equation*}
	where $\{e_1, \ldots, e_R\}$ is the standard basis of $\Fq^R$. By \cite[Lemma~4.13]{byrne2019tensor}, the map $\phi_{V,W}$ is invertible, and its inverse is $\psi_{\B}$, where $\B = \{ v_r \otimes w_r : r \in \{1, \ldots, R\} \}$, and $v_r$ and $w_r$ denote the $r$-th columns of $V$ and $W$, respectively. We recall the following result.

	\begin{proposition}[{\cite[Corollary~4.14]{byrne2019tensor}}]
		\label{prop:siaga}
		Suppose that $\trk(\C)=R$. Let $D_1, \ldots, D_k \in \Fq^{R \times R}$ be diagonal matrices such that
		\[
		\C = \langle V D_1 W^{\textup{t}}, \ldots, V D_k W^{\textup{t}} \rangle_{\F_q}
		\]
		for some full-rank matrices $V \in \Fq^{n \times R}$ and $W \in \Fq^{m \times R}$. The following hold:
		\begin{enumerate}
			\item \label{item1:siaga} $\rk(M) \leq \wt\big(\phi_{V,W}^{-1}(M)\big)$ for every $M \in \C$.
			\item $\phi_{V,W}^{-1}(\C)$ is an $\Fq$-$[R, k, \geq d]$ block code.
			\item If $\C$ is MTR, then the linear block code $\phi_{V,W}^{-1}(\C)$ is MDS.
		\end{enumerate}
	\end{proposition}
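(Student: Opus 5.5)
The statement to prove is Proposition~\ref{prop:siaga}. Since $\trk(\C)=R$ and $\C=\langle VD_1W^{\textup{t}},\dots,VD_kW^{\textup{t}}\rangle$, I will work throughout with the map $\phi_{V,W}$. I will use it as a bijection between $\Fq^R$ and the perfect space $\B$ spanned by $v_r\otimes w_r$, as recalled from \cite[Lemma~4.13]{byrne2019tensor}.

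The first step is to make the inverse meaningful. If $\phi_{V,W}$ were not injective, some nonzero $x$ would satisfy $\sum_r x_r v_r w_r^{\textup{t}}=0$. Then the span $\B$ of the $R$ rank-one matrices $v_rw_r^{\textup{t}}$ would have dimension less than $R$. Since $\B$ is still perfect and contains $\C$, this would give $\trk(\C)<R$, a contradiction. So $\phi_{V,W}$ is injective onto $\B$, and $\phi_{V,W}^{-1}$ is well defined on $\B\supseteq\C$.

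For part (1), take $M\in\C$ and write $x=\phi_{V,W}^{-1}(M)$. Then
\[
M=\sum_{r\in\supp(x)} x_r\, v_r w_r^{\textup{t}}
\]
is a sum of $\wt(x)$ rank-one matrices. Subadditivity of rank then gives $\rk(M)\le\wt(x)$.

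For part (2), the code $C=\phi_{V,W}^{-1}(\C)$ is the image of a $k$-dimensional space under a linear injection, so it is an $[R,k]$ code. For nonzero $x\in C$, the matrix $\phi_{V,W}(x)$ is nonzero, so part (1) gives
\[
\wt(x)\ge\rk(\phi_{V,W}(x))\ge d.
\]
Hence $\dH(C)\ge d$.

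For part (3), if $\C$ is MTR then $R=k+d-1$. The Singleton bound gives $\dH(C)\le R-k+1=d$. Combined with part (2), this forces $\dH(C)=d=R-k+1$, so $C$ is MDS.

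The only delicate point is the injectivity argument, which depends on the characterisation of $\trk$ as the minimum dimension of a perfect space containing $\C$. The remaining steps are routine.
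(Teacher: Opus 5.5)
Your proof is correct. The paper does not reprove this result: it imports it from \cite[Corollary~4.14]{byrne2019tensor} and takes the invertibility of $\phi_{V,W}$ onto $\B$ from \cite[Lemma~4.13]{byrne2019tensor}. Your route is the standard argument behind that citation: derive injectivity of $\phi_{V,W}$ from $\trk(\C)=R$ via the perfect-space characterisation, use rank subadditivity for (1), and apply the Singleton bound for (3). It also makes explicit why $\phi_{V,W}^{-1}$ is well defined on $\C$, which the paper's phrase ``$\phi_{V,W}$ is invertible'' leaves implicit.
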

	
	In the remainder, we denote by $C_V$ and $C_W$ the linear block codes generated by the rows of the matrices $V$ and $W$, respectively.
	
	\begin{notation}
		For any $v, w \in \Fq^R$, we denote by $*$ the \textbf{Schur product} (or componentwise product) of $v$ and $w$, that is, $v * w = (v_1 w_1, \ldots, v_R w_R)$. Moreover, for a code $C \leq \Fq^R$, we define $C * v = \{ c * v : c \in C \}$.
	\end{notation}
	
	\begin{proposition}\label{prop:dimmatC}
		We have that $\phi_{V,W}(C)$ is an $\Fq$-$[n \times m, k', d']$ code satisfying
		\[
		\trk(\phi_{V,W}(C)) \leq R, \quad k' \leq k_C, \quad\text{ and }\quad d' \leq d_C.
		\]
		Furthermore, we have $k' = k_C$ if and only if $C_W * c \not\subseteq C_V^\perp$ for all $c \in C \setminus \{0\}$.
	\end{proposition}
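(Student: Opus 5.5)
The argument only uses the linearity of $\phi_{V,W}$ and the trace pairing. Throughout, $C \leq \Fq^R$ is an $\Fq$-$[R,k_C,d_C]$ code and $V\in\FqnR$, $W\in\FqmR$ are full-rank matrices.

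\emph{Tensor rank.} For $x\in\Fq^R$ we can expand
\[
\phi_{V,W}(x)=V\diag(x)W^{\textup{t}}=\sum_{r=1}^R x_r\, v_r w_r^{\textup{t}},
\]
where $v_r,w_r$ are the $r$-th columns of $V,W$. Hence $\phi_{V,W}(C)$ lies in the span $\A$ of the at most $R$ matrices $v_r w_r^{\textup{t}}$. Each of these has rank at most one, and zero columns only contribute the zero matrix, so $\A$ is perfect with $\dim\A\le R$. The characterization of $\trk$ as a minimum over perfect subspaces then gives $\trk(\phi_{V,W}(C))\le R$.

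\emph{Dimension.} Since $\phi_{V,W}$ is linear, $k'=\dim\phi_{V,W}(C)=k_C-\dim(\ker\phi_{V,W}\cap C)\le k_C$. In particular $k'=k_C$ exactly when no nonzero $c\in C$ satisfies $V\diag(c)W^{\textup{t}}=0$.

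\emph{Kernel criterion.} This is the key step; the only care needed is the index bookkeeping. The matrix $V\diag(c)W^{\textup{t}}$ vanishes iff all its entries vanish, and its $(i,j)$ entry is
\[
\sum_r V_{ir}c_rW_{jr}=V^{(i)}\cdot\big(c*W^{(j)}\big),
\]
where $V^{(i)}$ and $W^{(j)}$ denote the $i$-th row of $V$ and the $j$-th row of $W$. So $V\diag(c)W^{\textup{t}}=0$ iff $c*W^{(j)}$ is orthogonal to every row of $V$ for every $j$. Since the $W^{(j)}$ span $C_W$, the $V^{(i)}$ span $C_V$, and $w\mapsto c*w$ is linear, this is equivalent to $C_W*c\subseteq C_V^\perp$. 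Consequently $k'=k_C$ iff $C_W*c\not\subseteq C_V^\perp$ for all $c\in C\setminus\{0\}$.

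\emph{Distance.} If $k'=0$, then $d'=n+1$ by convention, and the bound $d'\le d_C$ is not automatic; I would treat this as implicitly excluded or note it as a degenerate case. Assume $\phi_{V,W}(C)\ne 0$. For $M=\phi_{V,W}(x)$ the expansion above writes $M$ as a sum of $\wt(x)$ matrices of rank at most one, so $\rk(M)\le\wt(x)$.

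The delicate point is that a minimum-weight codeword of $C$ may lie in $\ker\phi_{V,W}$. Since the kernel intersected with $C$ is a proper subspace of $C$ when $k'>0$, I would pick $x\in C$ of weight $d_C$ with $\phi_{V,W}(x)\ne 0$ and conclude $d'\le\rk\phi_{V,W}(x)\le d_C$. If every minimum-weight word lies in the kernel, the honest conclusion is only $d'\le\min\{\wt(x): x\in C\setminus\ker\phi_{V,W}\}$. This yields $d'\le d_C$ cleanly when $k'=k_C$, by Proposition~\ref{prop:siaga}(\ref{item1:siaga}).

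For the general case I would proceed as follows. Choose $x\in C$ of weight $d_C$ and consider the code $\phi_{V,W}(C)$ restricted to the support of $x$. Subtracting suitable elements of the kernel shows a nonzero image of weight at most $d_C$ can be found provided $\dim C\cap\ker<k_C$. If that argument fails, I would state the distance bound under the hypothesis $k'=k_C$, as the paper's intended use suggests.
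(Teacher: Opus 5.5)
Your treatment of $\trk(\phi_{V,W}(C))\le R$, of $k'\le k_C$, and of the kernel criterion is correct and follows the paper's argument. The paper notes that $\phi_{V,W}$ is linear and that $c\in\ker\phi_{V,W}$ iff $C_W*c$ lies in the null space of $V$, i.e.\ in $C_V^\perp$. Your entrywise identity $(V\diag(c)W^{\textup{t}})_{ij}=V^{(i)}\cdot(c*W^{(j)})$ is exactly the justification for this. Your rank-one expansion also supplies the bound $\trk\le R$, which the paper leaves implicit. The same expansion gives $\rk(\phi_{V,W}(x))\le\wt(x)$ for every $x\in\Fq^R$, so you do not need Proposition~\ref{prop:siaga}; its hypothesis $\trk(\C)=R$ is not available here anyway.

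The gap is your ``general case'' plan for the distance. It cannot be completed, because $d'\le d_C$ is false without injectivity.

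Take $R=4$, $n=m=3$ and $V=W=\begin{pmatrix}1&1&0&0\\0&0&1&0\\0&0&0&1\end{pmatrix}$, so that $\phi_{V,W}(x)=\diag(x_1+x_2,x_3,x_4)$. Let $C=\langle(1,-1,0,0),(1,0,1,1)\rangle$. A codeword $(\lambda+\mu,-\lambda,\mu,\mu)$ has weight $2$ only if $\mu=0$, so $d_C=2$ and every minimum-weight word lies in $\ker\phi_{V,W}$. Yet $\phi_{V,W}(C)=\langle I_3\rangle$, so $k'=1$ and $d'=3>d_C$. Taking $C=\langle(1,-1,0,0)\rangle$ instead gives your degenerate case $d'=n+1>d_C$.

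So no ``subtract kernel elements'' argument can work: the proper subspace $C\cap\ker\phi_{V,W}$ can contain every word of weight $d_C$. What is true is your honest conclusion $d'\le\min\{\wt(x):x\in C\setminus\ker\phi_{V,W}\}$. Hence $d'\le d_C$ holds whenever $k'=k_C$, and more generally whenever some minimum-weight word of $C$ avoids the kernel.

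The paper's one-line proof has the same hidden restriction. It cites Proposition~\ref{prop:siaga}\eqref{item1:siaga}, which assumes $\trk(\C)=R$. That assumption forces the $R$ matrices $v_rw_r^{\textup{t}}$ to be linearly independent, so $\phi_{V,W}$ is injective. Thus the paper's argument also only covers the injective case. As far as I can see, that is the only form in which the bound is used later: Proposition~\ref{prop:defect}(2), Corollary~\ref{cor:defect} and the constructions all establish injectivity first. You should drop the speculative sketch and state the distance bound explicitly under the hypothesis $k'=k_C$.
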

	
	\begin{proof}
		The inequality $d' \leq d_C$ follows from Proposition~\ref{prop:siaga}\eqref{item1:siaga}. We have that $k' \leq k_C$ since~$\phi_{V,W}$ is an $\Fq$-linear homomorphism from $C$ into $\phi_{V,W}(C)$. Moreover, $k' = k_C$ if and only if $\ker \phi_{V,W} \cap C =\{c \in C: V_c W^\textup{t}=0\}$ is trivial,  
		which is equivalent to requiring that $C_W *c$ is not contained the null-space of $V$. 
	\end{proof}
	
	\begin{notation}
		Given a matrix $A\in \Fq^{r \times s}$ and $D=\diag(x) \in \Fq^{s\times s}$ for some $x \in \Fq^s$ we write $A_x$ to denote the $r \times s$ matrix $AD$.
	\end{notation}
	
	The following result establishes a relationship between the tensor rank defect and the Hamming defect.
	
	\begin{proposition}
		\label{prop:defect}
		If $\C=\phi_{V,W}(C)$ then the following hold.
		\begin{enumerate}
			\item\label{item1:defects} $\trkdef(\C) \leq \Hdef(C) + k_C + d_C - k - d$.
			
			\item\label{item2:defects} If $C_W * c \nsubseteq C_V^\perp$ for every  $c \in C\setminus\{0\}$, then
			\[
			\trkdef(\C) 
			\leq \Hdef(C) + d_C - d
			= \Hdef(C) + d_C - \min\{ \dim(C_W * c) : c \in C,\ c \neq 0 \}.
			\]
		\end{enumerate} 
	\end{proposition}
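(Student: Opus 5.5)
The plan is to use the trivial upper bound $\trk(\C)\le R$ from Proposition~\ref{prop:dimmatC} and then rewrite the resulting inequality in terms of the Hamming defect. Item~\eqref{item2:defects} then needs a dimension count for $k$ and a rank formula for $d$.

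\textbf{Item~\eqref{item1:defects}.} Since $\C=\phi_{V,W}(C)$ with $C\le\Fq^R$, Proposition~\ref{prop:dimmatC} gives $\trk(\C)\le R$. Hence
\[
\trkdef(\C)=\trk(\C)-k-d+1\le R-k-d+1 .
\]
Adding and subtracting $k_C+d_C$ gives
\[
R-k-d+1=(R-k_C-d_C+1)+k_C+d_C-k-d=\Hdef(C)+k_C+d_C-k-d,
\]
which is item~\eqref{item1:defects}. This step is purely formal.

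\textbf{Item~\eqref{item2:defects}: the dimension.} Under the hypothesis $C_W*c\nsubseteq C_V^\perp$ for all nonzero $c\in C$, Proposition~\ref{prop:dimmatC} gives $k=k_C$. Substituting into item~\eqref{item1:defects} yields $\trkdef(\C)\le\Hdef(C)+d_C-d$.

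\textbf{Item~\eqref{item2:defects}: the rank formula.} Here I expect the real content, and the main obstacle, to lie. Every nonzero element of $\C$ has the form $M=V\,\diag(c)W^{\textup{t}}$ with $c\in C\setminus\{0\}$, since $\phi_{V,W}$ is injective on $C$ by the previous step. Write $M=V\cdot(\diag(c)W^{\textup{t}})$. The column space of $\diag(c)W^{\textup{t}}$ in $\Fq^R$ is spanned by the vectors $c*w$, where $w$ ranges over the rows of $W$; that is, it equals $C_W*c$. The kernel of $V$ is $C_V^\perp$. Rank--nullity applied to $V$ restricted to this column space gives
\[
\rk(M)=\dim(C_W*c)-\dim\big((C_W*c)\cap C_V^\perp\big).
\]
Minimising over $c\in C\setminus\{0\}$ computes $d$.

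To reach the displayed expression $d=\min\{\dim(C_W*c)\}$, I need the intersections $(C_W*c)\cap C_V^\perp$ to be trivial. This is where I expect difficulty: the stated hypothesis only forbids containment, not a nonzero intersection. My approach would be to prove the equality under the stronger condition $(C_W*c)\cap C_V^\perp=\{0\}$ for every nonzero $c\in C$, for instance when $V$ is injective on each $C_W*c$. Under the weaker hypothesis as stated I would fall back on the exact formula above, which in general only gives $d\le\min_c\dim(C_W*c)$. In either case, the inequality $\trkdef(\C)\le\Hdef(C)+d_C-d$ already follows from the dimension step.
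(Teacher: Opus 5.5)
Your argument is the same as the paper's. Item~(1) is just $\trk(\C)\le R$ rewritten. The inequality in item~(2) comes from $k=k_C$, via injectivity of $\phi_{V,W}$ on $C$. Your rank--nullity computation $\rk\big(V\diag(c)W^{\textup{t}}\big)=\dim(C_W*c)-\dim\big((C_W*c)\cap C_V^\perp\big)$ is exactly the identity the paper imports from \cite[Lemma~5.2]{byrne2019tensor}.

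Your hesitation about the final equality is justified. The paper's proof only records $d\le\rk(V(W_c)^{\textup{t}})=\dim(C_W*c)-\dim\big(C_V^\perp\cap(C_W*c)\big)$. This gives $d\le\min_c\dim(C_W*c)$, which is the wrong direction for the displayed identity. Under the stated hypothesis the identity can fail, and so can the bound $\trkdef(\C)\le\Hdef(C)+d_C-\min_c\dim(C_W*c)$.

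For example, take $R=3$, $n=m=2$ and
\[
V=\begin{pmatrix}1&0&1\\0&1&0\end{pmatrix},\qquad
W=\begin{pmatrix}1&0&1\\0&1&1\end{pmatrix},\qquad
C=\langle (1,1,0),\,(-1,0,1)\rangle .
\]
\begin{itemize}
\item We have $\phi_{V,W}(1,1,0)=I_2$ and $\phi_{V,W}(-1,0,1)=N$, where $N$ has a single $1$ in position $(1,2)$. So $\C=\{aI_2+bN\}$ with $k=2$ and $d=1$.
\item The only rank-one elements of $\C$ are multiples of $N$, so $\C$ is not perfect. Hence $\trk(\C)=3$ and $\trkdef(\C)=1$.
\item The code $C$ is a $[3,2,2]$ MDS code, so $\Hdef(C)=0$ and $d_C=2$. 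Also $C_V^\perp=\langle(1,0,-1)\rangle$.
\item For $c=(a-b,a,b)\neq0$, the space $C_W*c=\langle(a-b,0,b),(0,a,b)\rangle$ has dimension $2$. In particular it is not contained in $C_V^\perp$, so the hypothesis of item~(2) holds.
\end{itemize}
Then $\Hdef(C)+d_C-d=1=\trkdef(\C)$, while $\Hdef(C)+d_C-\min_c\dim(C_W*c)=0<\trkdef(\C)$. The discrepancy is exactly the one-dimensional intersection $(C_W*c)\cap C_V^\perp$ at $c=(-1,0,1)$ that you flagged.

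So the true statement is the bound with $d$, equivalently with the minimum over $c$ of $\dim(C_W*c)-\dim((C_W*c)\cap C_V^\perp)$. The form with $\min_c\dim(C_W*c)$ needs your stronger hypothesis $(C_W*c)\cap C_V^\perp=\{0\}$. Your proposal proves exactly the correct part of the statement.
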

	\begin{proof}
		Part~\eqref{item1:defects} follows from the definitions of $\trkdef(\C)$ and $\Hdef(C)$. If $C_W * c \nsubseteq C_V^\perp$ for every $c \in C \setminus \{0\}$, then $V (W_c)^{\textup{t}} \neq 0$ for every $c \in C \setminus \{0\}$. Therefore, $\phi_{V,W}$ is injective on $C$, and hence an isomorphism from $C$ onto $\phi_{V,W}(C) = \C$. In particular, $k_C = \dim(\C)$. Part~\eqref{item2:defects} follows from Part~\eqref{item1:defects} and the fact that
		\[
		d \leq \rk\big(V (W_c)^{\textup{t}}\big) = \dim(C_W * c) - \dim\big(C_V^\perp \cap (C_W * c)\big) = \dim(C_V * c) - \dim\big(C_W^\perp \cap (C_V * c)\big),
		\]
		for all $c \in C \setminus \{0\}$, by \cite[Lemma~5.2]{byrne2019tensor}.
	\end{proof}

	\begin{remark}
		We note the following equivalences.
		\[
		C_W * c \subseteq C_V^\perp 
		\;\Longleftrightarrow\;
		C_W \subseteq (C_V * c)^\perp
		\;\Longleftrightarrow\;
		C_V \subseteq (C_W * c)^\perp
		\;\Longleftrightarrow\;
		C_V * c \subseteq C_W^\perp.
		\]
	\end{remark}
	
	The following examples illustrate Proposition~\ref{prop:defect}.

	\begin{example}
		Consider the following matrices with entries in $\F_3$:
		\begin{equation*}
			V = \begin{pmatrix}
				1 & 0 & 0 & 2 & 0 & 2 \\
				0 & 1 & 0 & 1 & 0 & 1 \\
				0 & 0 & 0 & 0 & 1 & 2
			\end{pmatrix}, \qquad
			G = \begin{pmatrix}
				1 & 0 & 2 & 2 & 0 & 2 \\
				0 & 1 & 0 & 0 & 0 & 2
			\end{pmatrix}, \qquad
			W = \begin{pmatrix}
				1 & 0 & 2 & 0 & 2 & 2 \\
				0 & 1 & 1 & 0 & 2 & 2 \\
				0 & 0 & 0 & 1 & 1 & 0
			\end{pmatrix}.
		\end{equation*}
		Let $C_V$, $C$, and $C_W$ be the linear block codes generated by the matrices $V$, $G$, and $W$, respectively. In particular, $C_V$ is an $\F_3$-$[6,3,2]$ linear block code, $C$ is an $\F_3$-$[6,2,2]$ linear block code, and $C_W$ is an $\F_3$-$[6,3,2]$ linear block code. Let $\C = \phi_{V,W}(C)$. Then $\C$ has dimension $2$ and is generated by 
		\begin{equation*}
			\begin{pmatrix}
				0&1&0\\0&0&0\\2&1&0
			\end{pmatrix} \qquad\textup{ and }\qquad
			\begin{pmatrix}
				0&0&1\\1&2&2\\1&2&1
			\end{pmatrix}.
		\end{equation*}
		One can check that, for every nonzero $c \in C$, we have $C_W * c \nsubseteq C_V^\perp$. Therefore, by Proposition~\ref{prop:defect}, we get $\Delta^\trk\leq \Delta^\H+d_C-d=3$.
	\end{example}
	
	\begin{example}
		Consider the following matrices with entries in $\F_5$.
		\begin{equation*}
			V=
			\begin{pmatrix}
				1&0&0&4&3\\0&1&0&2&3\\0&0&1&1&1
			\end{pmatrix},\qquad
			G=
			\begin{pmatrix}
				1&0&0&1&4\\0&1&0&4&4\\0&0&1&1&2
			\end{pmatrix}\qquad
			W=
			\begin{pmatrix}
				1&0&0&3&2\\0&1&0&1&2\\0&0&1&3&4
			\end{pmatrix}
		\end{equation*}
		
		Let $C_V$, $C$, and $C_W$ be the linear block codes generated by the matrices $V$, $G$, and $W$, respectively. One can check that they are all $\F_5$-linear $[5,3,3]$ codes and hence are MDS. Let~$\C = \phi_{V,W}(C)$. Then $\C$ is an $\F_5$-linear $[3 \times 3, 3, 3]$ code generated by
		\begin{equation*}
			\begin{pmatrix}
				1&0&0\\0&3&0\\0&0&2
			\end{pmatrix},\qquad
			\begin{pmatrix}
				0&1&0\\0&0&3\\2&3&0
			\end{pmatrix},\qquad
			\begin{pmatrix}
				0&0&1\\3&2&2\\0&2&4
			\end{pmatrix}.
		\end{equation*}
		It can be checked that for every nonzero $c \in C$, we have $C_W * c \nsubseteq C_V^\perp$. Therefore, by Proposition~\ref{prop:defect}, we get $\trkdef(\C) \leq \Hdef(C) + d_C - d = 0 + 3 - 3 = 0$. Hence, $\C$ is MTR. A basis of rank-$1$ matrices for the smallest perfect subspace containing $\C$ is given by $\B = \{ v_r \otimes w_r : r \in \{1, \ldots, 5\} \}$, namely, 
		\begin{equation*}
			\left\{\begin{pmatrix}
				1&0&0\\0&0&0\\0&0&0
			\end{pmatrix},
			\begin{pmatrix}
				0&0&0\\0&1&0\\0&0&0
			\end{pmatrix},
			\begin{pmatrix}
				0&0&0\\0&0&0\\0&0&1
			\end{pmatrix},
			\begin{pmatrix}
				2&4&2\\1&2&1\\3&1&3
			\end{pmatrix},
			\begin{pmatrix}
				1&1&2\\1&1&2\\2&2&4
			\end{pmatrix}\right\}.
		\end{equation*}
	\end{example}

	\begin{proposition}\label{prop:MTRchar}
		Let $\C = \phi_{V,W}(C)$ be an $\Fq$-$[n \times m, k, d]$ code of tensor rank $R$, and suppose that $\trkdef(\C) = 0$.  
		For every $c \in C$ of Hamming weight $d$, the following hold:
		\begin{enumerate}
			\item $\dim(C_W * c) = \dim(C_V * c) = d$,
			\item $(C_W * c) \cap C_V^\perp = (C_V * c) \cap C_W^\perp = \{0\}$,
			\item $\dim\big(C_V \cap (C_W * c)^\perp\big) = n - d$,
			\item $\dim\big(C_W \cap (C_V * c)^\perp\big) = m - d$.
		\end{enumerate}
	\end{proposition}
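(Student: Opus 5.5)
Since $\trk(\C)=R$ and $\trkdef(\C)=0$, we have $R=k+d-1$. By Proposition~\ref{prop:siaga}, $C=\phi_{V,W}^{-1}(\C)$ is an $[R,k,\ge d]$ code, and by the Singleton bound $d_C\le R-k+1=d$. Hence $d_C=d$ and $C$ is MDS. We also need $\phi_{V,W}$ to be injective on $C$. This holds because $\phi_{V,W}$ is invertible on the perfect space generated by the $v_r\otimes w_r$, which has dimension $R=\trk(\C)$ (these $R$ rank-one matrices must be linearly independent, otherwise $\trk(\C)<R$). So $k_C=k$. The plan is then to fix $c\in C$ with $\wt(c)=d$ and squeeze $\rk(\phi_{V,W}(c))$ between $d$ and $\dim(C_W*c)$.

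The key identity is the one used in the proof of Proposition~\ref{prop:defect}, from \cite[Lemma~5.2]{byrne2019tensor}:
\[
d\le \rk\big(V(W_c)^{\textup{t}}\big)=\dim(C_W*c)-\dim\big(C_V^\perp\cap(C_W*c)\big)=\dim(C_V*c)-\dim\big(C_W^\perp\cap(C_V*c)\big).
\]
The lower bound $d$ holds because $\phi_{V,W}(c)$ is a nonzero element of $\C$, by injectivity. For the upper bound, $C_W*c$ is supported on $\supp(c)$, which has size $d$, so $\dim(C_W*c)\le d$, and likewise $\dim(C_V*c)\le d$. Combining the two gives $\dim(C_W*c)=d$ with trivial intersection with $C_V^\perp$. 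The symmetric expression gives the same for $C_V*c$ and $C_W^\perp$. This proves (1) and (2).

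For (3), consider the restriction map $C_V\to\Fq^d$, $u\mapsto u|_{\supp(c)}$. Its image has dimension $\dim(C_V*c)=d$, since multiplying by $c$ only rescales the nonzero coordinates. Here $\dim C_V=n$ because $V$ has full rank $n$, assuming (as throughout) that $V$ is $n\times R$ of rank $n$. Now $u\in C_V$ lies in $(C_W*c)^\perp$ if and only if $u*c\in (C_W)^{\perp}$ restricted appropriately. The cleaner route is to identify $C_V\cap (C_W*c)^\perp$ with the kernel of the bilinear pairing map
\[
C_V\to (C_W*c)^\ast,\qquad u\mapsto \big(y\mapsto u\cdot y\big).
\]
This map is surjective because, by (2), the pairing between $C_V$ and $C_W*c$ has trivial right kernel, namely $C_V^\perp\cap(C_W*c)=\{0\}$. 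Its rank is therefore $\dim(C_W*c)=d$, and the kernel has dimension $n-d$. Statement (4) follows symmetrically, with $\dim C_W=m$.

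The main obstacle is justifying the injectivity of $\phi_{V,W}$ on $C$ and the identification $k_C=k$, i.e.\ making sure the $R=\trk(\C)$ rank-one matrices $v_r\otimes w_r$ are independent. I would argue this directly from the definition of tensor rank as the minimal dimension of a perfect subspace containing $\C$. If they were dependent, their span would be a perfect space of dimension less than $R$ containing $\C$, a contradiction. The rest is linear algebra on the duality pairing.
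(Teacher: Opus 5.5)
Your proposal is correct and follows essentially the same route as the paper. For (1)--(2) you squeeze $\rk(V(W_c)^{\textup{t}})$ between $d$ and $\dim(C_W*c)\le \wt(c)=d$ using the identity from \cite[Lemma~5.2]{byrne2019tensor}. For (3)--(4), your pairing-kernel computation is the paper's observation that $\rk(V(W_c)^{\textup{t}})=n-\dim\big(C_V\cap(C_W*c)^\perp\big)$, phrased via duality. Your explicit check that the $R$ matrices $v_r\otimes w_r$ are linearly independent (so $\phi_{V,W}$ is injective on $C$, and $d_C=d$ by Singleton) just spells out what the paper takes from Proposition~\ref{prop:siaga}.
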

	
	\begin{proof}
		Since $\C$ is MTR, by Proposition~\ref{prop:siaga} we have that $\dim(C) = k$, $d^H(C) = d$, and that $C$ is MDS. Let $c \in C$ have Hamming weight $d$. Then $d=\rk(V(W_c)^{\textup{t}})=\dim(C_W * c)-\dim((C_W*c) \cap C_V^\perp)$. Since $\dim(C_W * c) \leq d$, it follows that $\dim(C_W * c) = d$ and $\dim\big((C_W * c) \cap C_V^\perp\big) = 0$. Similarly, using $d = \rk(W (V_c)^{\textup{t}})$, we obtain $\dim(C_V * c) = d$ and $(C_V * c) \cap C_W^\perp = \{0\}$. To prove (3) and (4), observe that $d=\rk(V(W_c)^{\textup{t}})=\rk(V)-\dim((C_W*c)^\perp \cap C_V)=n-\dim((C_W*c)^\perp \cap C_V)=\rk(W)-\dim((C_V*c)^\perp \cap C_W)=m-\dim((C_V*c)^\perp \cap C_W)$.
	\end{proof}
	
	As a conseqeunce, we have the following result.
	
	\begin{corollary}\label{cor:VWMDS}
		Let $\C = \phi_{V,W}(C)$ be an $\Fq$-$[n \times m, k, d]$ code of tensor rank $R$, and suppose that $\trkdef(\C) = 0$. If $d = n$, then $C_V$ is MDS. If $d = m$, then $C_W$ is MDS.
	\end{corollary}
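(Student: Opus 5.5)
We argue for $d=n$; the case $d=m$ is symmetric, with the roles of $V$ and $W$ exchanged and item (4) of Proposition~\ref{prop:MTRchar} used instead of item (3). The plan is to show that no nonzero codeword of $C_V$ can vanish on the support of a minimum-weight codeword of $C$. Since $C$ is MDS, these supports include every $d$-subset of coordinates, and this will force $C_V$ to be MDS.

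First, since $\trkdef(\C)=0$, Proposition~\ref{prop:siaga} shows that $C=\phi_{V,W}^{-1}(\C)$ is an $\Fq$-$[R,k,d]$ MDS code. Also $C_V$ is an $\Fq$-$[R,n]$ code, because $V$ has full rank $n$. Fix a codeword $c\in C$ of weight $d=n$ and let $S=\supp(c)$. By Proposition~\ref{prop:MTRchar}(3), $\dim\big(C_V\cap (C_W*c)^\perp\big)=n-d=0$. By Proposition~\ref{prop:MTRchar}(1), $\dim(C_W*c)=d=|S|$. Every vector of $C_W*c$ is supported in $S$, so $C_W*c$ is exactly the space of all vectors of $\Fq^R$ supported on $S$. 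Hence $(C_W*c)^\perp$ is the space of vectors vanishing on $S$, and therefore no nonzero codeword of $C_V$ vanishes on all of $S$.

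Next we show that every $d$-subset $S\subseteq\{1,\dots,R\}$ is the support of some weight-$d$ codeword of $C$. This is the standard fact that an MDS $[R,k,d]$ code has minimum-weight codewords on every $(R-k+1)$-subset. To see it, take the complement $T$ of $S$, with $|T|=R-d=k-1$. Since $C$ is MDS, any $k$ coordinates form an information set, so we may prescribe zeros on $T$ together with a nonzero value at one further coordinate. This gives a nonzero codeword $c$ vanishing on $T$. Its support lies in $S$ and has size at least $d=|S|$, so it equals $S$.

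Finally we combine the two steps. For every $n$-subset $S$, no nonzero codeword of $C_V$ vanishes on $S$, so every nonzero codeword of $C_V$ has weight at least $R-n+1$. Thus $C_V$ is an $[R,n,R-n+1]$ code, which is MDS. I expect the most delicate step to be the identification $C_W*c=\{x\in\Fq^R:\supp(x)\subseteq S\}$, which relies on the dimension count in Proposition~\ref{prop:MTRchar}(1). The argument also uses the edge-case convention that the codeword $c$ has weight exactly $d$, which the MDS property of $C$ guarantees.
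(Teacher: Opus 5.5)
Your proof is correct and follows essentially the same route as the paper. The MDS property of $C$ makes every $n$-subset the support of a minimum-weight codeword $c$, and Proposition~\ref{prop:MTRchar} then shows no nonzero codeword of $C_V$ vanishes there, so every $n$-subset is an information set of $C_V$. The only difference is that you get this injectivity through item~(3) plus the identification of $C_W*c$ as the full coordinate space on $S$, whereas the paper reads it off directly from item~(1): $\dim(C_V*c)=d=n=\dim(C_V)$.
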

	
	\begin{proof}
		For each $c \in C$ of Hamming weight $d$, define
		$\Pi_c : \Fq^R \longrightarrow \Fq^{R-d}: x \longmapsto (x_i)_{i \notin \supp(c)}$. Since $\C$ is MTR, by Proposition~\ref{prop:MTRchar}(1), for every such $c$ we have
		\[
		\dim\big(\Pi_c(C_W)\big) = d = \dim\big(\Pi_c(C_V)\big).
		\]
		Since $C$ is MDS, every subset of size $d$ is the support of a codeword of $C$, and hence every such subset is an information set for both $C_V$ and $C_W$. Therefore, if $d = n$, then $C_V$ is MDS, and if $d = m$, then $C_W$ is MDS.
	\end{proof}
	
	\begin{remark}
		Clearly, from the proof of Corollary~\ref{cor:VWMDS}, it follows that if  $\C = \phi_{V,W}(C)$ has tensor rank $R$ and is MTR, then both $C_V^\perp$ and $C_W^\perp$ have minimum distance at least $d + 1$.
	\end{remark}
	
	\begin{corollary}
		\label{cor:defect}
		We have $\trkdef(\C) \leq \Hdef(C) + d_C - d$ if one of the following conditions holds:
		\begin{enumerate}
			\item $|\supp(C) \cap \supp(C_W)| \leq d^H(C_V^\perp) - 1$,
			\item $|\supp(C) \cap \supp(C_V)| \leq d^H(C_W^\perp) - 1$.
		\end{enumerate}
	\end{corollary}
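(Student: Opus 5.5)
The strategy is to reduce to Proposition~\ref{prop:defect}\eqref{item2:defects}. That result gives $\trkdef(\C) \leq \Hdef(C) + d_C - d$ as soon as $C_W * c \nsubseteq C_V^\perp$ for every nonzero $c \in C$. So it suffices to show that each of conditions (1) and (2) forces this non-containment. By the Remark following Proposition~\ref{prop:defect}, $C_W * c \subseteq C_V^\perp$ is equivalent to $C_V * c \subseteq C_W^\perp$. Hence condition (2) is the mirror image of condition (1), obtained by swapping the roles of $V$ and $W$, and I will only treat (1) in detail.

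Assume (1), and suppose for contradiction that some nonzero $c \in C$ satisfies $C_W * c \subseteq C_V^\perp$. The first step is to check that $C_W * c$ is not the zero space. Since $W$ has full rank $m$ and, in the setting of $\phi_{V,W}$, we may assume no column of $W$ is zero (otherwise the corresponding rank-one term $v_r\otimes w_r$ vanishes and $R$ could be reduced), we have $\supp(C_W) = \{1,\dots,R\}$. Then for any $i \in \supp(c)$ there is $w \in C_W$ with $w_i \neq 0$, so $w * c \neq 0$. Alternatively, it is enough to use directly that $\supp(c) \cap \supp(C_W) \neq \emptyset$. If this intersection is empty then $C_W * c = 0$, so $\phi_{V,W}(c)=0$, and this degenerate case must be excluded or handled separately. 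I expect this to be the delicate point, and I will handle it by noting that a column of $W$ vanishing on $\supp(c)$ contradicts the minimality of $R$, or by arguing that every coordinate lies in $\supp(C_W)$.

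Now take a nonzero $u = w * c \in C_W * c$. Its support is contained in $\supp(c) \cap \supp(w) \subseteq \supp(C) \cap \supp(C_W)$. Therefore
\[
\wt(u) \leq |\supp(C)\cap\supp(C_W)| \leq \dH(C_V^\perp) - 1 .
\]
But $u \in C_V^\perp$ and $u \neq 0$, so $\wt(u) \geq \dH(C_V^\perp)$, which is a contradiction. Hence $C_W * c \nsubseteq C_V^\perp$ for all nonzero $c$, and Proposition~\ref{prop:defect}\eqref{item2:defects} yields the bound.

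For (2), the same argument applies to $C_V * c$ and $C_W^\perp$, together with the equivalence in the Remark. The main obstacle is the nondegeneracy step, namely ensuring that $C_W * c \neq 0$. Everything else is a direct weight comparison.
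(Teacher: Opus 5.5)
Your argument is the paper's argument. Both show, by the support/weight comparison, that each condition forces $C_W*c\nsubseteq C_V^\perp$ for every nonzero $c\in C$, and then invoke Proposition~\ref{prop:defect}\eqref{item2:defects}. The degenerate case you flag is a real gap, and the paper's proof passes over it silently. Let $Z$ be the set of indices $r$ with $w_r=0$, i.e.\ the zero columns of $W$; full rank does not rule these out. If some nonzero $c\in C$ has $\supp(c)\subseteq Z$, then $C_W*c=\{0\}\subseteq C_V^\perp$. So $\phi_{V,W}$ is not injective on $C$, and Proposition~\ref{prop:defect}\eqref{item2:defects} does not apply.

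Neither of your two patches closes this as written. Minimality of $R$ is not a hypothesis: the corollary does not assume $\trk(\C)=R$, and $R$ is just the length of $C$. The reduction ``we may assume no column of $W$ is zero'' is not free either. Deleting the coordinates in $Z$ punctures $C$ and changes $R$, $k_C$, $d_C$, $\Hdef(C)$ and $\dH(C_V^\perp)$. You would have to check that the bound for the reduced data implies the bound for the original data, and you do not do this.

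Here is a direct way to finish. Unwinding the definitions, the claimed inequality is equivalent to $\trk(\C)\le R-k_C+k$, i.e.\ $\trk(\C)\le R-\dim(C\cap\ker\phi_{V,W})$; the rank distance $d$ cancels.

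First, since $w_r=0$ for $r\in Z$, we have $\C\subseteq\langle v_r\otimes w_r : r\notin Z\rangle$, so $\trk(\C)\le R-|Z|$.

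Second, your weight argument shows that under (1) a nonzero $c\in C$ lies in $\ker\phi_{V,W}$ if and only if $\supp(c)\subseteq Z$. Indeed, suppose some $i\in\supp(c)$ lies outside $Z$. Pick $w\in C_W$ with $w_i\neq 0$. Then $w*c\neq 0$ and $\wt(w*c)\le|\supp(C)\cap\supp(C_W)|<\dH(C_V^\perp)$, so $w*c\notin C_V^\perp$.

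Hence $C\cap\ker\phi_{V,W}$ sits inside the $|Z|$-dimensional space of vectors supported on $Z$, so $\dim(C\cap\ker\phi_{V,W})\le|Z|$, and the bound follows. When no nonzero codeword is supported on $Z$, this reduces exactly to the paper's argument. Condition (2) is handled symmetrically, using the zero columns of $V$ and the equivalence $C_W*c\subseteq C_V^\perp\Leftrightarrow C_V*c\subseteq C_W^\perp$.
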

	\begin{proof}
		For any $c \in C$, we have $\supp(C_W * c) \subseteq \supp(C) \cap \supp(C_W)$. Therefore, if $|\supp(C) \cap \supp(C_W)| \leq d^H(C_V^\perp) - 1$, then every nonzero vector in $C_W * c$ has support of size strictly less than $d^H(C_V^\perp)$, and hence $C_W * c \nsubseteq C_V^\perp$ for all nonzero $c \in C$. Analogously, the second condition implies that $C_V * c \nsubseteq C_W^\perp$ for all nonzero $c \in C$. The statement now follows from Proposition~\ref{prop:defect}.
	\end{proof}
	
	\begin{example}
		Consider the following matrices with entries in $\F_5$.
		\begin{equation*}
			V=
			\begin{pmatrix}
				1&0&0&1&2&0\\0&1&0&2&1&0\\0&0&1&4&1&0
			\end{pmatrix},\qquad
			G=
			\begin{pmatrix}
				1&0&4&4&3&4\\0&1&4&1&1&2
			\end{pmatrix}\qquad
			W=
			\begin{pmatrix}
				1&0&0&0&0&4\\0&1&0&0&0&2\\0&0&1&0&0&2\\0&0&0&1&0&4\\0&0&0&0&1&4
			\end{pmatrix}
		\end{equation*}
		Let $C_V$, $C$, and $C_W$ be the linear block codes generated by the matrices $V$, $G$, and $W$, respectively. One can check that $C_V$ is an $\F_5$-$[6,3,3]$ linear block code, $C$ is an $\F_5$-$[6,2,5]$ MDS linear block code, and $C_W$ is an $\F_5$-$[6,5,2]$ linear block code. Moreover, one can check that $C_W^\perp$ is the $\F_5$-$[6,1,6]$ code generated by $(1,3,3,1,1,1)$, and that $C_V * c \nsubseteq C_W^\perp$ for all nonzero $c \in C$. Therefore, by Corollary~\ref{cor:defect}, we obtain $\trkdef(\C) \leq \Hdef(C) + d_C - d = 0 + 5 - 3 = 2$.
	\end{example}
	
	\subsection{A construction based on Cauchy codes}
	
	We now present the main result of this section, in which we extend the construction of MTR codes given in \cite[Theorem~5.7]{byrne2019tensor} to a strictly larger class of codes. 
	We first introduce some notation.
	
	For any positive integer $s$, we let $\Fq[x]_{\leq s}$ denote the set of
	polynomials in $\Fq[x]$ of degree at most $s$. We let $\Fq[x,y]_s$ denote the space of homogeneous polynomials of degree $s$ in two variables $x$ and $y$, together with the zero polynomial.
	
	Let $\mathbb{P}^1(\Fq)$ denote the projective line over $\Fq$. For any $f \in \Fq[x,y]_s$ we define the map $f:\mathbb{P}^1(\Fq) \longrightarrow \Fq$, by 
	\[
	f(P) = \left\{ 
	\begin{array}{ll}
		f(P,1) & \text{ if }P \in \Fq\\
		f(1,0) & \text{ if }P \text{ is the point at infinity of }\mathbb{P}^1(\Fq). 
	\end{array}
	\right.        
	\]
	Furthermore, for any $f\in \Fq[x,y]_s$, we write $f_x \in \Fq[x]_{\leq s}$ to denote its dehomogenization with respect to $y$ and we write $f_y \in \Fq[y]_{\leq s}$ to denote its dehomogenization with respect to $x$, that is, $f_x(x)=f(x,1)$ and $f_y(y)=f(1,y)$. 
	
	For any $\bar{P}=(P_1, \ldots, P_R) \in (\mathbb{P}^1(\Fq))^R$, we define the evaluation map
	\[
	\mathrm{ev}_{\bar{P}}: \Fq[x,y]_s \longrightarrow \Fq^R, \mathrm{ev}_{\bar{P}}(f)=(f(P_1),\dots f(P_R)) \text{ for all } f \in  \Fq[x,y]_s
	\]
	Given $\beta=(\beta_1,\dots,\beta_R) \in \Fq^R$ and $\bar{P}=(P_1, \ldots, P_R) \in (\mathbb{P}^1(\Fq))^R$, with the $P_i$ pairwise distinct, the Cauchy code $C_k(\bar{P},\beta)$ is the $\Fq$-$[R,k]$ code defined by
	$$  C_k(\bar{P},\beta)=\{\beta* \mathrm{ev}_{\bar{P}}(f): f \in \Fq[x,y]_{k-1} \}.$$

	Let $0<s\le t\in\mathbb{N}$. Define the map
	\[
	\Phi_{s,t}: \Fq[x,y]_{s}\times\Fq[x,y]_{t} \longrightarrow \Fq[x,y]_{s+t}:
	(f,h)\longmapsto f\cdot h
	\]
	define the set 
	$$S_{s,t}=\left\{p\in\Fq[x,y]_{s+t}: \textup{ if } f\in\Fq[x]_{\leq s}  \text{ divides } p_x,\text{ then } \deg\left(\frac{p_x}{f}\right) \geq t+1\right\} \subseteq \Fq[x,y]_{s+t}.$$
	\begin{lemma}\label{lem:Sst}
		Let $0<s\le t\in\mathbb{N}$. Then
		$S_{s,t}=\Fq[x,y]_{s+t}\setminus\mathrm{Im}~\Phi_{s,t}.$
	\end{lemma}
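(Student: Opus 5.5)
The plan is to prove the equivalent statement $\mathrm{Im}\,\Phi_{s,t}=\Fq[x,y]_{s+t}\setminus S_{s,t}$. Unwinding the definition of $S_{s,t}$, a polynomial $p\in\Fq[x,y]_{s+t}$ fails to lie in $S_{s,t}$ exactly when there is a (necessarily nonzero, since $0$ only divides $0$) polynomial $f\in\Fq[x]_{\le s}$ with $f\mid p_x$ and $\deg(p_x/f)\le t$. Here we use the convention $\deg 0=-\infty$. So it suffices to show that $p\in\mathrm{Im}\,\Phi_{s,t}$ if and only if $p_x$ factors as $f\cdot g$ with $\deg f\le s$ and $\deg g\le t$. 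The tools are dehomogenization, which is multiplicative, and homogenization up to a power of $y$.

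First, the zero polynomial. We have $0=\Phi_{s,t}(0,0)\in\mathrm{Im}\,\Phi_{s,t}$. Also $0\notin S_{s,t}$: taking $f=1$ gives $p_x/f=0$, whose degree is $-\infty<t+1$. So we may assume $p\neq 0$.

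($\subseteq$) Suppose $p=F\cdot H$ with $F\in\Fq[x,y]_s$ and $H\in\Fq[x,y]_t$, both nonzero. Dehomogenizing gives $p_x=F_x H_x$, with $\deg F_x\le s$ and $\deg H_x\le t$. Taking $f=F_x$ violates the defining condition of $S_{s,t}$, so $p\notin S_{s,t}$.

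($\supseteq$) Suppose $p\notin S_{s,t}$, so $p_x=f g$ with $a=\deg f\le s$ and $b=\deg g\le t$. Put $e=a+b=\deg p_x$. Since $p$ is homogeneous of degree $s+t$, we can write $p=y^{s+t-e}\,\tilde p$, where $\tilde p$ is the homogenization of $p_x$ in degree $e$. Homogenization is multiplicative, so $\tilde p=f^{h}g^{h}$, where $f^{h}$ and $g^{h}$ are the homogenizations in degrees $a$ and $b$. Now set
\[
F=y^{s-a}f^{h}\in\Fq[x,y]_s,\qquad H=y^{t-b}g^{h}\in\Fq[x,y]_t .
\]
Both exponents are nonnegative because $a\le s$ and $b\le t$. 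Then $F H=y^{s+t-e}f^hg^h=p$, so $p=\Phi_{s,t}(F,H)$.

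The only delicate step is the bookkeeping of the powers of $y$, that is, of the factors vanishing at the point at infinity. This is handled by the identity $p=y^{s+t-\deg p_x}\,(p_x)^{h}$ together with distributing the leftover $y$-power between the two factors. The hypothesis $s\le t$ plays no essential role in this argument.
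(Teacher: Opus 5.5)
Your proof is correct and follows the same route as the paper. One inclusion comes from multiplicativity of dehomogenization. For the converse you homogenize the two factors of $p_x$ and distribute the leftover power $y^{s+t-\deg p_x}$ as $y^{s-a}$ and $y^{t-b}$, exactly as the paper does with $y^a f'$ and $y^b h'$. Your bookkeeping via $p=y^{s+t-\deg p_x}(p_x)^h$ is cleaner than the paper's write-up, which misstates the exponent of $y$ and the sign condition on $a,b$, and you also handle the zero polynomial explicitly, which the paper omits.
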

	\begin{proof}
		Let $p \in \textup{Im} \Phi_{s,t}$. Then there exist
		$f \in \Fq[x,y]_s$ and $h \in \Fq[x,y]_t$ such that
		$p=fh$ so and $p_x=f_xh_x$ with $\deg(f_x)\leq s$ and 
		$\deg(h_x)\leq t$. Then $p \notin S_{s,t}$ and hence we deduce that 
		$S_{s,t}\subseteq \Fq[x,y]_{s+t}\setminus\mathrm{Im}~\Phi_{s,t}.$
		Conversely, suppose that $p \in \Fq[x,y]_{s+t}$ such that
		$p_x = fh$ for some $f \in \Fq[x]_{\leq s}$ and $h\in \Fq[x]_{\leq t}$. Let $p_x$ have degree $\ell \leq s+t$.
		
		Let $\ell = s+t-r$, for some $r\geq 0$. 
		Then $p=y^{s+t-r}p'$, where $p' \in \Fq[x,y]_{\ell}$ such that $(p')_x = p_x$. Let $\deg(f)=s'$, let $\deg(h)=t'$
		and let $f' \in \Fq[x,y]_{s'}$, $h' \in \Fq[x,y]_{t'}$ such 
		that $(f')_x=f$ and $(h')_x=h$. That is, let $f'$ and $h'$
		be homogenizations of $f$ and $h$, respectively.
		We have $s'+a= s$ and $t' +b = t$ for some $a,b \leq 0$. 
		Since $p' = f' h'$, we have 
		$p = y^{s+t-r}p' = y^{s+t-r}f'h'$, and hence
		$s+t = s+t-r+s'+t'$, which gives $r=s'+t'$.
		Then $p =y^{a+b}f'h' = (y^a f')(y^b h')$ and 
		$y^a f' \in \Fq[x,y]_s$ and $y^b h' \in \Fq[x,y]_t$,
		from which we deduce that $p \in \textup{Im}~\Phi_{s,t}$.\end{proof}
	
	The identification of this set leads to the following result, which generalizes the construction in \cite[Theorem~5.7]{byrne2019tensor}. As in that case, the rank-metric code produced here is a constant rank MRD code with tensor rank defect zero.
	
	\begin{theorem}
		Let $0 < d < k < R$ be positive integers satisfying $R = k + d - 1$, and let $\bar{P}=(P_1, \ldots, P_R) \in (\mathbb{P}^1(\Fq))^R$, with the $P_i$ pairwise distinct.
		Let $f(x,y) \in \Fq[x,y]_{k}$. Let $C = C_k(\bar{P}, \mathbf{1})$, let 
		$V \in \Fq^{k \times R}$ be a parity-check matrix for 
		$C_{R-k}(\bar{P},\mathrm{ev}_{\bar{P}}(f))$, and let 
		$W \in \Fq^{d \times R}$ be a generator matrix for $C_d(\bar{P}, \mathbf{1})$. 
		Then $\C = \phi_{V,W}(C)$ is MTR if and only if $f \in S_{d-1,k-1}$.
	\end{theorem}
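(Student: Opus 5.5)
The plan is to turn the MTR property of $\C=\phi_{V,W}(C)$ into a statement about factorizations of homogeneous polynomials in $\Fq[x,y]$. The reduction rests on Proposition~\ref{prop:defect} and the rank formula from its proof. Throughout I use that $\mathrm{ev}_{\bar P}:\Fq[x,y]_{R-1}\to\Fq^R$ is a linear isomorphism, since the points $P_i$ are pairwise distinct and $\dim \Fq[x,y]_{R-1}=R$.

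\emph{Step 1: identify the codes.} The code $C=C_k(\bar P,\mathbf 1)$ is an MDS $[R,k,d]$ code, because $R-k+1=d$, so $\Hdef(C)=0$ and $d_C=d$. By construction $C_V=C_{R-k}(\bar P,\mathrm{ev}_{\bar P}(f))^\perp$, hence
\[
C_V^\perp=\{\mathrm{ev}_{\bar P}(fg): g\in\Fq[x,y]_{d-2}\},
\]
using $R-k-1=d-2$. Also $C_W=\mathrm{ev}_{\bar P}(\Fq[x,y]_{d-1})$. For $c=\mathrm{ev}_{\bar P}(h)$ with $0\neq h\in\Fq[x,y]_{k-1}$ we get $C_W*c=\mathrm{ev}_{\bar P}(h\cdot\Fq[x,y]_{d-1})$. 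All of these live inside $\mathrm{ev}_{\bar P}(\Fq[x,y]_{R-1})$, because $(k-1)+(d-1)=k+(d-2)=R-1$. Since $\mathrm{ev}_{\bar P}$ is injective there and multiplication by $h\neq0$ is injective, $\dim(C_W*c)=d$. Moreover,
\[
(C_W*c)\cap C_V^\perp\neq\{0\}
\iff
\exists\, a\in\Fq[x,y]_{d-1},\ g\in\Fq[x,y]_{d-2},\ \text{not both zero, with } ha=fg .
\]

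\emph{Step 2: MTR $\iff$ no such relation.} By the rank identity in the proof of Proposition~\ref{prop:defect}, $\rk(V(W_c)^{\textup t})=d-\dim\big((C_W*c)\cap C_V^\perp\big)$.

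If no relation $ha=fg$ exists for any nonzero $h$, then every nonzero codeword has rank exactly $d$. In particular $\phi_{V,W}$ is injective on $C$, so $k'=k$, and $d'=d$. Proposition~\ref{prop:dimmatC} gives $\trk(\C)\le R=k+d-1$, and Kruskal's bound gives the reverse inequality, so $\C$ is MTR.

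Conversely, suppose some $h$ admits such a relation. Then either $\dim\C<k$, or $\C$ contains a nonzero matrix of rank $<d$. The step I expect to be the main obstacle is showing that this forces $\trkdef(\C)>0$. The worry is that $\trk(\C)$ could drop below $R$ and make $\C$ MTR with smaller parameters. My first attempt is to use Proposition~\ref{prop:MTRchar} and Corollary~\ref{cor:VWMDS} to argue that any MTR realisation of $\C$ must have length $R$ with all weight-$d$ codewords having trivial intersection. Failing that, I would read "MTR" relative to the length-$R$ representation, as in \cite[Theorem~5.7]{byrne2019tensor}, and check this against that proof.

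\emph{Step 3: the factorization criterion.} It remains to show that a relation $ha=fg$ with $h\neq0$ exists if and only if $f\notin S_{d-1,k-1}$, i.e. (in the spirit of Lemma~\ref{lem:Sst}) if and only if $f$ has a factorization $f=u f'$ with $\deg f'\le d-1$ and $\deg u\le k-1$.

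For the forward direction I use unique factorization in $\Fq[x,y]$. Put $u=\gcd(f,h)$, so that $f=uf'$ and $h=uh'$ with $\gcd(f',h')=1$. Then $h'a=f'g$, which forces $f'\mid a$ and hence $\deg f'\le d-1$. Also $\deg u\le\deg h=k-1$. Note $a\neq0$: otherwise $fg=0$ with $g\neq0$, impossible.

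For the converse, given $f=uf'$ I will build $h$, $a$, $g$ explicitly. Take $h=u\,m$ and $a=f'\,m'$ with monomial padding factors $m,m'$ that are powers of $y$, chosen to fix the degrees, and then solve $g=mm'$. The degree count $\deg h+\deg a=\deg f+\deg g$ makes this consistent, exactly as in the $y$-power bookkeeping of the proof of Lemma~\ref{lem:Sst}; dehomogenizing with respect to $y$ then matches the definition of $S_{d-1,k-1}$ via $f_x$.

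The remaining technical point is the degree mismatch between $f\in\Fq[x,y]_k$ and $S_{d-1,k-1}\subseteq\Fq[x,y]_{d+k-2}$. The defining condition of $S_{d-1,k-1}$ only concerns divisors of $f_x$ of degree at most $d-1$ and the degree of the quotient, so it makes sense for $f_x$ of any degree. I would verify that this condition is precisely the negation of the factorization found in Step 3.
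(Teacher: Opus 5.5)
Your Steps~1 and~3 and the ``if'' half of Step~2 are sound and follow the paper's route. That route reduces $C_V^\perp\cap(C_W*c)\neq\{0\}$ to a relation $f\mu=h\lambda$ in $\Fq[x,y]_{R-1}$, and then to a factorization of $f_x$, via Lemma~\ref{lem:Sst} with $\mu=y^{d-2}$. Your $\gcd$ argument in fact proves both directions of the factorization criterion, whereas the paper only checks $\mu=y^{d-2}$.

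The gap is the ``only if'' half of Step~2, and your first plan for it cannot work. If MTR means $\trk(\C)=\dim\C+\drk(\C)-1$, the implication is false, precisely because the tensor rank can drop below $R$. Proposition~\ref{prop:MTRchar} and Corollary~\ref{cor:VWMDS} cannot help, since they assume $\trk(\C)=R$ from the outset.

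Here is a counterexample. Take $q=5$, $d=2$, $k=3$, $R=4$, $\bar P=(0,1,2,3)$ and $f=x(x-y)(x-2y)$.
\begin{itemize}
\item There is no degree mismatch, since $f\in\F_5[x,y]_{d+k-2}$, and $f=\Phi_{1,2}\big(x,(x-y)(x-2y)\big)$, so $f\notin S_{1,2}$.
\item Since $\mathrm{ev}_{\bar P}(f)=(0,0,0,1)$, you may take $V=[I_3\mid 0]$.
\item With $W$ having rows $(1,1,1,1)$ and $(0,1,2,3)$, the matrix $\phi_{V,W}(c)$ has rows $c_1(1,0)$, $c_2(1,1)$, $c_3(1,2)$.
\item As $c$ runs over $C$, $(c_1,c_2,c_3)$ runs over all of $\F_5^3$. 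So $\C$ is spanned by three rank-one matrices.
\item Hence $\C$ is a perfect $\F_5$-$[3\times 2,3,1]$ code with $\trk(\C)=3=3+1-1$, i.e.\ it is MTR.
\end{itemize}
For $d=2$ the same happens whenever $f$ splits into $k$ distinct linear factors $\lambda_i$ over $\Fq$. In that case the matrices $\phi_{V,W}(\mathrm{ev}_{\bar P}(f/\lambda_i))$ have rank one and span $\C$.

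So you must adopt your fallback. Prove the statement with ``MTR'' read as ``MTR with parameters $[k\times d,k,d]$'', or equivalently ``MTR with $\trk(\C)=R$''. Under that reading the missing direction is immediate:
\begin{itemize}
\item Since $\dim(C_W*c)=d>d-1=\dim C_V^\perp$, the map $\phi_{V,W}$ is always injective on $C$, so $\dim\C=k$.
\item A nontrivial intersection for some nonzero $c$ gives $1\le\rk\phi_{V,W}(c)<d$, hence $\drk(\C)<d$.
\item Therefore $\C$ is not an MTR code with parameters $[k\times d,k,d]$; if $\trk(\C)=R$, then $\trkdef(\C)=d-\drk(\C)>0$.
\end{itemize}

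The paper's own proof does not supply this direction either. It shows that a trivial intersection yields an MTR $[k\times d,k,d]$ code, and then simply asserts the equivalence.
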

	\begin{proof}
		We show that for every $c \in C$, we have
		$C_V^\perp \cap (C_{W}*c) = \{0\}.$
		For any $c \in C$, we have $c = \mathrm{ev}_{\bar{P}}(h)$ for some nonzero $h \in \F_q[x,y]_{k-1}$. If $b \in C_V^\perp \cap (C_{W}*c)$ for some $c$, then there exist $\mu \in \F_q[x,y]_{R-k-1} = \F_q[x,y]_{d-2}$ and $\lambda \in \F_q[x,y]_{d-1}$ such that
		\[
		b = \mathrm{ev}_{\bar{P}}(f)\cdot \mathrm{ev}_{\bar{P}}(\mu) = \mathrm{ev}_{\bar{P}}(h)\cdot \mathrm{ev}_{\bar{P}}(\lambda).
		\]
		Since $b$ is the evaluation of a polynomial in $\Fq[x,y]_{R-1}$, this holds if and only if $f\mu - h\lambda = 0$ in $\F_q[x,y]$. It follows that $C_V^\perp \cap (C_{W}*c) = \{0\}$ for all $c \in C$ if and only if, for all $\mu \in \F_q[x,y]_{R-k-1}$, $\mu f \notin \mathrm{Im}(\Phi_{d-1,k-1})$.  
		From Lemma \ref{lem:Sst}, this holds if and only if $\mu f \in S_{d-1,k-1}$ for all $\mu\in \F_q[x,y]_{d-2}$. In particular,
		this must hold for $\mu=y^{d-2}$. If $y^{d-2} f \in S_{d-1,k-1}$ then
		for any divisor $a \in \Fq[x]_{\leq d-1}$ of $(y^{d-2}f)_x = f_x$ we have that $\deg\left(\frac{f_x}{a}\right) \geq k$, which holds if and only if $f \in S_{d-1,k-1}$. 
		If $C_V^\perp \cap (C_{W}*c)$ is trivial for all $c \in C$, we also have that $\phi_{V,W}$ is an isomorphism. Furthermore, since the code $C_W$ is MDS, we have that for any $c\in C$, that 
		$\rk(W_c) = \min(\rk(W),d^H(c)) = d$ and thus $\C$ is an $\Fq$-$[k \times d, k,d]$ rank-metric code.
		Hence, $\C$ is MTR if and only if $f \notin \mathrm{Im}(\Phi_{d-1,k-1})$. 
	\end{proof}
	
	We conclude this section with the following example.
	\begin{example}
		Let $d = 3$, $k = 6$, and $R = 8$. Let $\omega$ be a root of $x^3 + x + 2$ over $\F_3$, and let~$\alpha = (\omega^i)_{0 \leq i \leq 7}$. Then $\omega$ is a primitive element of $\F_9$, and hence the entries of $\alpha$ are pairwise distinct. Let
		\[
		f(x,y) = (x^3 + x y^2 + \omega y^3)^2 = x^6 + 2x^4 y^2 + \omega^5 x^3 y^3 + x^2 y^4 + \omega^5 x y^5 + \omega y^6 \in \F_9[x,y].
		\]
		Let $C = C_6(\alpha, \mathbf{1})$. A parity-check matrix $V$ of~$C_3(\alpha, \mathrm{ev}_\alpha(f))$ and generator matrix $W$ of~$C_3(\alpha, \mathbf{1})$, are given by
		\[
		V=\left[ 
		\begin{array}{cccccccc}
			1 & 0 & 0 & 0 & 0 & 0 & \omega^5 & \omega^6  \\
			0 & 1 & 0 & 0 & 0 & 0 & \omega^5 & 1  \\
			0 & 0 & 1 & 0 & 0 & 0 & \omega^7 & \omega  \\
			0 & 0 & 0 & 1 & 0 & 0 & 2 & \omega  \\
			0 & 0 & 0 & 0 & 1 & 0 & \omega^2 & \omega^6  \\
			0 & 0 & 0 & 0 & 0 & 1 & \omega & \omega^7  \\
		\end{array}
		\right],\qquad
		W=\left[ 
		\begin{array}{cccccccc}
			1 & 0 & 0 &  \omega^3 & \omega   & 1        & \omega   & 2\\
			0 & 1 & 0 &  \omega^3 & 1        & \omega^2 & \omega^2 & \omega^7\\
			0 & 0 & 1 &  \omega^6 & \omega^5 & \omega^6 & \omega   & \omega^5
		\end{array}
		\right]. 
		\]
		The rank-metric code $\C = \phi_{V,W}(C)$ is an $\F_9$-linear $[6 \times 3, 6, 3]$ MRD matrix code of tensor rank $8$, and hence is MTR.
	\end{example}

	\section{AG code constructions}
	\label{sec:intersection}
	
	The tensor rank defect is closely tied to the interaction between the auxiliary codes $C_V$, $C_W$, and the inner code $C$. The previous results highlight that controlling intersections of the form
	\[
	C_W * c \cap C_V^\perp \quad \text{and} \quad C_V * c \cap C_W^\perp
	\]
	is crucial for obtaining rank-metric codes with small tensor rank defect, and in particular for ensuring the MTR property. Motivated by this, we now turn our attention to constructing families of codes where these intersections can be explicitly controlled or forced to be trivial.
	
	A natural and powerful framework for this purpose is provided by algebraic geometry (AG) codes. Indeed, AG codes form a well-known and far-reaching generalization of Reed--Solomon codes, which already played a central role in the constructions discussed earlier. Their rich algebraic structure offers additional flexibility, allowing us to tailor both the parameters and the multiplicative behavior of codewords. In the following, we investigate intersections of AG codes, with the goal of deriving explicit conditions under which the dimension of such intersections is zero or small. This approach enables us to extend the previous constructions and to obtain new families of rank-metric codes with good estimates on the tensor rank. For simplicity, we will denote $C_{\mathscr{L}}(D,G)$ by $C(D,G)$.
	
	\begin{theorem}\label{thm:generalcurves}
		Let $\mathcal{P}$ be a proper subset of $\mathcal{X}(\Fq)$ and set $D = \sum_{P \in \mathcal{P}} P$.  
		Let $G,H,J$ be divisors on $\mathcal{X}$ of degree at most $\deg(D)-1$ such that
		\[
		(\supp(G) \cup \supp(H) \cup \supp(J)) \cap \supp(D) = \varnothing.
		\]
		For each $f \in \L(J)$, let $H_f=H - (f)$.
		Then, for any $f \in \mathscr{L}(J)$, we have
		\[
		\ell\big(\gcd(G,\, H_f\big) \leq 
		\dim\left( C(D,G) \cap \big(\mathrm{ev}_D(f) * C(D,H)\big) \right)
		\le 
		\ell\big(\gcd(G,\, H_f\big) + \ell\big(\mathrm{lcm}(G,\, H_f) - D\big).
		\]
	\end{theorem}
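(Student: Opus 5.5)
First identify $\mathrm{ev}_D(f) * C(D,H)$ with an AG code. For $f \in \L(J)$ nonzero and $h \in \L(H)$ we have $fh \in \L(H + J)$, and $\mathrm{ev}_D(f)*\mathrm{ev}_D(h) = \mathrm{ev}_D(fh)$. Since $(fh) + H - (f) = (h) + H \ge 0$, multiplication by $f$ is an isomorphism $\L(H) \to \L(H_f)$. Hence
\[
\mathrm{ev}_D(f) * C(D,H) = \mathrm{ev}_D(\L(H_f)).
\]
The divisor $H_f$ may now meet $\supp(D)$, namely at zeros of $f$ on $\mathcal{P}$. In that case the evaluation is still well defined, because every function of $\L(H_f)$ has the form $fh$ with $h$ regular on $\supp(D)$. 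The case $f = 0$ is trivial, since then the intersection is $\{0\}$, and I interpret the bounds with $H_f$ suitably.

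Next consider the linear map
\[
\Psi: \L(G) \times \L(H_f) \longrightarrow \Fq^R, \qquad (g,h') \longmapsto \mathrm{ev}_D(g) - \mathrm{ev}_D(h').
\]
The intersection $C(D,G) \cap \mathrm{ev}_D(\L(H_f))$ is the image of $\ker \Psi$ under the projection $(g,h') \mapsto \mathrm{ev}_D(g)$. The kernel of this projection restricted to $\ker \Psi$ is $\{(g,g) : g \in \L(G),\ \mathrm{ev}_D(g) = 0\}$. Because $\deg G < \deg D$, any $g \in \L(G)$ vanishing on all of $\mathcal{P}$ lies in $\L(G - D) = 0$, so $\mathrm{ev}_D$ is injective on $\L(G)$. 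The intersection therefore has dimension equal to
\[
\dim\{(g,h') \in \L(G) \times \L(H_f) : g - h' \text{ vanishes on } \mathcal{P}\}.
\]

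For the lower bound, note that any $g \in \L(G) \cap \L(H_f) = \L(\gcd(G,H_f))$ gives the pair $(g,g)$. These pairs embed injectively, so the dimension is at least $\ell(\gcd(G,H_f))$; this is Proposition~\ref{prop:basicsonGCDLCM}(1) with $\ev_D$ injective.

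For the upper bound, send a pair $(g,h')$ to $u = g - h'$. Then $u \in \L(\mathrm{lcm}(G,H_f))$ and $u$ vanishes at every $P \in \mathcal{P}$. The delicate step is concluding $u \in \L(\mathrm{lcm}(G,H_f) - D)$, since we need $v_P(u) \ge 1 - m_P$ where $m_P$ is the coefficient of $P$ in $\mathrm{lcm}(G,H_f)$. For $P \in \mathcal{P}$, the coefficient of $G$ at $P$ is $0$ and that of $H_f$ is $-v_P(f) \le 0$, so $m_P = 0$ and we need $v_P(u) \ge 1$. Both $g$ and $h'$ are regular at $P$, as shown above, so $u(P) = 0$ gives $v_P(u) \ge 1$. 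Away from $\mathcal{P}$ nothing changes.

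Thus the map $(g,h') \mapsto g - h'$ goes into $\L(\mathrm{lcm}(G,H_f) - D)$, and its kernel is $\{(g,g) : g \in \L(G) \cap \L(H_f)\}$, which is isomorphic to $\L(\gcd(G,H_f))$. Rank–nullity then bounds the dimension of the pair space by $\ell(\gcd(G,H_f)) + \ell(\mathrm{lcm}(G,H_f) - D)$, which together with the previous paragraph gives the upper bound.

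The main care point is the bookkeeping of $H_f$ at points of $\supp(D)$, which is where regularity of $fh$ is used. I also need the standard identity $\L(A) \cap \L(B) = \L(\gcd(A,B))$ and the inclusion $\L(A) + \L(B) \subseteq \L(\mathrm{lcm}(A,B))$ at the level of function spaces. These hold pointwise by valuations.
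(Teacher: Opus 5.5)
Your proof follows the paper's route. The paper's space $U$ is exactly your space of pairs $(g,h')\in\L(G)\times\L(H_f)$ with $g-h'$ vanishing on $\mathcal{P}$. The difference map sends it into $\L(\mathrm{lcm}(G,H_f)-D)$ with kernel the diagonal copy of $\L(\gcd(G,H_f))$, rank--nullity bounds $\dim U$, and the intersection is the image of $U$ under $(g,h')\mapsto\mathrm{ev}_D(g)$. Your valuation check at the points of $\mathcal{P}$ is more careful than the paper, which does not address the fact that $H_f$ may meet $\supp(D)$. That check uses that the coefficient of $\mathrm{lcm}(G,H_f)$ there is $\max(0,-v_P(f))=0$ and that $g$ and $h'$ are both regular there.

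One intermediate claim is false, however. The kernel of the projection $(g,h')\mapsto\mathrm{ev}_D(g)$ on $\ker\Psi$ is not $\{(g,g) : \mathrm{ev}_D(g)=0\}$. It is $\{(0,h') : h'\in\L(H_f),\ \mathrm{ev}_D(h')=0\}$. This can be nonzero because $\mathrm{ev}_D$ need not be injective on $\L(H_f)$: $H_f$ has negative coefficients at the zeros of $f$ in $\mathcal{P}$, so $\deg H_f<\deg D$ does not suffice. This is the same phenomenon that Theorem~\ref{thm:general_construction_AMTR} has to exclude through its hypothesis on $\ell(H-E_f)$.

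For a concrete case, take $\mathbb{P}^1$ with $\mathcal{P}=\{a_1,\dots,a_R\}$, $R\ge 2$, $P_\infty\notin\mathcal{P}$, $G=0$, $H=(R-1)P_\infty$, $J=P_\infty$ and $f=x-a_1$. Your pair space is spanned by $(0,\prod_{i}(x-a_i))$, so it is $1$-dimensional. The intersection $\langle \mathbf{1}\rangle\cap\{v : v_1=0\}$ is zero. So the intersection has dimension \emph{at most} that of the pair space, not equal to it.

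This does not damage the result. The upper bound only needs the inequality. The lower bound should be argued directly, as the paper does: the diagonal copy of $\L(\gcd(G,H_f))$ projects injectively because $\mathrm{ev}_D$ is injective on $\L(G)$. Your parenthetical appeal to Proposition~\ref{prop:basicsonGCDLCM}(1) already amounts to this.
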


	\begin{proof}
		For every $f \in \mathscr{L}(J)$. 
		We have that $\mathrm{ev}_D(f) * C(D,H) = C(D,H_f)$ and the map $\lambda \mapsto f \lambda$ defines an isomorphism of vector spaces between $\mathscr{L}(H)$ and $\mathscr{L}(H_f)$. 
		From Proposition \ref{prop:basicsonGCDLCM}, for any $f \in \mathscr{L}(J)$ we have that 
		\[
		C(D,\gcd(G,H_f))\subseteq C(D,G)\cap C(D,H_f),
		\]
		and since $\deg(G),\deg(H_f) <\deg(D)$, we get that 
		\[
		\ell(\gcd(G,H_f))=\dim(C(D,\gcd(G,H_f)))\le \dim(C(D,G)\cap C(D,H_f)).
		\]

		To establish the other inequality,
		we construct a vector space $U$ consisting of all valid pairs of functions that evaluate to the same codeword on $D$:
		\[
		U = \big\{ (\mu, \nu) \in \mathscr{L}(G) \times \mathscr{L}(H_f) \;\big|\; \mu - \nu \in \mathscr{L}\big(\mathrm{lcm}(G, H_f) - D\big) \big\}.
		\]
		
		Consider the linear map $\varphi : U \to \mathscr{L}\big(\mathrm{lcm}(G, H_f) - D\big)$ defined by $\varphi(\mu, \nu) = \mu - \nu$.
		A pair $(\mu, \nu)$ belongs to $\ker(\varphi)$ if and only if $\mu = \nu$. Since $\mu \in \mathscr{L}(G)$ and $\nu \in \mathscr{L}(H_f)$, this requires 
		\[
		\mu \in \mathscr{L}(G) \cap \mathscr{L}(H_f) = \mathscr{L}\big(\gcd(G, H_f)\big).
		\]
		Thus, $\ker(\varphi) \cong \mathscr{L}\big(\gcd(G, H_f)\big)$, which gives
		$
		\dim(\ker(\varphi)) = \ell\big(\gcd(G, H_f)\big).
		$
		
		The image of $\varphi$ is a subspace of $\mathscr{L}\big(\mathrm{lcm}(G, H_f) - D\big)$, so its dimension is strictly bounded from above by $\ell\big(\mathrm{lcm}(G, H_f) - D\big)$. 
		The Rank-Nullity Theorem yields that
		\[
		\dim(U) \le \ell\big(\gcd(G, H_f)\big) + \ell\big(\mathrm{lcm}(G, H_f) - D\big).
		\]
		
		Now, consider the evaluation map $E : U \to \mathbb{F}_q^n$ defined by $E(\mu, \nu) = \mathrm{ev}_D(\mu)$. 
		We claim that the image of $E$ is exactly $C(D,G) \cap C(D,H_f)$.
		
		First, let $(\mu, \nu) \in U$. By definition, $\mu \in \mathscr{L}(G)$, so $\mathrm{ev}_D(\mu) \in C(D,G)$. Furthermore, since $\mu - \nu \in \mathscr{L}\big(\mathrm{lcm}(G, H_f) - D\big)$, the difference $\mu - \nu$ vanishes at all points in $\supp(D)$. Thus, $\mathrm{ev}_D(\mu) = \mathrm{ev}_D(\nu)$. Since $\nu \in \mathscr{L}(H_f)$, we have $\mathrm{ev}_D(\nu) \in C(D,H_f)$, meaning $E(\mu, \nu) \in C(D,G) \cap C(D,H_f)$.
		
		Conversely, let $c \in C(D,G) \cap C(D,H_f)$. There exist $\mu \in \mathscr{L}(G)$ and $\nu \in \mathscr{L}(H_f)$ such that $c = \mathrm{ev}_D(\mu) = \mathrm{ev}_D(\nu)$. Since $\mu$ and $\nu$ both evaluate to the same vector on $D$, the difference $\gamma = \mu - \nu$ must vanish on $D$. 
		Since we have $(\gamma) \ge -\mathrm{lcm}(G, H_f)$ 
		we obtain $\gamma \in \mathscr{L}\big(\mathrm{lcm}(G, H_f) - D\big)$. Therefore, $(\mu, \nu) \in U$ and $E(\mu, \nu) = c$.
		
		Since $C(D,G) \cap C(D,H_f)$ is the image of the linear map $E$ defined on $U$, we have 
		\[
		\dim\left( C(D,G) \cap \big(\mathrm{ev}_D(f) * C(D,H)\big) \right) \le \dim(U).
		\]
		
		Substituting into our upper bound for $\dim(U)$, we get that
		\[
		\dim\left( C(D,G) \cap \big(\mathrm{ev}_D(f) * C(D,H)\big) \right)
		\le 
		\ell\big(\gcd(G,\, H - (f))\big)
		+ \ell\big(\mathrm{lcm}(G,\, H - (f)) - D\big),
		\]
		which concludes the proof.
	\end{proof}

	\begin{remark}
		The upper bound in Theorem~\ref{thm:generalcurves} is sharp in general, even for curves of genus zero. In particular, there exist choices of divisors $G,H,J$ and evaluation sets $D$ for which the inequality in Theorem~\ref{thm:generalcurves} is attained with equality. 
		To emphasize this point, we provide below an explicit example in which the bound is met exactly, focusing for simplicity on a situation involving only the divisors $G$ and $H$.
	\end{remark}
	
	\begin{example}
		Let $\mathcal{X}=\mathbb{P}^1_{\mathbb{F}_7}$, realized as the projective line $\mathcal{X} : 3x+y-z=0 \subset \mathbb{P}^2_{\mathbb{F}_7}.$
		Let $P_1=(5:0:1), P_2=(0:1:1)$, and let
		\[
		G = 5P_1, 
		\qquad 
		H = 5P_2.
		\]
		Let $D$ be the sum of all $\mathbb{F}_7$-rational points of $\mathcal{X}$ not contained in the union of the supports of $G$, $H$ and $J$, that is, $
		D = \sum_{P \in \mathcal{X}(\mathbb{F}_7)\setminus\{P_1,P_2\}} P. $
		Then $\deg(D)=6$ and $(\supp(G)\cup\supp(H)\cup\supp(J)) \cap \supp(D) = \varnothing .$
		We have that
		\[
		\ell\big(\gcd(G,H)\big)=1,
		\qquad
		\ell\big(\mathrm{lcm}(G,H)-D\big)=\ell(5P_1+5P_2-D)=5.
		\]
		A direct computation using AG codes over $\mathbb{F}_7$ yields
		$
		\dim\!\left( C(D,G)\cap C(D,H) \right)=6.
		$
		Therefore,
		\[
		\dim\!\left( C(D,G)\cap C(D,H) \right)
		=
		\ell\big(\gcd(G,H)\big)
		+
		\ell\big(\mathrm{lcm}(G,H)-D\big),
		\]
		showing that equality holds in Theorem~\ref{thm:generalcurves}. This demonstrates that the upper bound of Theorem~\ref{thm:generalcurves} is sharp even for rational curves.
	\end{example}
	
	\begin{remark}
		The example above shows that the contribution of the term 
		\(\ell\big(\mathrm{lcm}(G,H-(f))-D\big)\) 
		in Theorem~\ref{thm:generalcurves} can be essential in order to achieve equality in the bound. 
		This term accounts for the possible presence of nonzero functions in 
		\(\mathscr{L}(\mathrm{lcm}(G,H-(f)))\) that vanish on the evaluation set~$D$, reflecting a potential failure of injectivity of the evaluation map on this Riemann-Roch space.

		At the same time, it is not necessary that this term vanishes for the intersection to be trivial. 
		Indeed, as we will show in further examples, it may happen that
		\[
		\ell\big(\mathrm{lcm}(G,H-(f))-D\big)>0
		\quad\text{while}\quad
		C(D,G)\cap \big(\mathrm{ev}_D(f)*C(D,H)\big)=\{0\}.
		\]
		In such cases, although nonzero functions vanishing on $D$ exist in 
		\(\mathscr{L}(\mathrm{lcm}(G,H-(f)))\), none of them arise from elements of
		\(\mathscr{L}(G)\cap f\mathscr{L}(H)\).
		Thus, the lcm term measures a possible contribution to the intersection rather than an unavoidable one.
		
		Consequently, while the lcm term is indispensable for obtaining a uniform upper bound, the dimension of
		\(C(D,G)\cap \mathrm{ev}_D(f)*C(D,H)\)
		is not determined solely by the divisor $\gcd(G,H-(f))$, but also depends on finer global compatibility conditions between multiplication by~$f$ and evaluation on~$D$.
	\end{remark}
	
	\begin{proposition}
		Let $\mathcal{P}$ be a proper subset of $\mathcal{X}(\Fq)$ and set $D = \sum_{P \in \mathcal{P}} P$. 
		Let $G,H,J$ be divisors on $\mathcal{X}$ of degree at most 
		$\deg(D)-1$ such that
		$(\supp(G) \cup \supp(H) \cup \supp(J)) \cap \supp(D)=\varnothing.$
		For each $f \in \L(J)$, let $H_f=H - (f)$.
		If 
		$$\L(\textup{lcm}(G,H_f)-D) \cap (\L(G) + \L(H_f) ) = \{0\} \text{ for all } f \in \L(J),$$
		then $C(D,G) \cap C(D,H_f) = C(D,\textup{gcd}(G,H_f))$.
	\end{proposition}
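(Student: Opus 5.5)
The inclusion $C(D,\gcd(G,H_f)) \subseteq C(D,G)\cap C(D,H_f)$ is already given by Proposition~\ref{prop:basicsonGCDLCM}(1). So only the reverse inclusion needs proof, and I will reuse the pair construction from the proof of Theorem~\ref{thm:generalcurves}.

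Fix $f \in \L(J)$ and take $c \in C(D,G)\cap C(D,H_f)$. Choose $\mu \in \L(G)$ and $\nu \in \L(H_f)$ with $c = \mathrm{ev}_D(\mu) = \mathrm{ev}_D(\nu)$. Set $\gamma = \mu - \nu$. Since $G, H_f \le \mathrm{lcm}(G,H_f)$, we have
\[
\L(G) + \L(H_f) \subseteq \L(\mathrm{lcm}(G,H_f)),
\]
so $\gamma \in \L(G)+\L(H_f)$. We also have $\gamma \in \L(\mathrm{lcm}(G,H_f))$.

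Next I show that $\gamma$ lies in $\L(\mathrm{lcm}(G,H_f)-D)$. The function $\gamma$ vanishes at every $P \in \supp(D)$. None of these points lies in the support of $G$ or $H$. One point needs checking: the support of $H_f = H-(f)$ may meet $\supp(D)$ through the zeros of $f$. At such a point $P$, the coefficient of $\mathrm{lcm}(G,H_f)$ is $\max(0, -v_P(f)) = 0$, because $f \in \L(J)$ and $P \notin \supp(J)$ force $v_P(f) \ge 0$. Hence $\mathrm{lcm}(G,H_f)$ has coefficient $0$ at every $P\in\supp(D)$, and vanishing of $\gamma$ at $P$ gives $v_P(\gamma) \ge 1$. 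This is the same step as in the theorem, so $\gamma \in \L(\mathrm{lcm}(G,H_f)-D)$.

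By the hypothesis, $\L(\mathrm{lcm}(G,H_f)-D) \cap (\L(G)+\L(H_f)) = \{0\}$, so $\gamma = 0$ and $\mu = \nu$. Then $\mu \in \L(G)\cap\L(H_f) = \L(\gcd(G,H_f))$, where the last equality holds because the valuation conditions are checked pointwise. Therefore $c = \mathrm{ev}_D(\mu) \in C(D,\gcd(G,H_f))$, which gives the reverse inclusion.

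One subtlety needs care. When $f$ has zeros at points of $D$, the meaning of $\mathrm{ev}_D(\nu)$ for $\nu \in \L(H_f)$ has to match how $\mathrm{ev}_D(f)*C(D,H)$ is interpreted. At such a point $\nu = f\lambda$ has value $0$, which is consistent. The valuation argument above handles this, and I expect it to be the main point to verify; the rest is formal.
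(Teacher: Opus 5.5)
Your proposal is correct and follows the paper's proof. You lift $c$ to $\mu\in\L(G)$ and $\nu\in\L(H_f)$, observe that $\mu-\nu$ lies in $\L(\mathrm{lcm}(G,H_f)-D)\cap(\L(G)+\L(H_f))$, and conclude $\mu=\nu\in\L(\gcd(G,H_f))$. Beyond the paper, you also state the easy inclusion via Proposition~\ref{prop:basicsonGCDLCM}(1), and you check that $\mathrm{lcm}(G,H_f)$ has coefficient $\max(0,-v_P(f))=0$ at points $P\in\supp(D)$ where $f$ vanishes; the paper leaves both steps implicit.
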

	
	\begin{proof}
		Let $c \in C(D,G) \cap C(D,H_f)$. Then there exists
		$\mu \in \L(G), \lambda_f \in \L(H_f)$ such that
		$\mu-\lambda_f \in \L(\textup{lcm}(G,H_f)-D) \cap \L(G) +\L(H_f)$. If $\L(\textup{lcm}(G,H_f)-D) \cap \L(G) +\L(H_f)= \{0\}$ this requires that $\mu = \lambda_f$ and hence $\mu \in \L(\textup{gcd}(G,H_f))$.
	\end{proof}

	The following is an immediate consequence of Theorem \ref{thm:generalcurves}.

	\begin{corollary}
		\label{cor:conditions}
		Let $D, G, H, J$ be as in Theorem~\ref{thm:generalcurves}. Let $f \in \mathscr{L}(J)$ and suppose that
		\begin{enumerate}
			\item[(i)] $\ell\, \big(\gcd(G, H - (f))\big) = 0$, and 
			\item[(ii)] $\ell\, \big(\mathrm{lcm}(G, H - (f)) - D\big) = 0$.
		\end{enumerate}
		Then
		
		$C(D,G)\cap \mathrm{ev}_D(f)*C(D,H)=\{0\}.$
		
	\end{corollary}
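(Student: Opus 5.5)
The corollary follows directly from the upper bound in Theorem~\ref{thm:generalcurves}. Since $D,G,H,J$ satisfy exactly the hypotheses of that theorem, for the given $f\in\mathscr{L}(J)$ we have
\[
\dim\left( C(D,G) \cap \big(\mathrm{ev}_D(f) * C(D,H)\big) \right)
\le \ell\big(\gcd(G,H_f)\big) + \ell\big(\mathrm{lcm}(G,H_f)-D\big),
\]
where $H_f = H-(f)$. Conditions (i) and (ii) say precisely that both terms on the right-hand side vanish. The dimension of the intersection is therefore at most $0$, so the intersection is the zero subspace.

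One point needs a comment: the theorem's statement implicitly takes $f\neq 0$, since $(f)$ and $H_f$ are only defined for nonzero $f$. In the intended application $f$ ranges over nonzero functions. For $f=0$ the conclusion is trivial anyway, because $\mathrm{ev}_D(0)*C(D,H)=\{0\}$. So I will simply note that the case $f=0$ is immediate and apply the theorem for $f\neq 0$.

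There is essentially no obstacle beyond checking that the hypotheses transfer unchanged. The only subtlety is making sure the identification $\mathrm{ev}_D(f)*C(D,H)=C(D,H_f)$ used inside the proof of Theorem~\ref{thm:generalcurves} is available. That identification needs $\supp(J)\cap\supp(D)=\varnothing$, so that $f$ has no poles on $D$, and this is part of the assumptions carried over from the theorem.
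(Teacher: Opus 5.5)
Your proposal is correct and matches the paper. The paper presents the corollary as an immediate consequence of the upper bound in Theorem~\ref{thm:generalcurves}: conditions (i) and (ii) make the right-hand side zero. Your separate treatment of $f=0$ (where $(f)$ is undefined but $\mathrm{ev}_D(0)*C(D,H)=\{0\}$ trivially) is a sensible bit of care that the paper leaves implicit.
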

	
	In this setting, the above conditions can be simplified under suitable hypotheses on the degrees of $J$ and $H$. To this end, we rely on the following well-known results, which can be found in \cite[Theorem~8]{couvreur2017cryptanalysis} or, in a less explicit form, in \cite[Theorem~6]{mumford1970varieties}.
	
	\begin{theorem}\label{th:ShurProductDivisors}
		Let $E, F$ be divisors on $\mathcal{X}$ such that $\deg(E) \geq 2g + 1$ and $\deg(F) \geq 2g$, and let $t$ be a positive integer. Then the following hold.
		\begin{enumerate}
			\item $\mathscr{L}(E) \cdot \mathscr{L}(F) = \mathrm{Span}\{ v w : v \in \mathscr{L}(E),\, w \in \mathscr{L}(F) \} = \mathscr{L}(E + F)$;
			\item $\mathscr{L}(E)^{(t)} = \mathscr{L}(tE)$.
		\end{enumerate}
	\end{theorem}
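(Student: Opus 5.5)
We plan to prove the statement through the classical Castelnuovo--Mumford route: establish surjectivity of multiplication maps of Riemann--Roch spaces, then deduce part (2) by induction on $t$. For part (1), the inclusion $\mathscr{L}(E)\cdot\mathscr{L}(F)\subseteq \mathscr{L}(E+F)$ is immediate, since $(vw)=(v)+(w)\ge -E-F$. The work lies in the reverse inclusion. We may also extend scalars to $\overline{\F}_q$ without loss: Riemann--Roch spaces of $\F_q$-rational divisors commute with base change, and the dimension of a span of products is preserved under extension of scalars. This lets us choose points and sections freely.

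The key tool will be the base-point-free pencil trick. Since $\deg(F)\ge 2g$, Riemann--Roch gives $\ell(F-P)=\ell(F)-1$ for every point $P$, so $|F|$ is base-point free. We therefore pick two sections $s_1,s_2\in\mathscr{L}(F)$ whose associated effective divisors $F+(s_1)$ and $F+(s_2)$ have disjoint supports. Consider the map
\[
\mathscr{L}(E)\oplus\mathscr{L}(E)\longrightarrow \mathscr{L}(E+F),\qquad (a,b)\mapsto as_1+bs_2 .
\]
Its kernel consists of pairs with $as_1=-bs_2$. Because the zero divisors of $s_1$ and $s_2$ are disjoint, this forces $a=s_2u$ and $b=-s_1u$ for some $u\in\mathscr{L}(E-F)$. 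Hence
\[
\dim(\mathrm{image})=2\ell(E)-\ell(E-F).
\]

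Now we compute with Riemann--Roch. Since $\deg E\ge 2g+1>2g-2$, we have $\ell(E)=\deg E-g+1$, and $\ell(E+F)=\deg E+\deg F-g+1$. For $\ell(E-F)$ we need to show
\[
2\ell(E)-\ell(E-F)\ge \ell(E+F),
\]
which is equivalent to $\ell(E-F)\le \deg E-\deg F-g+1$. When $\deg(E-F)>2g-2$, equality holds by Riemann--Roch. The main obstacle is the intermediate range, where $E-F$ has small or negative degree. If $\deg(E-F)<0$, then $\ell(E-F)=0$, but the required inequality may then fail. In that case a single pencil is insufficient, and we intend to rerun the argument with $E$ replaced by a suitable splitting, or with the roles of $E$ and $F$ swapped.

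The cleaner, standard approach is Mumford's criterion for this range. If $\mathcal{G}$ is base-point free and $H^1(\mathcal{F}\otimes\mathcal{G}^{-1})=0$, then $H^0(\mathcal{F})\otimes H^0(\mathcal{G})\to H^0(\mathcal{F}\otimes\mathcal{G})$ is surjective. We apply it with $\mathcal{G}=\mathcal{O}(F)$ and $\mathcal{F}=\mathcal{O}(E)$. The required vanishing is $\ell(K-E+F)=0$; when this fails, we symmetrize by using $\mathcal{G}=\mathcal{O}(E)$, which is base-point free and very ample as $\deg E\ge 2g+1$, and the needed vanishing becomes $\ell(K-F+E)=0$. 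We will organize the proof as a case split on which of $\deg(E-F)$ and $\deg(F-E)$ is at least $-1$. The genuinely hard subcase is when both differences have small degree and special behaviour. There we expect to rely on Green's $H^0$-lemma or on the cited \cite[Theorem~8]{couvreur2017cryptanalysis}, which we would invoke directly if the elementary argument stalls.

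Part (2) then follows by induction on $t$, applying (1) with $E$ and $F=(t-1)E$. Since $\deg((t-1)E)\ge 2g$ for $t\ge2$, this gives $\mathscr{L}(E)^{(t)}=\mathscr{L}(E)\cdot\mathscr{L}((t-1)E)=\mathscr{L}(tE)$; the case $t=1$ is trivial.
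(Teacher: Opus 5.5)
\textbf{There is a genuine gap: the elementary argument never reaches the central case.} Your pencil computation is right, and it is exact. For a base-point-free pencil $\langle s_1,s_2\rangle\subseteq\mathscr{L}(F)$, the image $s_1\mathscr{L}(E)+s_2\mathscr{L}(E)$ has dimension $2\ell(E)-\ell(E-F)=\ell(E+F)-\ell(K-E+F)$. So it fills $\mathscr{L}(E+F)$ exactly when $\ell(K-E+F)=0$, and the Castelnuovo--Mumford criterion with the roles swapped needs $\ell(K+E-F)=0$. Degree alone forces one of these only when $|\deg E-\deg F|\ge 2g-1$. Your case split at $-1$ is the wrong threshold, since one of $\deg(E-F)$, $\deg(F-E)$ is always $\ge 0$.

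The leftover range $|\deg E-\deg F|\le 2g-2$ is the core of the theorem, not a corner case. It contains $F=E$, where both conditions become $\ell(K)=0$, which is false for $g\ge 1$. Already on an elliptic curve, $E=F=3P_\infty$ escapes both criteria: a base-point-free pencil yields only a $5$-dimensional subspace of the $6$-dimensional $\mathscr{L}(6P_\infty)$. Moreover, $F=E$ is exactly the step $t=2$ of your induction for part (2). For $t\ge 3$ your criterion with $\mathcal{G}=\mathcal{O}(E)$ does apply, since $\deg((t-2)E)\ge 2g+1$.

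In the essential case your plan therefore falls back on invoking \cite[Theorem~8]{couvreur2017cryptanalysis}, which is the statement itself. That is all the paper does: it gives no proof and cites \cite[Theorem~8]{couvreur2017cryptanalysis} and Mumford's \cite[Theorem~6]{mumford1970varieties}. Citing is legitimate, but then your elementary part is superfluous.

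For a self-contained proof you need a multiplication lemma that uses more than a pencil. One option is Green's $H^0$-lemma: if $W\subseteq\mathscr{L}(F)$ is base-point free with $\dim W=r+1$ and $\ell(K-E+F)\le r-1$, then $W\cdot\mathscr{L}(E)=\mathscr{L}(E+F)$; for $r=1$ this is your pencil trick. Take $W=\mathscr{L}(F)$, so $r=\deg F-g$, and assume $g\ge 1$ (genus $0$ is trivial). The required bound $\ell(K-E+F)\le \deg F-g-1$ then holds in every case:
\begin{enumerate}
\item If $\deg E-\deg F\ge 2g-1$, it is immediate.
\item If $\deg F>\deg E$, then $\ell(E-F)=0$ and Riemann--Roch reduces the bound to $\deg E\ge 2g$.
\item If $0\le \delta:=\deg E-\deg F\le 2g-2$, Clifford (or Riemann--Roch in the non-special case) gives $\ell(K-E+F)\le g-\delta/2$. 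This suffices for $\delta\ge 1$ because $\deg F\ge 2g$, and for $\delta=0$ because then $\deg F=\deg E\ge 2g+1$.
\end{enumerate}
You would still have to prove or properly cite that lemma over $\overline{\mathbb{F}}_q$, then descend to $\mathbb{F}_q$ as you already indicated.
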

	
	A direct consequence of this theorem is the following result.
	
	\begin{corollary}\label{cor:ShurProductCodes}
		Let $E, F$ be two divisors on the curve $\mathcal{X}$ both
		having support disjoint from $D$. Suppose that $\deg(E)\ge  2g + 1$
		and $\deg(F)\ge 2g$ and let $t$ be a positive integer. Then,
		
		$C(D,E) * C(D,F) = C(D, E + F)$, and
		
		$C(D, E)(t) = C(D,t E)$. 
		
	\end{corollary}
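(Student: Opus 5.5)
The plan is to derive both identities from Theorem~\ref{th:ShurProductDivisors} by applying the evaluation map $\mathrm{ev}_D$, which is multiplicative. Because $\supp(E)$ and $\supp(F)$ are disjoint from $\supp(D)$, every function in $\mathscr{L}(E)$, $\mathscr{L}(F)$ or $\mathscr{L}(E+F)$ is regular at each point of $D$. Hence $\mathrm{ev}_D(v)*\mathrm{ev}_D(w)=\mathrm{ev}_D(vw)$ for all $v\in\mathscr{L}(E)$ and $w\in\mathscr{L}(F)$. I read $C(D,E)*C(D,F)$ as the $\Fq$-span of all such componentwise products, as is standard for Schur products of codes, and $C(D,E)(t)$ as the $t$-fold Schur power $C(D,E)^{*t}$.

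For the first identity, $C(D,E)*C(D,F)$ is spanned by the vectors $\mathrm{ev}_D(v)*\mathrm{ev}_D(w)=\mathrm{ev}_D(vw)$. By linearity of $\mathrm{ev}_D$, this span equals $\mathrm{ev}_D\big(\mathrm{Span}\{vw : v\in\mathscr{L}(E),\,w\in\mathscr{L}(F)\}\big)=\mathrm{ev}_D(\mathscr{L}(E)\cdot\mathscr{L}(F))$. Theorem~\ref{th:ShurProductDivisors}(1) identifies $\mathscr{L}(E)\cdot\mathscr{L}(F)$ with $\mathscr{L}(E+F)$, so the span is $\mathrm{ev}_D(\mathscr{L}(E+F))=C(D,E+F)$.

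For the Schur power, I would argue by induction on $t$, or apply Theorem~\ref{th:ShurProductDivisors}(2) directly. The same computation gives $C(D,E)^{*t}=\mathrm{ev}_D(\mathscr{L}(E)^{(t)})$, and part (2) of the theorem gives $\mathrm{ev}_D(\mathscr{L}(E)^{(t)})=\mathrm{ev}_D(\mathscr{L}(tE))=C(D,tE)$. The support of $tE$ is still disjoint from $D$, so this is a valid AG code.

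The main point needing care is the bookkeeping rather than any real obstacle. Products and spans must commute with evaluation; this holds because $\mathrm{ev}_D$ is an $\Fq$-algebra homomorphism on functions regular at $\supp(D)$. No injectivity of $\mathrm{ev}_D$ (that is, no condition $\deg<\deg D$) is needed, since we only compare images.
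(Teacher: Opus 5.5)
Your proposal is correct and is essentially the paper's argument. The paper states this corollary as a direct consequence of Theorem~\ref{th:ShurProductDivisors} without further proof, and your step of pushing $\mathscr{L}(E)\cdot\mathscr{L}(F)=\mathscr{L}(E+F)$ and $\mathscr{L}(E)^{(t)}=\mathscr{L}(tE)$ through the linear, multiplicative map $\mathrm{ev}_D$ is the intended reasoning; reading $*$ as the span of componentwise products is the right interpretation.
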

	
	\begin{proposition}\label{prop:shursimplified}
		Let $\mathcal{P}$ be a proper subset of $\mathcal{X}(\Fq)$ and set $D = \sum_{P \in \mathcal{P}} P$. 
		Let $G,H,J$ be divisors on $\mathcal{X}$ of degree at most 
		$\deg(D)-1$ 
		whose supports are pairwise disjoint from the support of $D$. 
		Suppose that $2g \leq \deg(J), \deg(H)$, and that 
		$\max(\deg(J), \deg(H)) \geq 2g+1$. Then
		$$C(D,G)\cap \mathrm{ev}_D(f)*C(D,H)=\{0\}
		\text{ for all }f \in \mathscr{L}(J),$$
		if and only if
		$C(D,G)\cap C(D,H+J)=\{0\}$.
	\end{proposition}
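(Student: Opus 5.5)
The plan is to compare the union of the subspaces $\mathrm{ev}_D(f)*C(D,H)$, as $f$ ranges over $\mathscr{L}(J)$, with the single code $C(D,H+J)$. Corollary~\ref{cor:ShurProductCodes} gives the identity $C(D,J)*C(D,H)=C(D,H+J)$; here the degree hypotheses are used, with the roles of $E,F$ assigned to $J,H$ according to which of them has degree at least $2g+1$. The two directions of the equivalence are then handled separately.

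\emph{($\Leftarrow$).} Suppose $C(D,G)\cap C(D,H+J)=\{0\}$ and fix $f\in\mathscr{L}(J)$. For any $\lambda\in\mathscr{L}(H)$ we have $f\lambda\in\mathscr{L}(H+J)$, since
\[
(f\lambda)+H+J=\big((f)+J\big)+\big((\lambda)+H\big)\ge 0 .
\]
Moreover $\mathrm{ev}_D(f)*\mathrm{ev}_D(\lambda)=\mathrm{ev}_D(f\lambda)$. Hence $\mathrm{ev}_D(f)*C(D,H)\subseteq C(D,H+J)$, and the intersection with $C(D,G)$ is trivial. This direction does not need the Schur-product theorem.

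\emph{($\Rightarrow$).} Suppose $C(D,G)\cap \mathrm{ev}_D(f)*C(D,H)=\{0\}$ for every $f\in\mathscr{L}(J)$. By Corollary~\ref{cor:ShurProductCodes}, every element of $C(D,H+J)$ can be written as a finite sum
\[
\sum_i \mathrm{ev}_D(f_i)*\mathrm{ev}_D(\lambda_i),\qquad f_i\in\mathscr{L}(J),\ \lambda_i\in\mathscr{L}(H).
\]
So $C(D,H+J)$ is the sum of the subspaces $\mathrm{ev}_D(f)*C(D,H)$. The task is to pass from "trivial intersection with each summand" to "trivial intersection with the sum". This is the main obstacle, because trivial pairwise intersections do not in general give a trivial intersection with a sum.

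My first attempt would exploit how the hypothesis is quantified. The condition is imposed for all $f\in\mathscr{L}(J)$, which includes every $\Fq$-linear combination of a basis $f_1,\dots,f_r$, and also every $f\in\mathscr{L}(J)$ whose divisor is adapted to $G$. I would try to use Riemann--Roch, with $\deg(J)\ge 2g$ so that $\mathscr{L}(J)$ is base-point free, to choose $f$ so that $\mathscr{L}(H)f$ captures any prescribed element of $\mathscr{L}(H+J)$ that lies in $\mathscr{L}(G)+\mathscr{L}(H+J-D)$. Concretely, given $0\neq c=\mathrm{ev}_D(\mu)=\mathrm{ev}_D(\nu)$ with $\mu\in\mathscr{L}(G)$ and $\nu\in\mathscr{L}(H+J)$, I would seek $f\in\mathscr{L}(J)$ with $\nu/f\in\mathscr{L}(H)$, that is $(f)\le(\nu)+H$. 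Such an $f$ exists when $(\nu)+H+J$ dominates an effective divisor linearly equivalent to $J$ in the right way. I expect this to need the degree bounds and possibly an adjustment of $\nu$ by functions vanishing on $D$.

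If this route fails in general, the fallback is to prove the statement under the reading that the family is the one generated by all $f\in\mathscr{L}(J)$ together with the span identity. In that case the essential content is the inclusion $C(D,H+J)=\sum_f \mathrm{ev}_D(f)*C(D,H)$, combined with the ($\Leftarrow$) argument above.
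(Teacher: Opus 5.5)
Your ($\Leftarrow$) direction is correct and coincides with the paper's ``conversely'' step: $f\mathscr{L}(H)\subseteq\mathscr{L}(H+J)$, so every $\mathrm{ev}_D(f)*C(D,H)$ lies in $C(D,H+J)$, and no degree hypotheses are needed.

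The ($\Rightarrow$) direction, however, is not proved. Your factorization route needs every relevant $\nu\in\mathscr{L}(H+J)$ to split as $\nu=f\lambda$ with $f\in\mathscr{L}(J)$ and $\lambda\in\mathscr{L}(H)$. Equivalently, the effective divisor $(\nu)+H+J$ must be a sum of two effective divisors, one of them linearly equivalent to $J$. This fails whenever $(\nu)+H+J$ is a single place of degree $\deg(H)+\deg(J)$, for instance an irreducible polynomial in genus $0$. When $\deg(H+J)<\deg(D)$ you also cannot ``adjust $\nu$ by functions vanishing on $D$'', since then $\mathscr{L}(H+J-D)=\{0\}$. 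Your fallback replaces the hypothesis by trivial intersection with the span $\sum_{f}\mathrm{ev}_D(f)*C(D,H)=C(D,H+J)$, which is just the conclusion restated.

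The obstacle you isolated is exactly where the paper's own proof goes wrong. It writes $C(D,H+J)=\{\mathrm{ev}_D(f)*C(D,H) : f\in\mathscr{L}(J)\}$, treating the Schur product $C(D,J)*C(D,H)$, which is a span, as the union of the subspaces $\mathrm{ev}_D(f)*C(D,H)$. In fact the forward implication is false under the stated hypotheses, so your gap cannot be closed.

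Here is a counterexample. Take $\mathcal{X}=\mathbb{P}^1$ over $\Fq$ and an irreducible quadratic $p\in\Fq[x]$ whose zero place is $Q$ (of degree $2$). Let $D$ be a sum of $n\ge 3$ affine rational points, $H=J=P_\infty$, and $G=2P_\infty-Q$. All hypotheses hold: $g=0$, and the degrees $0,1,1$ are at most $n-1$. Then $\mathscr{L}(G)=\Fq\, p$, so $C(D,G)=\langle \mathrm{ev}_D(p)\rangle$ is a nonzero subspace of $C(D,2P_\infty)=C(D,H+J)$. But suppose $\mathrm{ev}_D(f\lambda)=t\,\mathrm{ev}_D(p)$ with $\deg f,\deg\lambda\le 1$. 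Both sides have degree at most $2<n$, so $f\lambda=tp$, and this forces $t=0$ because $p$ has no linear factor. Hence every per-$f$ intersection is trivial while $C(D,G)\cap C(D,H+J)\neq\{0\}$.

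The paper's Example~\ref{ex:zerointer}, which satisfies all hypotheses of the proposition, is also a counterexample. There $\dim C(D,G)=2$ and $\dim C(D,H+J)=\ell(H+J)-\ell(H+J-D)\ge 10-1=9$ inside $\mathbb{F}_7^{10}$. So these codes must meet nontrivially, even though all the intersections $C(D,G)\cap \mathrm{ev}_D(f)*C(D,H)$ are asserted to be trivial. Only the ($\Leftarrow$) direction, which you proved, survives.
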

	
	\begin{proof}
		By Corollary~\ref{cor:ShurProductCodes}, we have
		$C(D,J)*C(D,H)=C(D,H+J).$
		Since
		$C(D,J)=\{\mathrm{ev}_D(f) : f\in \mathscr{L}(J)\},$
		it follows that
		$C(D,H+J)=\{\mathrm{ev}_D(f)*C(D,H) : f\in \mathscr{L}(J)\}.$
		Therefore,
		\[
		C(D,G)\cap C(D,H+J)
		= C(D,G)\cap \{\mathrm{ev}_D(f)*C(D,H) : f\in \mathscr{L}(J)\},
		\]
		which is trivial if $C(D,G)\cap \mathrm{ev}_D(f)*C(D,H)=\{0\}$ for all $f \in \mathscr{L}(J)$. 
		Conversely, if $C(D,G)\cap C(D,H+J)=\{0\}$, then as each subspace 
		$\mathrm{ev}_D(f)*C(D,H)$ is contained in $C(D,H+J)$, we have that
		\[
		C(D,G)\cap \mathrm{ev}_D(f)*C(D,H)=\{0\}
		\quad \text{for each } f \in \mathscr{L}(J).
		\]
		This concludes the proof.
	\end{proof}

	\begin{proposition}\label{prop:maxmin}
		Let $D_1 = \sum_{P \in \mathcal{X}} m_P P$ and $D_2 = \sum_{P \in \mathcal{X}} m_P' P$ be divisors. Then
		\[
		\gcd(D_1, D_2) + \mathrm{lcm}(D_1, D_2) = D_1 + D_2.
		\]
	\end{proposition}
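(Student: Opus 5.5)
The identity is purely coefficientwise, so I will reduce it to a statement about integers. By the definitions of sum, $\gcd$ and $\mathrm{lcm}$ of divisors recalled in Section~\ref{sec:pre}, the coefficient of a point $P$ in $\gcd(D_1,D_2)+\mathrm{lcm}(D_1,D_2)$ is $\min(m_P,m_P')+\max(m_P,m_P')$. The coefficient of $P$ in $D_1+D_2$ is $m_P+m_P'$. It therefore suffices to show that for every pair of integers $a,b$,
\[
\min(a,b)+\max(a,b)=a+b .
\]

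To prove this integer identity, I will split into the cases $a\le b$ and $a>b$. If $a\le b$, then $\min(a,b)=a$ and $\max(a,b)=b$, so the sum is $a+b$. If $a>b$, then $\min(a,b)=b$ and $\max(a,b)=a$, and the sum is $b+a=a+b$. Applying this with $a=m_P$ and $b=m_P'$ at every point $P\in\mathcal{X}$ shows that the two divisors have the same coefficient at every point, hence they are equal.

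One small check remains: the expressions involved must genuinely be divisors, that is, have only finitely many nonzero coefficients. If $m_P=m_P'=0$, then $\min$ and $\max$ both vanish at $P$. So $\gcd(D_1,D_2)$ and $\mathrm{lcm}(D_1,D_2)$ are supported on $\supp(D_1)\cup\supp(D_2)$, which is finite, and the comparison of formal sums is legitimate. I do not expect any step to be an obstacle; the only thing needing care is stating the coefficientwise nature of the operations explicitly.
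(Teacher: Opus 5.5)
Your proof is correct and matches the paper's argument: both reduce the identity to the coefficientwise equality $\min(m_P,m_P')+\max(m_P,m_P')=m_P+m_P'$ at every point $P$. Your case split and your check that $\gcd$ and $\mathrm{lcm}$ have finite support are harmless details that the paper leaves implicit.
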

	
	\begin{proof}
		This follows directly from the definitions of $\gcd$ and $\mathrm{lcm}$ of divisors. Indeed, we have
		\[
		\begin{split}
			\gcd(D_1, D_2) + \mathrm{lcm}(D_1, D_2)
			&= \sum_{P} \min(m_P, m_P') P + \sum_{P} \max(m_P, m_P') P \\
			&= \sum_{P} (m_P + m_P') P \\
			&= D_1 + D_2.\qedhere
		\end{split}
		\]
	\end{proof}
	
	The following observations are a consequence of combining Propositions~\ref{prop:shursimplified} and~\ref{prop:maxmin}.
	They give constraints on the degrees of $G,H,J$ for which the conditions of Corollary \ref{cor:conditions} hold.

	\begin{proposition}\label{propn2+g-2}
		Let $D, G, H, J$ be defined as in Theorem~\ref{thm:generalcurves}. Suppose that we have $\deg(J), \deg(H) \geq 2g$ and 
		$\max(\deg(J), \deg(H)) \geq 2g+1$. Let $\deg(D) = n$.
		Suppose that $\ell\, \big(\gcd(G, H - (f))\big) = 0$ and 
		$\ell\, \big(\mathrm{lcm}(G, H - (f)) - D\big) = 0$ for all $f \in \mathscr{L}(J)$. 
		Then
		\[
		\deg(G) + \deg(H) + \deg(J) \leq n + 2g - 2.
		\]   
	\end{proposition}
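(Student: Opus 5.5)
The plan is to combine the two vanishing hypotheses with Proposition~\ref{prop:shursimplified}. This turns the statement into a dimension count for the two AG codes $C(D,G)$ and $C(D,H+J)$, after which Riemann--Roch gives the degree inequality.

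\textbf{Step 1: reduce to a trivial intersection.} For each $f\in\mathscr{L}(J)$, hypotheses (i) and (ii) are exactly the conditions of Corollary~\ref{cor:conditions}, so $C(D,G)\cap \mathrm{ev}_D(f)*C(D,H)=\{0\}$. The degree assumptions $\deg(J),\deg(H)\ge 2g$ and $\max(\deg J,\deg H)\ge 2g+1$ are those of Proposition~\ref{prop:shursimplified}. That proposition then gives
\[
C(D,G)\cap C(D,H+J)=\{0\}, \qquad\text{hence}\qquad \dim C(D,G)+\dim C(D,H+J)\le n .
\]

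\textbf{Step 2: lower bounds for the dimensions.} Since $\deg(G)\le n-1$, the evaluation map is injective on $\mathscr{L}(G)$, so $\dim C(D,G)=\ell(G)\ge \deg(G)-g+1$ by Riemann--Roch. If also $\deg(H+J)\le n-1$, then likewise $\dim C(D,H+J)=\ell(H+J)\ge \deg(H)+\deg(J)-g+1$. Adding these and using Step~1 gives $\deg G+\deg H+\deg J-2g+2\le n$, which is the claim.

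\textbf{Step 3: the case $\deg(H+J)\ge n$.} This is the step I expect to be the main obstacle, and I would argue by contradiction. Suppose $\deg G+\deg H+\deg J\ge n+2g-1$.

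First I would rule out $\ell(G)=0$. In that case $\deg(G)\le g-1$, so $\deg(H+J)\ge n+g$. I would then use the sharper information from (i) and (ii) directly. Because $(f)$ has degree $0$, Proposition~\ref{prop:maxmin} gives
\[
\deg\gcd(G,H_f)+\deg\mathrm{lcm}(G,H_f)=\deg G+\deg H .
\]
By Riemann--Roch, (i) forces $\deg\gcd(G,H_f)\le g-1$ and (ii) forces $\deg\mathrm{lcm}(G,H_f)\le n+g-1$. Hence $\deg G+\deg H\le n+2g-2$, and I would try to combine this with $\deg(J)\ge 2g$ to exclude this subcase.

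When $\ell(G)\ge1$, the intersection can only be trivial if $C(D,H+J)\ne\mathbb{F}_q^n$. I would use the duality $C(D,A)^\perp=C_\Omega(D,A)$ to write, for $A=H+J$,
\[
\dim C(D,A)=n-\ell(K-A+D)+\ell(K-A).
\]
If $\deg A\ge n+2g-1$, then $\deg(K-A+D)<0$, so $\dim C(D,A)=n$, a contradiction. In the intermediate range $n\le\deg A\le n+2g-2$, I would bound $\ell(K-A+D)$ from above. The first tool is Clifford's theorem, applied when $K-A+D$ is special. Failing that, I would use Riemann--Roch in the form $\ell(K-A+D)\le \deg(K-A+D)+1$ for effective classes. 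The goal is to show $\dim C(D,G)+\dim C(D,H+J)>n$ whenever the degree sum exceeds $n+2g-2$.

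If this estimate turns out too weak, the fallback is to replace $H+J$ by a smaller divisor. Concretely, I would replace $J$ by some $J'\le J$ with $\mathscr{L}(J')\subseteq\mathscr{L}(J)$, chosen so that $\deg(H+J')=n-1$, and rerun Step~2 for $G$ and $H+J'$.
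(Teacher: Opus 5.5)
\textbf{There is a genuine gap, and the statement itself is false.} Your Steps 1--2 reproduce the paper's proof: Corollary~\ref{cor:conditions} and Proposition~\ref{prop:shursimplified} give $C(D,G)\cap C(D,H+J)=\{0\}$, and Riemann--Roch does the rest. You also correctly notice something the paper glosses over: $\dim C(D,H+J)\ge \deg(H+J)+1-g$ is false once $\deg(H+J)\ge n+g$.

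The decisive gap, however, is in Step~1. It uses the implication ``trivial intersection with every $\mathrm{ev}_D(f)*C(D,H)$ $\Rightarrow$ trivial intersection with $C(D,H+J)$'' from Proposition~\ref{prop:shursimplified}, and that implication is false. The code $C(D,H+J)=C(D,J)*C(D,H)$ is the \emph{span} $\sum_f \mathrm{ev}_D(f)*C(D,H)$, not the union. A subspace can meet every summand trivially and still meet their sum nontrivially.

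Here is a counterexample to the proposition.
\begin{itemize}
\item Take $\mathcal{X}=\mathbb{P}^1$ over $\mathbb{F}_q$ with $q\ge 4$, so $g=0$.
\item Let $x$ be a coordinate with zero $Q_0$ and pole $P_\infty$, and let $R$ be a place of degree $2$.
\item Let $D$ be the sum of the $n=q-1$ rational points other than $Q_0,P_\infty$.
\item Put $H=Q_0$, $J=P_\infty$ and $G=aQ_0+bP_\infty-R$ with $a,b\ge 1$, $a+b=n-1$. This is an $\mathbb{F}_q$-rational divisor with support disjoint from $D$.
\end{itemize}
Each nonzero $f\in\mathscr{L}(J)$ has the form $u+vx$. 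So $(f)$ is either $0$ or $Q-P_\infty$ with $Q$ rational, and $f$ never vanishes at $R$. Check the three cases: $f$ constant, $Q=Q_0$, and $Q\in\mathrm{supp}(D)$. In each, $\deg\gcd(G,H-(f))=-1$ and $\mathrm{lcm}(G,H-(f))=aQ_0+bP_\infty$, which has degree $n-1$. Hence both vanishing conditions hold for all $f$, and so do all the degree hypotheses.

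Yet $\deg G+\deg H+\deg J=n-1>n-2$. Correspondingly, $\dim C(D,G)+\dim C(D,H+J)=(n-2)+3>n$, which forces $C(D,G)\cap C(D,H+J)\neq\{0\}$. At the same time, each $C(D,G)\cap \mathrm{ev}_D(f)*C(D,H)$ is trivial by Theorem~\ref{thm:generalcurves}. The paper's own Example~\ref{ex:zerointer} shows the same clash: there $\dim C(D,G)+\dim C(D,H+J)\ge 2+9>10$, even though every $C(D,G)\cap \mathrm{ev}_D(f)*C(D,H)$ is asserted to be trivial.

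\textbf{Step~3 cannot be completed.} As written it is a list of tactics rather than an argument, and the two concrete ones fail on their own terms.
\begin{itemize}
\item A \emph{lower} bound $\deg J\ge 2g$ cannot turn $\deg G+\deg H\le n+2g-2$ into a bound that also contains $\deg J$; it points the wrong way.
\item The fallback $J'\le J$ with $\deg(H+J')=n-1$ only gives $\deg G\le 2g-1$. In the case $\deg(H+J)\ge n$ the target needs $\deg G\le n+2g-2-\deg(H+J)\le 2g-2$.
\end{itemize}

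\textbf{What survives.} The hypotheses genuinely yield only the weaker bound $\deg G+\deg H\le n+2g-2$, which you derive exactly this way from Proposition~\ref{prop:maxmin} and Riemann--Roch. The stated inequality needs an extra assumption, such as $C(D,G)\cap C(D,H+J)=\{0\}$ together with $\deg(H+J)\le n-1$. Under that assumption your Step~2 is a complete proof.
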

	
	\begin{proof}
		By hypothesis, for any $f \in \mathscr{L}(J)$, the upper bound provided by Theorem~\ref{thm:generalcurves} is zero. Therefore, the dimension of the intersection is zero, meaning:
		\[
		C(D,G)\cap \mathrm{ev}_D(f)*C(D,H)=\{0\} \quad \text{for all } f \in \mathscr{L}(J).
		\]
		
		By Proposition \ref{prop:shursimplified} and Corollary \ref{cor:conditions}, this implies that the intersection with the full product space is also trivial, i.e., 
		$C(D,G)\cap C(D,H+J)=\{0\}.$
		
		As $C(D,G)$ and $C(D,H+J)$ are subspaces of $\mathbb{F}_q^n$ that intersect trivially, the dimension of their sum is simply the sum of their dimensions, which cannot exceed $n$:
		\[
		\dim(C(D,G)) + \dim(C(D,H+J)) \le n.
		\]
		
		We have $\dim(C(D,G)) \ge \deg(G) + 1 - g$ and $\dim(C(D,H+J)) \ge \deg(H+J) + 1 - g$, by the Riemann-Roch theorem. 
		Substituting these lower bounds into the previous inequality yields:
		\begin{eqnarray*}
			n &\geq& (\deg(G) + 1 - g) + (\deg(H+J) + 1 - g) \\
			&=& \deg(G) + \deg(H) + \deg(J) + 2 - 2g.
		\end{eqnarray*}
		
		Rearranging the terms, we obtain the desired inequality:
		\[
		\deg(G) + \deg(H) + \deg(J) \leq n + 2g - 2,
		\]
		which concludes the proof.
	\end{proof}
	
	\begin{remark}
		Let $D, G, H, J$ be defined as in Theorem~\ref{thm:generalcurves}, and assume that $\deg(G) \geq 2g-1$, that $\deg(J), \deg(H) \geq 2g$, $C(D,G)\cap C(D,H+J)=\{0\}$, and that
		$\max(\deg(J), \deg(H)) \geq 2g+1$. Let $\deg(D) = n$.
		Assume now that
		$
		\deg(G) + \deg(H) + \deg(J) = n\text{ and }\ell\big(\mathrm{lcm}(G, H+J) - D\big) = 0.
		$
		Again, using Proposition \ref{prop:maxmin} and the Riemann-Roch theorem, we deduce that $\ell\big(\gcd(G, H+J)\big) = 0.$
		By the Riemann-Roch theorem, this forces 
		$\deg\big(\gcd(G, H+J)\big) \leq g - 1$. 
		On the other hand, using Proposition~\ref{prop:maxmin} and the assumption $\ell\big(\mathrm{lcm}(G, H+J) - D\big) = 0$, we obtain
		\[
		n - (g - 1)
		\;\leq\;
		\deg\big(\mathrm{lcm}(G, H+J)\big)
		\;\leq\;
		n + g - 1.
		\]
		
		In this case, the overlap between $G$ and $H+J$ is necessarily small (of degree at most $g - 1$), which constrains how the degrees of $G$, $H$, and $J$ can be distributed.
	\end{remark}

	\begin{notation}
		For the remainder of this section, let $D, G, H, J$ be divisors on an elliptic curve $\mathcal{E}$, with $\supp(D)$ disjoint from $\supp(G)$, $\supp(H)$, and $\supp(J)$.  Let $V$ be a parity-check matrix of the code $C(D,G)$, let $W$ be a generator matrix of the code $C(D,H)$, and let $C = C(D,J)$. Then, we have 
		\[
		C_V^\perp = C(D,G), \qquad C_W = C(D,H), \qquad C = C(D,J).
		\]
	\end{notation}
	
	We now provide an example of codes arising from elliptic (that is, $g=1$) curves that have trivial intersection but do not satisfy some of the conditions in Corollary~\ref{cor:conditions}. Specifically, these codes satisfy $\deg(\gcd(G, H+J)) = 0$ and $C(D,G)\cap \, \mathrm{ev}_D(f)*C(D,H)=\{0\}$, but $\ell\big(\mathrm{lcm}(G, H+J)\big) \ne 0$.
	
	\subsection{Elliptic curves}
	If the curve $\mathcal{X}=\mathcal{E}$ is elliptic, the conditions (i) and (ii) of 
	Corollary \ref{cor:conditions}
	can be expressed as follows, since in this case a canonical divisor is known to be linearly equivalent to $W=0$. Let $\deg(G)=n$.
	\begin{enumerate}
		\item[$(a)$] For every $f \in \L(J)$, $\deg(\gcd(G, H-(f))\big) \le 0$.
		
		\item[$(b)$] For every $f\in\L(J)$, $ \deg(\mathrm{lcm}(G,H-(f))) \le n$.
	\end{enumerate}
	
	By Corollary~\ref{cor:ShurProductCodes}, when 
	$\deg(J), \deg(H) \geq 2$ and $\max\{\deg(J), \deg(H)\} \geq 3$, 
	conditions $(a)$ and~$(b)$ can be rewritten as follows.
	
	\begin{itemize}
		\item[$(a')$] 
		$\deg(\gcd(G,H+J)) \le 0$.
		
		\item[$(b')$] 
		$\deg(\mathrm{lcm}(G,H+J)) \le n$.
	\end{itemize}

	\begin{example}
		Let $\mathcal{E}$ be the elliptic curve $\mathcal{E}:y^2+4x^3+4x+4=0$. \\Let $G=2\,(0:4:1)-(0:1:0)$, $H=(0:4:1)$, $J=4\,(0:1:0)$ and 
		\begin{equation*}
			D=(0 : 1 : 1)+(2 : 4 : 1)+(2 : 1 : 1)+(4 : 3 : 1)+(4 : 2 : 1)+(3 : 4 : 1)+(3 : 1 : 1).
		\end{equation*}
		Note that $(\supp(G)\cup\supp(H)\cup \supp(J))= \varnothing$. Observe that, for any $f\in J$, we have $\ell(\textup{gcd}(G,H-(f)))=0$, $\ell(\textup{lcm}(G,H-(f))-D)=0$ and
		\begin{equation*}
			C(D,G)\cap \mathrm{ev}_D(f)*C(D,H)=\{0\},
		\end{equation*}
		in line with Theorem \ref{thm:generalcurves}.
	\end{example}

	We present an explicit construction of divisors $G$, $H$, and $J$ with small support on an elliptic curve such that conditions $(a')$ and $(b')$ are satisfied.
	
	\begin{example}
		Let $\mathcal{E}$ be the elliptic curve defined by the polynomial $y^2=x^3+x+1\in\mathbb{F}_{11}[x,y]$. Setting $P_1=(0 : 1 : 1)$ and $P_2=(0 : 1 : 0)$, consider the divisors
		\[
		\begin{split}
			G&=-2P_1+ 5P_2,\\
			H&=3P_1,\\
			J&=3P_1+P_2,\\
			D&=\sum_{P\in\mathcal{E}(\mathbb{F}_{11})\setminus\{P_1,P_2\}}P.
		\end{split}
		\] 
		It can be checked that $\deg(D)=n=12$, that 
		$C_V=C(D,G)^\perp$ is an $\mathbb{F}_{11}$-$[12,9,3]$ code, that $C_W=C(D,H)$ is an $\mathbb{F}_{11}$-$[12,3,9]$ code, and that $C=C(D,J)$ is an $\mathbb{F}_{11}$-$[12,4,8]$ code.
		
		The trivial intersection property of these codes is entirely governed by the degrees of $G$ and $H+J$. We have that
		\[
		\begin{split}
			\deg(\gcd(G,H+J)) &= \deg(-2P_1+P_2) = -1 < 0, \text{and} \\
			\deg(\mathrm{lcm}(G,H+J)) &= \deg(6P_1+5P_2) = 11 = n-1.
		\end{split}
		\]
		These conditions guarantee that $\dim(C_V^\perp \cap (c*C_W)) = 0$ for all $c \in C$. It follows that $\mathcal{C}=\Phi_{V,W}(C)$ yields an $\F_{11}$-$[9\times 3, 4,3]$ rank-metric code.
	\end{example}

	In the next proposition, a triple of elliptic codes satisfying the above properties can be constructed by suitably adapting the construction in \cite[Section~6]{bhowmick2025}. We omit the proof, as it follows directly from Theorem~\ref{thm:generalcurves}. Indeed, for the divisors given in Proposition \ref{th:abconstr}, one simply observes that
	$
	\ell(\operatorname{lcm}(G,H+J)-D)=0,
	$
	which implies
	\[
	\dim\bigl(C(D,G)\cap C(D,H+J)\bigr)
	= \dim\bigl(C(D,\gcd(G,H+J))\bigr).
	\]

	\begin{proposition}\label{th:abconstr}
		Let $\mathcal{E}$ be a nonsingular elliptic curve defined over $\mathbb{F}_q$, and let $P_\infty$ denote the point at infinity. Let $P_0 \in \mathcal{E}(\mathbb{F}_q)$ be a rational point with $P_0 \neq P_\infty$. Let $a,b$ be positive integers satisfying $a+\ell < b-\ell < n$. Define the divisor
		\[
		D = \sum_{P \in \mathcal{E}(\mathbb{F}_q) \setminus \{P_0,P_\infty\}} P,
		\]
		with $n = \deg(D)$ and let
		\[
		G = (b-\ell)P_0 - aP_\infty, \qquad
		H = (n-b)P_\infty, \qquad
		J = (a+\ell)P_0.
		\]
		
		Then $\dim\bigl(C(D,G) \cap C(D,H+J)\bigr) = \ell$.
	\end{proposition}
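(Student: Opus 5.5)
The plan is to apply Theorem~\ref{thm:generalcurves} directly to the pair $G$ and $H+J$, taking $f=1$. Concretely, I would set the theorem's ``$H$'' equal to $H+J$ and its ``$J$'' equal to $0$, so that $f=1\in\L(0)$ and $H_f=H+J$. Then both the lower and upper bounds are explicit Riemann--Roch dimensions, which can be computed on the elliptic curve.

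\emph{Hypotheses of Theorem~\ref{thm:generalcurves}.} The supports of $G$ and $H+J$ lie in $\{P_0,P_\infty\}$, which is disjoint from $\supp(D)$. Since $P_0,P_\infty\notin\supp(D)$, the set $\supp(D)$ is a proper subset of $\mathcal{E}(\Fq)$. The degrees satisfy
\[
\deg(G)=b-\ell-a<n, \qquad \deg(H+J)=n-b+a+\ell<n,
\]
where the second inequality uses $a+\ell<b$. So both divisors have degree at most $\deg(D)-1$.

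\emph{Computing $\gcd$ and $\mathrm{lcm}$.} Write $H+J=(a+\ell)P_0+(n-b)P_\infty$. Compare coefficients with $G=(b-\ell)P_0-aP_\infty$:
\begin{itemize}
\item at $P_0$, we have $a+\ell<b-\ell$;
\item at $P_\infty$, we have $-a<0<n-b$, using $b<n$.
\end{itemize}
Hence
\[
\gcd(G,H+J)=(a+\ell)P_0-aP_\infty, \qquad \mathrm{lcm}(G,H+J)=(b-\ell)P_0+(n-b)P_\infty .
\]
Their degrees are $\ell$ and $n-\ell$, consistent with Proposition~\ref{prop:maxmin}. Therefore $\deg(\mathrm{lcm}(G,H+J)-D)=-\ell$.

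\emph{Upper bound.} For $\ell\ge 1$ this degree is negative, so $\ell(\mathrm{lcm}(G,H+J)-D)=0$. The two bounds of Theorem~\ref{thm:generalcurves} then coincide, and the intersection has dimension $\ell(\gcd(G,H+J))$. This dimension can also be read as $\dim C(D,\gcd(G,H+J))$, because the evaluation map is injective when the degree $\ell<n$.

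\emph{Lower bound.} The divisor $\gcd(G,H+J)$ has degree $\ell\ge1>2g-2=0$. On a genus-one curve, Riemann--Roch gives $\ell(\gcd(G,H+J))=\ell-g+1=\ell$, which is the claim.

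\emph{Main obstacle: the degenerate case $\ell=0$.} Here both $\gcd(G,H+J)=aP_0-aP_\infty$ and $\mathrm{lcm}(G,H+J)-D$ have degree $0$. Their $\ell$-values are $0$ or $1$ according to whether they are principal. The first is principal exactly when $a(P_0-P_\infty)$ is trivial in the group $\mathcal{E}(\Fq)$, that is, when $P_0$ has order dividing $a$. In that case the statement can fail, so it needs either $\ell\ge1$ (which I would treat as the intended reading) or an extra hypothesis: $aP_0\neq \mathcal{O}$ in the group law, together with $\mathrm{lcm}(G,H+J)\not\sim D$. Under that hypothesis both terms vanish and the dimension is $0=\ell$. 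I would state this caveat explicitly rather than try to prove the $\ell=0$ case unconditionally.
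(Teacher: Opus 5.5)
Your proposal is correct and is essentially the paper's own argument made explicit: the paper also applies Theorem~\ref{thm:generalcurves} to $G$ and $H+J$, observes that $\ell(\mathrm{lcm}(G,H+J)-D)=0$, and reduces to $\ell(\gcd(G,H+J))$, which you rightly evaluate as $\ell$ by Riemann--Roch. Your caveat for $\ell=0$ is well founded: the paper's one-line argument overlooks that both degree-zero terms can be nonzero there (it only adds a non-specialty assumption for $\ell=0$ after Corollary~\ref{cor:abconstr}), and your two extra conditions are in fact exactly what is needed in that case.

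One small point: $b<n$, which you invoke at $P_\infty$, is not literally a hypothesis (only $b-\ell<n$ is). Your $\gcd$/$\mathrm{lcm}$ formulas need only $b\le n+a$. If instead $b>n+a$, which the stated inequalities permit (e.g.\ $a=1$, $\ell=3$, $b=n+2$, $n\ge 6$), then $H+J\le G$ and the intersection has dimension $n-b+a+\ell<\ell$. So this implicit assumption, which the paper also makes, is genuinely required.
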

	
	As these constructions illustrate, the key point is the ability to control the behaviour of the divisors $G$ and $H+J$. Once their supports and coefficients are chosen to satisfy the numerical conditions imposed by Proposition~\ref{prop:shursimplified}, the dimensions of the intersections of the associated AG codes follow immediately. 
	
	\begin{corollary}\label{cor:abconstr}
		Let $a,b$ be positive integers satisfying $a+\ell < b-\ell < n$, for a non-negative integer $\ell$, and let $D,G,H,J$ be as defined in Theorem~\ref{th:abconstr}. Let $C = C(D,J)$, let $W$ be a generator matrix for $C(D,H)$, and let $V$ be a parity-check matrix for $C(D,G)$. Then $\mathcal{C}= \phi_{V,W}(C)$ is an $\mathbb{F}_q$-$[(n-b+\ell+a) \times (n-b),\, a+\ell,\, d \geq n-b-\ell]$ rank-metric code. 
		In particular, 
		$n \geq \trk(\mathcal{C}) \geq n-(b-a+1)$.
	\end{corollary}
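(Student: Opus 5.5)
Throughout, $g=1$ and every divisor is supported on $\{P_0,P_\infty\}$, disjoint from $\supp(D)$. The plan is to read off the matrix sizes and dimension from Riemann--Roch, then bound the rank distance using Proposition~\ref{prop:defect} and the intersection count of Proposition~\ref{th:abconstr}.

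\emph{Sizes and dimension.} The degrees are $\deg G=b-\ell-a$, $\deg H=n-b$ and $\deg J=a+\ell$, all positive and at most $n-1$, so each functional code has dimension equal to the degree of its divisor. Since $V$ is a parity-check matrix of $C(D,G)$, it has $n-\deg G=n-b+\ell+a$ rows. Since $W$ generates $C(D,H)$, it has $n-b$ rows. Also $C=C(D,J)$ has dimension $k_C=a+\ell$ and $d_C\ge n-a-\ell$. The matrices $V$ and $W$ have full rank, so $\phi_{V,W}$ is defined, and Proposition~\ref{prop:dimmatC} gives $\trk(\C)\le n$.

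\emph{Injectivity.} To get $k=a+\ell$ we use the criterion of Proposition~\ref{prop:dimmatC}: we need $C_W*c\not\subseteq C_V^\perp=C(D,G)$ for every $c\neq0$. For $c=\mathrm{ev}_D(f)$ with $f\in\L(J)$ we have $C_W*c=C(D,H-(f))$. This is a subspace of $C(D,H+J)$ of dimension $n-b$. By Proposition~\ref{th:abconstr}, $C(D,H+J)\cap C(D,G)$ has dimension $\ell$. The hypotheses give $\ell<n-b$, so $C_W*c$ cannot lie inside that intersection. Hence $\phi_{V,W}$ is injective on $C$ and $k=a+\ell$.

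\emph{Rank distance.} By Lemma~5.2 of \cite{byrne2019tensor}, as quoted in the proof of Proposition~\ref{prop:defect},
\[
\rk(\phi_{V,W}(c))=\dim(C_W*c)-\dim\bigl(C_V^\perp\cap(C_W*c)\bigr)\ge (n-b)-\ell .
\]
This gives $d\ge n-b-\ell$.

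\emph{Tensor rank bounds.} The upper bound $\trk(\C)\le n$ was obtained above. Kruskal's bound gives $\trk(\C)\ge k+d-1\ge (a+\ell)+(n-b-\ell)-1=n-(b-a+1)$.

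The main obstacle is the injectivity step, namely making sure $\dim(C_W*c)=n-b$. This holds because multiplication by $f$ is an isomorphism $\L(H)\to\L(H-(f))$ and $\deg(H-(f))=n-b<n$. The computation also needs the containment $C(D,H-(f))\subseteq C(D,H+J)$, which follows from $(f)\ge -J$.
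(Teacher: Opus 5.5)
Your proposal follows the same skeleton as the paper's proof, but one step does not follow from the hypotheses: the sentence ``the hypotheses give $\ell<n-b$''. The same applies to your claim that $\deg H=n-b$ is positive. From $a+\ell<b-\ell<n$ you only get $n-b>-\ell$.

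\begin{itemize}
\item For example, $a=1$, $\ell=3$, $b=8$, $n=10$ satisfies the hypotheses, yet $n-b=2<\ell$.
\item The choice $a=1$, $\ell=2$, $b=6$, $n=5$ even gives $\deg H<0$.
\end{itemize}

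No better argument can close this, because the dimension claim is then false. Suppose $1\le n-b<\ell$. The divisor $(a+\ell)P_0-(a+n-b)P_\infty$ has degree $\ell-(n-b)\ge 1$, so its Riemann--Roch space contains some $f\neq 0$. This $f$ lies in $\L(J)$, and $f\L(H)\subseteq \L((a+\ell)P_0-aP_\infty)\subseteq \L(G)$ because $a+\ell\le b-\ell$. Hence $\mathrm{ev}_D(f)*C_W\subseteq C_V^\perp$ and $\phi_{V,W}(\mathrm{ev}_D(f))=0$, so $\dim\C<a+\ell$.

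So you must add $b+\ell<n$ as an explicit hypothesis. Equivalently, $n-b-\ell\ge 1$, which is in any case what makes $d\ge n-b-\ell$ non-vacuous. The paper's proof does not address this either. With this hypothesis added, your argument goes through.

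Apart from this, your injectivity step is the correct one. You compare $\dim(C_W*c)=n-b$ with $\ell\ge\dim(C_V^\perp\cap C_W*c)$. This yields $C_W*c\not\subseteq C_V^\perp$, which is exactly what Proposition~\ref{prop:dimmatC} requires. The paper instead deduces from $\dim C_V^\perp=b-a-\ell>\ell$ only that $C_V^\perp\not\subseteq c*C_W$, which is the wrong inclusion.

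You also use only the containment $C_W*c\subseteq C(D,H+J)$, which follows from $(f)\ge -J$. The paper uses the equality $C*C_W=C(D,H+J)$, which would need the degree hypotheses of Corollary~\ref{cor:ShurProductCodes}. Your version is cleaner.

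There is one smaller imprecision, in your justification of $\dim(C_W*c)=n-b$ via ``$\deg(H-(f))=n-b<n$''. That reasoning presupposes $\supp(H-(f))\cap\supp(D)=\varnothing$, which fails when $f$ vanishes at points of $D$. In that case the kernel of evaluation on $f\L(H)$ is $f\L(H-E_f)$, with $E_f$ as in Theorem~\ref{thm:general_construction_AMTR}. It is still zero, since $\deg E_f\ge n-a-\ell>n-b$.

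Alternatively, argue as the paper does. The vector $c$ has at most $a+\ell<b\le \dH(C_W)$ zero coordinates, so the map $c'\mapsto c*c'$ is injective on $C_W$.
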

	
	\begin{proof}
		First note that $C_V$ is an $\Fq$-$[n,n-b+\ell+a,b-\ell-a]$ code, being the dual of $C(D,G)$. Moreover, $C_W = C(D,H)$ is an $\Fq$-$[n,n-b,b]$ code, and $C = C(D,J)$ is an $\Fq$-$[n,a+\ell,n-a-\ell]$ code. We have $C * C_W = C(D,H+J)$, and thus, by Theorem~\ref{th:abconstr}, $
		\dim(C_V^\perp \cap C * C_W) = \ell$. 
		Since $\dim(C_V^\perp) = \dim(C(D,G))=b-a-\ell > \ell \geq \dim(C_V^\perp \cap c*C_W)$ for any $c \in C$, we have that $C_V^\perp \not\subseteq c* C_W$ for any $c \in C$. Hence $\phi_{V,W}$ is an isomorphism and so $\dim(\C)=\dim(C)=a+\ell$.
		For any $c \in C$, we have
		\begin{multline*}
			\rk\bigl(V_cW^{\textup{t}}\bigr) 
			= \rk\bigl(W_c\bigr) - \dim\bigl(C_V^\perp \cap c * C_W\bigr)
			\geq \rk\bigl(W_c\bigr) - \dim\bigl(C_V^\perp \cap C * C_W\bigr)
			= \rk\bigl(W_c\bigr) - \ell.
		\end{multline*}

		Since $a+\ell < b-\ell$, we have 
		$\dH(C) = n-a-\ell \geq n-b+1 = n - \dH(C_W) + 1$, and hence $\rk(W) = n-b = \rk\bigl(W_c\bigr)$ for all  $c \in C$, from which we deduce that $\drk(\C)=n-b-\ell$. Then, by Kruskal's bound we have
		\[
		n \geq \trk(\mathcal{C}) \geq n-(b-a+1).
		\]
		
		If $\ell = 0$, then for any $c \in C$, we have $c * C_W \nsubseteq C_V^\perp$ unless $c * C_W = \{0\}$, which holds if and only if $c = 0$. It follows that $\ker \phi_{V,W} \cap C = \{0\}$, and hence $k = 0$. Therefore, $\mathcal{C}$ is an~$\Fq$-$[(n-b+a) \times (n-b), a, n-b]$ code.
	\end{proof}
	
	For the case $\ell = 0$, assuming that $\gcd(G,H+J)$ and 
	$\operatorname{lcm}(G,H+J)$ are non-special, Corollary~\ref{cor:abconstr} provides a construction of a rank-metric code $\mathcal{C}$ whose tensor rank defect is bounded from above, that is, $\trkdef(\mathcal{C}) \leq b-a+1$. In particular, for $a = b-1$ such codes have tensor rank defect at most $2$. 
	\begin{example}
		Let $n=12, a=3, b=7$, and $\ell=0$. Then $a < b < n$.
		Let $\mathcal{E}$ be the elliptic curve defined by the polynomial $y^2+x^3+x+1\in\mathbb{F}_{11}[x,y]$, and consider the divisors
		\[
		\begin{split}
			G=&-3(0 : 1 : 0)+ 7(0 : 1 : 1),\\
			H=&5(0 : 1 : 0),\\
			J=&3(0 : 1 : 1),\\
			D=&(0 : 10 : 1)+(2 : 0 : 1)+(4 : 6 : 1)+(4 : 5 : 1)+(8 : 9 : 1)+(8 : 2 : 1)+(3 : 8 : 1)+(3 : 3 : 1)+\\&(6 : 6 : 1)+(6 : 5 : 1)+(1 : 6 : 1)+(1 : 5 : 1),
		\end{split}
		\]
		which satisfy the conditions of Theorem \ref{th:abconstr}.
		It can be checked that both $\gcd(G,H+J)$ and $\mathrm{lcm}(G,H+J)$ are non-special divisors, so we obtain a triple of codes $C_V^\perp=C(D,G)$, $C_W=C(D,H)$ and $C=C(D,J)$, which have parameters $[12,4,8]$, $[12,5,7]$, and $[12,3,9]$, respectively. Furthermore, we have 
		\[
		C_V^\perp \cap (C* C_W) = C(D,G) \cap C(D,H+J)=\{0\}.
		\]
		By Corollary \ref{cor:abconstr}, 
		the code $\C=\phi_{V,W}(C)=\{V_c W^{\textup{t}}: c \in C\}$ is an 
		$\F_{11}$-$[8\times 5, 3, 4]$ rank-metric code whose tensor rank satisfies
		$12 \geq \trk(\C) \geq 4+3-1 = 6.$

	\end{example}

	While the previous results regarding the tensor rank defect provide a nice general framework, it is possible in certain cases to obtain even more precise outcomes. The following example illustrates a construction where we achieve an optimal tensor rank by leveraging the geometry of an elliptic curve.
	
	\begin{example}
		\label{ex:zerointer}
		Let $\mathcal{E}$ be the elliptic curve defined over $\mathbb{F}_7$ by the equation $6x^3 + y^2 + 4 = 0$, and consider the following divisors.
		\begin{align*}
			G=&4(0 : 1 : 0)-7(3 : 4 : 1) + 5(3 : 3 : 1),\\
			H=&5(0 : 1 : 0) + 3(3 : 4 : 1),\\
			J=&2(0 : 1 : 0),\\
			D=&(2 : 5 : 1) +(2 : 2 : 1) +(6 : 4 : 1) +(6 : 3 : 1) +
			(4 : 5 : 1) +(4 : 2 : 1) +(5 : 4 : 1) +(5 : 3 : 1) \\&+(1 : 5 : 1) +(1 : 2 : 1).
		\end{align*}
		It is immediate to verify that $(\supp(G)\cup \supp(H)\cup \supp(J)) \cap \supp(D) = \varnothing$. We have that $C_V^\perp = C(D,G)$ is an $\mathbb{F}_7$-$[10,2,8]$ code, $C = C(D,J)$ is an $\mathbb{F}_7$-$[10,2,8]$ code, and $C_W=C(D,H)$ is an $\mathbb{F}_7$-$[10,8,2]$ code, which have the following generator matrices, respectively,
		{\footnotesize
			\begin{align*}
				\left(\begin{array}{*{2}c}
					1&0\\
					0&1\\
					1&0\\
					6&3\\
					6&1\\
					3&3\\
					6&4\\
					1&4\\
					2&2\\
					1&3\\
				\end{array}\right)^{t}\quad, \quad
				\left(\begin{array}{*{2}c}
					1&0\\
					1&0\\
					0&1\\
					0&1\\
					4&4\\
					4&4\\
					2&6\\
					2&6\\
					3&5\\
					3&5\\
				\end{array}\right)^{t}
				\quad,\quad
				\left(\begin{array}{*{10}c}
					1&0&0&0&0&0&0&0&2&4\\
					0&1&0&0&0&0&0&0&5&0\\
					0&0&1&0&0&0&0&0&1&4\\
					0&0&0&1&0&0&0&0&5&4\\
					0&0&0&0&1&0&0&0&5&6\\
					0&0&0&0&0&1&0&0&5&1\\
					0&0&0&0&0&0&1&0&4&4\\
					0&0&0&0&0&0&0&1&2&6
				\end{array}\right),
			\end{align*}
		}
		while $C_V$ is the $\mathbb{F}_7$-$[10,8,2]$ code generated by
		\begin{equation*}
			\left(
			\begin{array}{*{10}c}
				1&0&0&0&0&0&0&0&1&4\\
				0&1&0&0&0&0&0&0&2&3\\
				0&0&1&0&0&0&0&0&1&4\\
				0&0&0&1&0&0&0&0&5&5\\
				0&0&0&0&1&0&0&0&1&6\\
				0&0&0&0&0&1&0&0&2&0\\
				0&0&0&0&0&0&1&0&0&1\\
				0&0&0&0&0&0&0&1&2&2
			\end{array}
			\right).
		\end{equation*}
		We have that $\ell\big(\gcd(G, H+J)\big) = 0$
		and $\ell\big(\mathrm{lcm}(G, H+J) - D\big) = 5$. On the other hand, it can be verified that
		\[
		C(D,G)\cap \mathrm{ev}_D(f)*C(D,H)=\{0\}
		\qquad \text{for all } f \in \mathscr{L}(J).
		\]
		Consider the matrix code $\C = \phi_{V,W}(C)=\{ V \,\mathrm{diag}(c)\, W^{\textup{t}} : c \in C \} \subseteq \Fq^{8 \times 8}$. 
		Since $c*C_W \cap C_V^\perp = \{0\}$ for all $c\in C$, we have that $\phi_{V,W}$ is an isomorphism and furthermore that $\rk(\phi(c)) = \rk(W _c)$ for each $c \in C$.

		It is easy to check that $\rk(W_c)=\rk(W)=8$ for each $c \in C$ and so we have that
		$\C$ is an $\mathbb{F}_7$-$[8 \times 8, 2, 8]$ rank-metric code. It generated by the following matrices.
		\begin{equation*}
			\begin{pmatrix}
				1&0&0&0&0&0&0&0\\
				0&1&0&0&0&0&0&0\\
				0&0&5&0&0&0&0&0\\
				0&0&0&5&0&0&0&0\\
				0&0&0&0&3&0&0&0\\
				0&0&0&0&0&3&0&0\\
				0&0&0&0&0&0&4&0\\
				0&0&0&0&0&0&0&4
			\end{pmatrix},\quad 
			\begin{pmatrix}
				0&1&2&0&3&6&4&1\\
				6&4&0&3&0&4&4&3\\
				5&1&0&0&3&6&4&1\\
				6&5&5&0&4&6&1&1\\
				1&1&5&3&4&5&0&2\\
				5&2&6&2&2&2&3&5\\
				5&0&5&5&4&3&4&4\\
				1&2&2&5&3&1&6&5
			\end{pmatrix}.   
		\end{equation*}
		By construction, we have $\trk(\C) \leq 10$, which is the length of the codes $C_V$, $C_W$, and $C$. 
		Since~$\C$ is an $\F_7$-$[8 \times 8, 2, 8]$ code, by Kruskal's bound we have $\trk(\C) \geq 2 + 8 - 1 = 9$. However, in order to attain this bound, $\C$ must be MTR, which would require the existence of an MDS code with parameters $\F_7$-$[9,2,8]$. By \cite{ball2012sets}, this is impossible. It follows that $\trk(\C)=10$ and $\Delta^\trk(\C)=1$, which is optimal, that is, $\trk(\C)=N_7(2,8)$.
	\end{example}
	
	Using the same approach as above, one can likewise construct a rank-metric code of dimension $4$ arising from the same framework.

	\begin{example}
		\label{ex:GF13_nonzero_lcm}
		Let $\mathcal{E}$ be the elliptic curve over $\mathbb{F}_{13}$ given in  affine coordinates by $\mathcal{E}:y^2=x^3+4$.
		Denote the $\mathbb{F}_{13}$-rational places corresponding to the points
		\[
		P_\infty=(0:1:0),\qquad P_1=(0:2:1),\qquad P_{2}=(0:11:1).
		\]
		Consider the following divisors on $\mathcal{E}$:
		\begin{align*}
			G&= -15P_\infty + 4P_1 + 14P_{2},\\
			H&= 6P_\infty + 13P_1 - 4P_{2},\\
			J&= 3P_\infty.
		\end{align*}
		Let $D$ be the sum of all $\mathbb{F}_{13}$-rational places of $\mathcal{E}$
		excluding $P_1,P_2$.
		Since $\#\mathcal{E}(\mathbb{F}_{13})=21$ and $\supp(G)\cup\supp(H)\cup\supp(J)=\{P_\infty,P_1,P_{2}\}$,
		we have $\deg(D)=18$ and
		
		$(\supp(G)\cup\supp(H)\cup\supp(J))\cap\supp(D)=\varnothing$.
		
		Let
		$C_V^\perp=C(D,G), C=C(D,J)$, and $C_W=C(D,H)$;
		then $C_V^\perp$ is an $\F_{13}$-$[18,3,15]$ code, $C$ is an $\F_{13}$-$[18,3,15]$, and $C_W$ is an $\F_{13}$-$[18,15,3]$ code.
		Let $\C=\phi_{V,W}(C)$, which is an $\F_{13}$-$[15 \times 15, 3,15]$ code.

		As before,
		Kruskal's bound yields that $\deg(D)=18\ge\trk(\mathcal{C})\ge 3+15-1=17$.
		If $\trk(\mathcal{C})=17$, then $\mathcal{C}$ is MTR, forcing the existence of an
		$\mathbb{F}_{13}$-$[17,3,15]$ MDS code, which is impossible since nontrivial MDS codes over
		$\mathbb{F}_{13}$ have length at most $14$.
		Hence $\trk(\mathcal{C})=18$ and $\Delta^{\trk}(\mathcal{C})=1$ is optimal.
	\end{example}
	
	The following theorem generalizes this construction to a broader family of rank-metric codes arising from the same setup.
	
	\begin{theorem}
		\label{thm:general_construction_AMTR}
		
		Let $\mathcal{E}$ be an elliptic curve defined over $\mathbb{F}_q$ with a set of rational points
		$\mathcal{P} \subseteq \mathcal{E}(\mathbb{F}_q)$.
		Let $D$ be a divisor defined as the sum of $n$ distinct points in $\mathcal{P}$.
		Let $G,H,J$ be divisors on $\mathcal{E}$ such that
		$
		(\supp(G)\cup\supp(H)\cup\supp(J))\cap\supp(D)=\varnothing.
		$
		Let $C_V^\perp = C(D,G)$, let $C = C(D,J)$, and let $C_W = C(D,H)$.
		Assume the parameters satisfy $\dH(C)=\dim(C_W)=\dim(C_V)=m$ and $\dim(C)=k$, with $k=n-m$.
		For every $f\in \mathscr{L}(J)\setminus\{0\}$, 
		let $H_f = H-(f)$ and define $D_f = \gcd\big(D,(f)\big)$ and $E_f = D-D_f$.
		Assume that for every $f\in \mathscr{L}(J)\setminus\{0\}$ the following hold:
		\begin{itemize}
			\item $\ell(\gcd(G,H_f))=0$,
			\item $\mathscr{L}(\mathrm{lcm}(G,H_f)-D)\cap(\mathscr{L}(G)+\mathscr{L}(H_f))=\{0\}$, and
			\item $\ell(H-E_f)=\ell \big(H-(D-\gcd(D,(f)))\big)=0.$
		\end{itemize}

		Suppose the MDS conjecture holds over $\mathbb{F}_q$ and $n>q+1$.
		Then the matrix code $\mathcal{C}=\phi_{V,W}(C)$ 
		is an $\Fq$-$[m \times m, n-m,m]$ code such that $\trk(\mathcal{C})=n$ and $\Delta^{\trk}(\mathcal{C})=1$. In particular, $\C$ is an optimal AMTR matrix code.
		
	\end{theorem}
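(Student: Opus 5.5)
The argument proceeds in three stages. First, $\phi_{V,W}$ is injective on $C$ and every nonzero codeword maps to a matrix of rank exactly $m$, so $\mathcal{C}$ is an $\Fq$-$[m\times m,k,m]$ code. Second, this gives $\trk(\mathcal{C})\le n$ and, via Kruskal, $\trk(\mathcal{C})\ge n-1$. Third, the value $n-1$ is excluded using Proposition~\ref{prop:siaga} together with the MDS conjecture.

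\textbf{Step 1: trivial intersections.} Fix $f\in\mathscr{L}(J)\setminus\{0\}$ and put $c=\mathrm{ev}_D(f)$. As in the proof of Theorem~\ref{thm:generalcurves}, $c*C(D,H)$ is the evaluation of $f\mathscr{L}(H)=\mathscr{L}(H_f)$. The functions $f\lambda$ need not be regular on $\supp(D)$ where $f$ vanishes, but they still evaluate pointwise.

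By the second hypothesis and the proposition following Theorem~\ref{thm:generalcurves}, we get $C(D,G)\cap\bigl(c*C(D,H)\bigr)=C(D,\gcd(G,H_f))$. This space is zero because $\ell(\gcd(G,H_f))=0$. Hence
\[
C_V^\perp\cap(c*C_W)=\{0\}\quad\text{for every nonzero } c\in C.
\]
It follows that $\phi_{V,W}$ is injective on $C$, so $\dim\mathcal{C}=k=n-m$. Moreover, by the rank identity used in Proposition~\ref{prop:defect}, $\rk(V(W_c)^{\textup t})=\dim(C_W*c)$.

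\textbf{Step 2: $\dim(C_W*c)=m$.} The map $\lambda\mapsto c*\mathrm{ev}_D(\lambda)$ on $C_W$ has kernel consisting of the functions $\lambda\in\mathscr{L}(H)$ that vanish on the support of $c$. That support is exactly $\supp(E_f)$, since $D_f$ collects the points of $D$ where $f$ vanishes. So the kernel is $\mathscr{L}(H-E_f)$, which is zero by the third hypothesis. Hence $\dim(C_W*c)=\dim C_W=m$, and every nonzero codeword of $\mathcal{C}$ has rank $m$. Since $V$ has $n-\dim C_V^\perp=m$ rows and $W$ has $m$ rows, $\mathcal{C}$ is an $\Fq$-$[m\times m,n-m,m]$ code.

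\textbf{Step 3: the tensor rank.} The length-$n$ representation gives $\trk(\mathcal{C})\le n$. Kruskal's bound gives $\trk(\mathcal{C})\ge k+m-1=n-1$.

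Suppose $\trk(\mathcal{C})=n-1$. Then $\mathcal{C}$ is MTR. By Proposition~\ref{prop:siaga}(3), applied to a minimal decomposition of length $R=n-1$, there is an MDS $\Fq$-$[n-1,n-m,m]$ code. We need to check that this code is nontrivial, i.e.\ $2\le k\le R-2$. This requires $n-m\ge 2$ and $m\ge 3$, and these follow from the hypotheses, since otherwise the $\gcd$ and $\operatorname{lcm}$ conditions degenerate; I would verify this explicitly.

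Since $n>q+1$, the length satisfies $n-1\ge q+1$, and we want it to exceed the bound in Conjecture~\ref{conj:MDSconj}. This is the main obstacle: the length $n-1$ only exceeds $q+1$ by at least one, so the borderline cases $n-1=q+1$, and $n-1=q+2$ with $q$ even and $k\in\{3,q-1\}$, must be ruled out separately. I would first try to use the fact that $n\le N_q(1)$, together with the requirement $k\le q$ in the conjecture, and treat $k>q$ through duality ($C^\perp$ is MDS of dimension $m-1$). If this is not enough, I would read the hypothesis ``$n>q+1$'' as intended to give $n-1>q+1$ outside the exceptional cases.

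Once $n-1$ is excluded, $\trk(\mathcal{C})=n$ and $\trkdef(\mathcal{C})=1$. Optimality follows because no code with these parameters can have defect $0$.
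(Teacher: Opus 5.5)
Your Steps 1--3 follow the paper's proof almost line by line. Trivial intersection comes from the proposition following Theorem~\ref{thm:generalcurves} together with $\ell(\gcd(G,H_f))=0$. Then $\dim(C_W*c)=m$, because the kernel of $w\mapsto c*w$ on $C_W$ is $\mathrm{ev}_D(\mathscr{L}(H-E_f))=\{0\}$. Finally $n-1\le\trk(\mathcal{C})\le n$, and MTR would give an MDS $[n-1,n-m,m]$ code.

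The obstacle you flag in Step~3 is a genuine gap, and the paper has it too: it simply asserts that $n>q+1$ plus the MDS conjecture excludes such a code. Your first suggested remedy cannot close it. If $n=q+2$, the doubly extended Reed--Solomon code of length $q+1$ and dimension $q+2-m$ is an MDS $[n-1,n-m,m]$ code for every $1\le m\le q+1$. So there is nothing to contradict, whatever you use about $n\le N_q(1)$ or duality. Similarly, for $q$ even, $n=q+3$ and $n-m\in\{3,q-1\}$, the hyperoval code $[q+2,3,q]$ and its dual exist. Hence the MTR$\Rightarrow$MDS route proves the statement only under the stronger hypothesis $n\ge q+3$, and $n\ge q+4$ in that even exceptional case. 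This is your fallback reading, and the paper's examples ($q=7$, $n=10$; $q=13$, $n=18$) satisfy it. At $n=q+2$, ruling out $\trk(\mathcal{C})=n-1$ would need a different idea.

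Second, your claim that $k=n-m\ge 2$ follows from the hypotheses is wrong, and $k=1$ gives a genuine counterexample. If $k=1$, then $\mathcal{C}$ is spanned by a single matrix of rank $m=n-1$, so $\trk(\mathcal{C})=n-1$. In that case Proposition~\ref{prop:siaga} only yields a repetition code, about which the MDS conjecture says nothing.

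The hypotheses do allow $k=1$. Take $q=7$ and an elliptic curve with $13$ rational points, with $P_\infty$ as origin and any $P_0\neq P_\infty$. Let $D$ be the sum of the other $11$ points, and set $G=P_0$, $J=2P_0-P_\infty$ and $H=10P_\infty$.
\begin{itemize}
\item We have $\mathscr{L}(J)=\langle f\rangle$ with $(f)=P'-2P_0+P_\infty$, where $P'=P_0\oplus P_0\in\supp(D)$.
\item So $C$ is $[11,1,10]$, $m=10$, and $\dim C_W=\dim C_V=10$.
\item $\gcd(G,H_f)=P_0-P'$ is non-principal of degree $0$, so $\ell(\gcd(G,H_f))=0$.
\item $\mathrm{lcm}(G,H_f)-D$ and $H-E_f$ have degree $0$ and would be principal only if $P_0$ had order $3$. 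Since the group has order $13$, both have $\ell=0$.
\end{itemize}
Thus all three bullets hold and $n=11>q+1$, yet $\trk(\mathcal{C})=10=n-1$. So $k\ge 2$ must be added as a hypothesis; it cannot be derived.
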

	
	\begin{proof}
		Note first that by construction $C_V^\perp$ and $C_W$ are $[n,m,n-m]$ codes, while $C$ is an $[n,n-m,m]$ code,
		and that they are NMDS codes, as they are elliptic AMDS codes.
		
		By the assumed conditions on the divisors, we have that
		\[
		C(D,G)\cap \mathrm{ev}_D(f)*C(D,H)=\mathscr{L}(\gcd(G,H_f))=\{0\}
		\]
		for all $f\in \mathscr{L}(J)\setminus\{0\}$.
		This implies that for every nonzero codeword $c\in C$ we have $C_{W_c}\cap C_V^\perp=\{0\}$,
		and hence $\rk(\phi_{V,W}(c))=\rk(W_c)$ for each $c\in C$.
		
		It remains to show that $\rk(W_c)=m$ for every $c\in C\setminus\{0\}$.
		Since $W$ generates $C_W$, we have $\rk(W) \geq \rk(W_c)=\dim(C_W*c)$.
		
		If $\rk(W) > \rk(W_c)$ for some nonzero $c \in C$, then there exists $h \in \L(H)$ and 
		$c'=\mathrm{ev}_D(h)\in C_W$ such that $c*c'=0$, which holds if and only if
		$f(P)h(P)=0 $ for all $P\in\supp(D).$
		Since $\supp(J)\cap\supp(D)=\varnothing$, the function $f$ has no poles on $\supp(D)$, and therefore
		$D_f=\gcd(D,(f))$
		is exactly the sum of those $P\in\supp(D)$ for which $f(P)=0$. In particular, $f(P)\neq 0$ for every $P\in\supp(E_f)$,
		where $E_f=D-D_f$.
		Hence, from $f(P)h(P)=0$ for all $P\in\supp(D)$ we obtain that
		\[
		h(P)=0 \qquad \text{for all }P\in\supp(E_f),
		\]
		so $h\in \mathscr{L}(H-E_f)$. By the additional assumption $\ell(H-E_f)=0$ we conclude that $h=0$,
		which contradicts $c'\neq 0$. Therefore there is no nonzero $c'\in C_W$ with $c*c'=0$, and consequently
		\[
		\dim(C_W*c)=m\qquad\text{and}\qquad \rk(W_c)=m
		\]
		for all $c\in C\setminus\{0\}$.
		Therefore, $\mathcal{C}$ is an $\mathbb{F}_q$-$[m\times m,n-m,m]$ rank-metric code.
		
		By construction, the tensor rank of $\mathcal{C}$ is bounded from above by the length of the underlying codes,
		so $\trk(\mathcal{C})\le \deg(D)=n$.
		Applying Kruskal's bound to $\mathcal{C}$, we obtain
		$\trk(\mathcal{C})\ge n-m+m-1=n-1.$
		If $\trk(\mathcal{C})=n-1$, then $\mathcal{C}$ is an MTR code.
		However, the existence of an $\mathbb{F}_q$-$[m\times m,n-m,m]$ MTR code implies the existence of an MDS code over $\mathbb{F}_q$
		with parameters $[n-1,n-m,m]$. Since we have assumed that the MDS conjecture holds and
		that $n>q+1$, no such MDS code exists, which implies that $\mathcal{C}$ is not MTR.
		It follows that $\trk(\mathcal{C})=n$, and the tensor rank defect is $\Delta^{\trk}(\mathcal{C})=1$.
	\end{proof}

	\begin{remark}
		The hypothesis \(\ell(H-E_f)=0\) for all \(f\in\mathscr{L}(J)\setminus\{0\}\) can be replaced by a simpler sufficient condition.
		Indeed, since \(\supp(J)\cap\supp(D)=\varnothing\), for every such \(f\) the effective divisor \(E_f=D-\gcd(D,(f))\) is supported on the set of points of \(D\) where \(f\) does \emph{not} vanish. Hence one may impose the easier-to-check requirement
		\[
		\ell \big(H-(D-\supp(J))\big)=0,
		\]
		which guarantees \(\ell(H-E_f)=0\) uniformly for all \(f\in\mathscr{L}(J)\setminus\{0\}\).
		
		Notice also that if this condition does not hold, the proof above remains valid up to the final conclusion on \(\rk(W_c)\): one may then have \(\rk(W_c)=m-1\) for some \(c\in C\setminus\{0\}\), but never a greater decrease from $\rk(W)$. Consequently the constructed matrix code \(\mathcal{C}=\phi_{V,W}(C)\) still has tensor rank at least \(n-2\), and therefore its tensor rank defect satisfies $\Delta^{\trk}(\mathcal{C})\le 2$.
	\end{remark}

	\begin{remark}
		As the number of rational points on $\mathcal{E}$ increases, the size of the ambient space $\mathbb{F}_q^n$ grows relative to the dimensions of the involved Riemann-Roch spaces. This makes the condition $C(D,G) \cap \mathrm{ev}_D(f) * C(D,H) = \{0\}$ significantly easier to satisfy, allowing for the construction of such codes for higher dimensions $k$.
	\end{remark}

	\section*{Acknowledgments}
	The authors would like to thank Beatriz Barbero Lucas for helpful discussions, and the First Dublin Discrete Mathematics Workshop, where some of the topics of this paper were discussed.
	
	The first author was supported by \textit{Blue Mathematics at AAU}, funded by Orient's Fond.

	\newpage
	
	\bibliographystyle{abbrv}
	\bibliography{refs}

\end{document}